\definecolor{darkred}{rgb}{0.8,0.1,0.1}
\theoremstyle{plain}
\newtheorem{theo}{Theorem}[section]
\newtheorem{lem}[theo]{Lemma}
\newtheorem{propo}[theo]{Proposition}
\newtheorem{cor}[theo]{Corollary}
\theoremstyle{definition}
\newtheorem{defi}[theo]{Definition}
\newtheorem{rem}[theo]{Remark}
\numberwithin{equation}{section}
\def\Vec{\mathsf{Vec}}
\def\PrBu{\mathsf{PrBuGlobHyp}}
\def\PhaseSpace{\mathfrak{PS}}
\def\PhaseSpaceOff{\mathfrak{PSOff}}
\def\CCR{\mathfrak{CCR}}
\def\astAlg{{^\ast}\mathsf{Alg}}
\def\BastAlg{B^\ast\mathsf{Alg}}
\def\CastAlg{C^\ast\mathsf{Alg}}
\def\QFT{\mathfrak{A}}
\def\PAG{\mathsf{PAG}}
\def\SubGr{\mathfrak{Q}}
\def\MWfunc{\mathfrak{MW}}
\def\SubM{\mathsf{Sub}_{\widehat{\Xi}}}
\def\Pull{\mathfrak{Pull}_{\widehat{\Xi}}}
\def\PS{\mathfrak{PS}_{\widehat{\Xi}}}
\def\Kerr{\mathfrak{Ker}_{\widehat{\Xi}}}
\def\nn{\nonumber}
\def\bbR{\mathbb{R}}
\def\bbC{\mathbb{C}}
\def\bbZ{\mathbb{Z}}
\def\EE{\mathcal{E}}
\def\NN{\mathcal{N}}
\def\Hom{\mathrm{Hom}}
\def\id{\mathrm{id}}
\def\dd{\mathrm{d}}
\def\vol{\mathrm{vol}}
\def\dim{\mathrm{dim}}
\def\1{\mathbbm{1}}
\def\oone{\mathbf{1}}
\def\g{\mathfrak{g}}
\def\ad{\mathrm{ad}}
\def\o{\mathfrak{o}}
\def\t{\mathfrak{t}}
\def\MW{\mathsf{MW}}
\newcommand{\ip}[2]{\left\langle #1,#2 \right\rangle}
\newcommand{\sect}[2]{\Gamma^\infty( #2 )}
\newcommand{\sectn}[2]{\Gamma_0^\infty( #2 )}
\title{%
A $C^\ast$-algebra for quantized principal $U(1)$-connections\\
on globally hyperbolic Lorentzian manifolds
}
\author{%
Marco Benini$^{1,2,a}$, Claudio Dappiaggi$^{1,b}$, Thomas-Paul Hack$^{3,c}$ 
and Alexander Schenkel$^{4,d}$\vspace{4mm}\\
{\small $^1$ Dipartimento di Fisica}\\ 
{\small Universit{\`a} di Pavia \& INFN, sezione di Pavia –- Via Bassi 6, 27100 Pavia, Italy.}\vspace{2mm}\\
{\small $^2$ II. Institut f\"ur Theoretische Physik}\\
{\small Universit\"at Hamburg,~Luruper Chausse 149,~22761~Hamburg,~Germany.}\vspace{2mm}\\
{\small $^3$ Dipartimento di Matematica}\\
{\small Universit{\`a} degli Studi di Genova –- Via Dodecaneso 35, 16146 Genova, Italy.}\vspace{2mm}\\
{\small $^4$ Fachgruppe Mathematik}\\
{\small Bergische~Universit\"at~Wuppertal,~Gau\ss stra\ss e~20,~42119~Wuppertal,~Germany.}\vspace{4mm}\\
 {\footnotesize  ~$^a$ marco.benini@pv.infn.it~,~$^b$ claudio.dappiaggi@unipv.it~,~$^c$ 
 hack@dima.unige.it~,~$^d$ schenkel@math.uni-wuppertal.de }
 }
\date{\today}
\begin{document}

\maketitle

\begin{abstract}
The aim of this work is to complete our program on the quantization of connections on arbitrary principal
$U(1)$-bundles over globally hyperbolic Lorentzian manifolds. In particular, we show that one can assign 
via a covariant functor to any such bundle an algebra of observables which separates gauge equivalence classes of connections. 
The $C^\ast$-algebra we construct generalizes the usual CCR-algebras since, contrary to
the standard field-theoretic models, it is based on a presymplectic Abelian group instead of a symplectic vector space. 
We prove a no-go theorem according to which neither this functor, nor any of its quotients, satisfies the strict axioms of 
general local covariance. As a byproduct, we prove that a morphism violates the locality axiom if and only if
a certain induced morphism of cohomology groups is non-injective.
We then show that fixing any principal $U(1)$-bundle, there exists a suitable category of sub-bundles 
for which a quotient of our functor yields a quantum field theory in the sense of Haag and Kastler.
We shall provide a physical interpretation of this feature and we obtain some new insights concerning electric charges
in locally covariant quantum field theory.
\end{abstract}
\paragraph*{Keywords:}
locally covariant quantum field theory, 
quantum field theory on curved spacetimes,
gauge theory on principal bundles
\paragraph*{MSC 2010:}81T20, 81T05, 81T13, 53Cxx


\section{\label{sec:intro}Introduction}
Although Maxwell's field is the simplest example of a Yang-Mills gauge theory,
it is known since \cite{Ashtekar:1980ki} that 
the construction and analysis of the associated algebra of observables and its representations
can be complicated due to a non-trivial topology of the spacetime manifold.
This peculiar feature is extremely relevant when one employs the algebraic framework in order to quantize 
such a theory on curved backgrounds. The first investigations along these lines are due to Dimock 
\cite{Dimock}, but a thorough analysis of topological effects started only 
recently, from both the perspective of the Faraday tensor \cite{DL} and, more generally, the
 quantization of linear gauge theories \cite{Pfenning:2009nx,Dappiaggi:2011cj,Fewster:2012bj,Hack:2012dm,SDH12, Finster:2013fva}. 
 The bottom line of some of these papers is the existence of a non-trivial center in the algebra of fields, provided 
 certain topological, or more precisely cohomological, properties of the underlying background hold true. 
 In \cite{SDH12}, it has been advocated that the elements of the center found in that paper 
 could be interpreted in physical terms as being related to observables measuring electric charges. However, this leads
  unavoidably to a violation of the locality property (injectivity of the induced morphisms between the 
  field algebras) of locally covariant quantum field theories, as formulated in \cite{Brunetti:2001dx}.

A complementary approach to the above ones has been introduced by some of us in \cite{Benini:2012vi,Benini:2013tra}
 starting from the observation that, in the spirit of a Yang-Mills gauge theory, electromagnetism 
 should be best described as a theory of connections on principal $U(1)$-bundles
  over globally hyperbolic Lorentzian manifolds. More properly, one starts from the characterization 
  of connections as sections of an affine bundle, dubbed the 
  {\em bundle of connections} \cite{Atiyah}. Subsequently the dynamics is implemented in terms of an 
  affine equation of motion, the Maxwell equation. The system can be quantized in the algebraic framework 
  following the prescription outlined in \cite{Benini:2012vi}. This procedure is advantageous 
  for three main reasons: First of all there is no need to fix any reference connection, 
  as it is done (implicitly) elsewhere \cite{Pfenning:2009nx,Dappiaggi:2011cj,SDH12,Finster:2013fva}. 
 As a useful consequence of this, we were able to construct in \cite{Benini:2013tra} purely topological observables,
 resembling topological quantum fields, which can measure the Chern class of the underlying principal $U(1)$-bundle.
  Secondly, interactions 
  between gauge and matter fields are modeled only in terms of connections, while an approach
   based on the Faraday tensor, as in \cite{DL}, cannot account for this aspect. Thirdly, contrary to 
   most of the previous approaches, the gauge group is completely determined geometrically by 
   the underlying principal bundle, since it is the collection of vertical automorphisms. 

By following this perspective, the algebra of fields for Abelian Yang-Mills theories has been constructed in \cite{Benini:2013tra}.
Yet, as explained in \cite[Remark 4.5]{Benini:2013tra}, the latter fails to separate gauge equivalence 
classes of connections. The source of this obstruction can be traced back to the existence of disconnected components
in the gauge group in the case of spacetimes with a non-trivial first de Rham cohomology group. 
From a physical point of view, this entails that those observables which are measuring the configurations tied
to the Aharonov-Bohm effect, as discussed in \cite{SDH12,Finster:2013fva}, are not contained in the algebra of observables. 

The main goal of this paper is to fill this gap by elaborating on the proposal in \cite{Benini:2013tra} 
to add Wilson-loop observables to the algebra of fields, as these new elements would solve the problem of separating 
 all  configurations. Following slavishly the original idea turned out to be rather cumbersome from a 
 technical point of view. Yet, we found that it is more convenient to consider exponentiated
  versions of the affine observables constructed
  in \cite{Benini:2013tra}. On the one hand, these observables resemble classical versions of Weyl operators, while, 
 on the other hand, the requirement of gauge invariance leads to a weaker 
 constraint -- the exponent does not need to remain invariant under a gauge transformation, 
 but it is allowed to change by any integer multiple of $2\pi i$. 

After performing this construction, we shall prove that, contrary to what was shown in
 \cite[Section 7]{Benini:2013tra} for the non-exponentiated algebra of fields, in the complete framework
 it is not possible to restore general local covariance in the strict sense by singling out a suitable ideal. 
This no-go theorem holds true only if we consider all possible isometric embeddings allowed by the axioms 
of general local covariance devised in \cite{Brunetti:2001dx}. If we restrict our
category of principal $U(1)$-bundles to a suitable
subcategory possessing a terminal object, a result similar to that of \cite[Section 7]{Benini:2013tra} can be shown to hold true.
  We will interpret this feature as a proof that we can construct a separating algebra of observables
  fulfilling the axioms of Haag and Kastler \cite{Haag:1963dh} generalized to an arbitrary but fixed globally hyperbolic spacetime.
  We shall further provide a physical interpretation for the impossibility to restore general local covariance in the 
 strict sense on our category of all principal $U(1)$-bundles.
 
We present an outline of the paper: In Section \ref{sec:prelim} we fix the notations and 
preliminaries which should allow a reader with some experience in differential 
geometry to follow the rest of the article. For more details, explanations and
 proofs we refer to \cite{Benini:2012vi,Benini:2013tra}, see also \cite{Baum,Kobayashi}
 for a general introduction to the differential geometry of gauge theories. 
 In Section \ref{sec:functionals} we provide 
 a detailed study of the exponential observables mentioned above. We characterize explicitly the
  gauge invariant exponential observables and prove that they separate gauge equivalence
   classes of connections. This solves the problem explained in \cite[Remark 4.5]{Benini:2013tra} 
   and captures the essence of what is called Aharonov-Bohm observables in  \cite{SDH12}.
As a rather unexpected result, we find that the set of gauge invariant exponential observables 
can be labeled by a presymplectic Abelian group, which is not a vector space due to the disconnected
components of the gauge group. We shall prove in Section \ref{sec:functor}
that these presymplectic Abelian groups naturally arise from a covariant functor from a category of principal $U(1)$-bundles
over globally hyperbolic spacetimes to a category of presymplectic Abelian groups.
 The properties of this functor are carefully investigated and it is found that, in agreement with earlier results
 \cite{SDH12,Benini:2013tra}, the locality property is violated (unless we restrict ourselves 
 to two-dimensional connected spacetimes). 
 As a novel and very important result, we shall give a precise characterization of which morphisms
 violate the locality property: Explicitly, we prove that a morphism violates the locality property
 if and only if a certain induced morphism between compactly supported 
 de Rham cohomology groups is not injective. This is a major improvement
 compared to earlier studies on the violation of the locality property.
After this we study whether our functor allows for a quotient by `electric charges' 
in order to overcome the failure of the locality property 
as it was done in \cite[Section 7]{Benini:2013tra}. We prove a no-go theorem: There exists no quotient
  such that the theory satisfies the locality property and we trace this feature back to Aharonov-Bohm observables,
  which were not present in \cite{Benini:2013tra}. In Section \ref{sec:quantization} we study the quantization
  of our presymplectic Abelian group functor in terms of the CCR-functor for presymplectic Abelian groups,
   which we develop in the Appendix \ref{sec:weyl} by applying and extending results of 
   \cite{Manuceau:1973yn}. The resulting quantum field theory functor 
  satisfies the quantum causality
  property and the quantum time-slice axiom, however not the locality property
   (again unless we restrict ourselves 
   to two-dimensional connected spacetimes). Our no-go theorem on the impossibility of
   curing the violation of the locality property by taking further quotients is extended to the quantum case.
   In Section \ref{sec:locality} 
  we consider suitable subcategories (possessing a terminal  object) of the category  of principal $U(1)$-bundles
   and prove that there exists a quotient which restores the locality property.
  The resulting theory is not a locally covariant quantum field theory in the strict definition of 
 \cite{Brunetti:2001dx}, but rather a theory in the sense of  Haag and Kastler \cite{Haag:1963dh}
where a global spacetime manifold (not necessarily the Minkowski spacetime) is fixed at the very beginning 
and one takes into account only causally compatible open sub-regions.
A physical interpretation of our results is given in Section \ref{sec:physical}.


\section{\label{sec:prelim}Preliminaries and notation}
Let us fix once and for all the Abelian Lie group $G=U(1)$. We denote its Lie algebra by $\g$ and notice that $\g = i\,\bbR$.
 The vector space dual of the Lie algebra $\g$ is denoted by $\g^\ast$ and we note that  $\g^\ast\simeq i\,\bbR$. 
 For later convenience we introduce the subgroup $\g_\bbZ:=2\pi\,i\,\bbZ\subset \g$ (of the Abelian group $(\g,+)$),
 which is a lattice in $\g$.

In \cite[Definition 2.4]{Benini:2013tra} we have defined a suitable category $G{-}\PrBu$ of principal $G$-bundles
over globally hyperbolic spacetimes, which provides a natural arena to study field theories of principal $G$-connections.
An object in $G{-}\PrBu$ is a tuple $\Xi= ((M,\o,g,\t),(P,r))$, 
where $(M,\o,g,\t)$ is an oriented and time-oriented globally hyperbolic Lorentzian manifold\footnote{
We further assume that $\dim(M)\geq 2$ and that $M$ is of finite type, which means that
$M$ has a finite good cover, i.e.\ an open cover by contractible subsets such that all (multiple) overlaps are also contractible.
} and $(P,r)$ is a principal $G$-bundle over $M$. A morphism $f:\Xi_1\to \Xi_2$ in $G{-}\PrBu$ 
is a principal $G$-bundle map $f:P_1\to P_2$,
such that the induced map $\underline{f}:M_1\to M_2$ is an orientation and time-orientation preserving isometric
 embedding with $\underline{f}[M_1]\subseteq M_2$  causally compatible and open. Remember that a subset
 $S\subseteq M$ is called causally compatible if $J_S^\pm(\{x\}) = J_M^\pm(\{x\})\cap S$, for all $x\in S$,
 where $J^\pm_S$ and $J^\pm_M$ denotes the causal future/past in $S$ and $M$, respectively 
 (notice that $S$ is an oriented and time-oriented Lorentzian manifold by pulling back these data from $M$).
We shall later also require the following full subcategories of $G{-}\PrBu$: 
We denote by $G{-}\PrBu^{(m)}$, with $m\geq 2$, the full subcategory of $G{-}\PrBu$,
such that for each object $\Xi$ the underlying spacetime has dimension $\dim(M)=m$.
Furthermore, we denote by $G{-}\PrBu^{(m)}_0$ the full subcategory of $G{-}\PrBu$, 
such that for each object $\Xi$ the underlying spacetime $M$ is connected and has dimension $\dim(M)=m$.

 To any object $\Xi$ in $G{-}\PrBu$ we can associate (via a covariant functor)
 its bundle of connections $\mathcal{C}(\Xi)$, that is 
an affine bundle over $M$ modeled on the homomorphism
 bundle $\Hom(TM,\ad(\Xi))$. Notice that the adjoint bundle is trivial, i.e.~$\ad(\Xi) = M\times \g$, since $G$ is Abelian.
 The set of sections $\sect{M}{\mathcal{C}(\Xi)}$ of the bundle $\mathcal{C}(\Xi)$ is an (infinite-dimensional)
 affine space over the vector space of $\g$-valued one-forms $\Omega^1(M,\g)$. We denote the free and transitive action
 of $\Omega^1(M,\g)$ on $\sect{M}{\mathcal{C}(\Xi)}$ with the usual abuse of notation by  $\lambda + \eta$, for all
 $\eta\in \Omega^1(M,\g)$ and $\lambda \in \sect{M}{\mathcal{C}(\Xi)}$.
 Let us denote by $\sectn{M}{\mathcal{C}(\Xi)^\dagger}$ the vector space of compactly supported sections
 of the vector dual bundle $\mathcal{C}(\Xi)^\dagger$, which is the vector bundle of affine homomorphisms
 from $\mathcal{C}(\Xi)$ to $M\times\bbR$ (i.e.\ the fibre at $x\in M$ of $\mathcal{C}(\Xi)^\dagger$ is the 
 vector space of affine maps $\mathcal{C}(\Xi)\vert_x \to \bbR$). 
 As a consequence of the fibre-wise duality pairing between $\mathcal{C}(\Xi)^\dagger$ and $\mathcal{C}(\Xi)$, 
 every compactly supported section $\varphi\in\sectn{M}{\mathcal{C}(\Xi)^\dagger}$ defines a functional
 on the configuration space $\sect{M}{\mathcal{C}(\Xi)}$ by
 \begin{flalign}\label{eqn:affobs}
 \mathcal{O}_\varphi : \sect{M}{\mathcal{C}(\Xi)} \to \bbR~,~~\lambda \mapsto \mathcal{O}_\varphi(\lambda) = \int_M \varphi(\lambda)\,\vol~.
 \end{flalign}
 For any $\lambda\in \sect{M}{\mathcal{C}(\Xi)}$ and $\eta\in \Omega^1(M,\g)$ the functional 
 $\mathcal{O}_\varphi$ satisfies the affine property $\mathcal{O}_\varphi(\lambda+\eta) 
 = \mathcal{O}_\varphi(\lambda) + \ip{\varphi_V}{\eta}$,
where 
\begin{flalign}
\ip{\varphi_V}{\eta} := \int_M \varphi_V \wedge \ast(\eta)~.
\end{flalign}
We have denoted the Hodge operator by $\ast$ and the linear part of $\varphi\in \sectn{M}{\mathcal{C}(\Xi)^\dagger}$ 
by $\varphi_V\in \Omega^1_0(M,\g^\ast)$.\footnote{
By linear part of a section $\varphi\in \sectn{M}{\mathcal{C}(\Xi)^\dagger}$ we always mean the section 
$\varphi_V\in \Omega^1_0(M,\g^\ast)$ that is canonically obtained by taking point-wise the linear part of the affine map
$\varphi(x): \mathcal{C}(\Xi)\vert_x \to \bbR$, which is a linear map $\varphi(x)_V: T^\ast M\vert_x \times \g \to \bbR  $
that can be identified (by using the metric $g$) with an element in $T^\ast M\vert_x \times \g^\ast$.
}
The duality pairing between $\g^\ast$ and $\g$ is suppressed here and in the following.
Let us define the vector subspace
\begin{flalign}
\mathrm{Triv}:= \Big\{a\,\1: a\in C^\infty_0(M) \text{~satisfies~}\int_M a\,\vol =0\Big\} \subseteq \sectn{M}{\mathcal{C}(\Xi)^\dagger}~,
\end{flalign}
where $\1\in \sect{M}{\mathcal{C}(\Xi)^\dagger}$ denotes the canonical section which associates to any $x\in M$ the
constant affine map $\mathcal{C}(\Xi) \vert_x \ni \lambda \mapsto 1$.
Notice that any $\varphi \in \mathrm{Triv}$ defines the trivial functional $\mathcal{O}_\varphi \equiv 0$
and, vice versa, that for any trivial functional $\mathcal{O}_\varphi\equiv 0$ we have $\varphi\in \mathrm{Triv}$.
Hence, the quotient $\sectn{M}{\mathcal{C}(\Xi)^\dagger}/\mathrm{Triv}$ labels
distinct affine functionals (\ref{eqn:affobs}).
Elements in this quotient are equivalence classes that we denote by $\varphi$ (suppressing square brackets)
in order to simplify notation.

The gauge group $\mathrm{Gau}(P)$, i.e.~the group of vertical principal $G$-bundle automorphisms,
is isomorphic to the group $C^\infty(M,G)$, 
which acts on $\sect{M}{\mathcal{C}(\Xi)}$ via
\begin{flalign}\label{eqn:gaugetrafo}
\sect{M}{\mathcal{C}(\Xi)} \times C^\infty(M,G) \to \sect{M}{\mathcal{C}(\Xi)} ~,~~(\lambda,\widehat{f}) \mapsto 
\lambda + \widehat{f}^{\ast}(\mu_G)~,
\end{flalign}
where $\mu_G \in \Omega^1(G,\g)$ is the Maurer-Cartan form and $\widehat{f}^\ast : \Omega^1(G,\g)\to \Omega^1(M,\g)$
 denotes the pull-back.
We call the transformations in (\ref{eqn:gaugetrafo}) {\it gauge transformations}.
Let us define the subgroup of the Abelian group $(\Omega^1(M,\g),+)$ which is generated by gauge transformations,
\begin{flalign}\label{eqn:gaugespace00}
B_G := \big\{\widehat{f}^\ast(\mu_G) : \widehat{f}\in C^\infty(M,G)\big\}~.
\end{flalign}
Notice that since the Maurer-Cartan form is closed, i.e.\ $\dd \mu_G=0$, we have
$B_G\subseteq \Omega^1_\dd(M,\g)$, where $\Omega^1_\dd(M,\g)$ denotes the vector space of $\g$-valued
closed one-forms. Furthermore, since any $\chi\in C^\infty(M,\g)$ can be exponentiated to an element in the gauge group
$\exp\circ\chi \in C^\infty(M,G)$, the gauge transformations (\ref{eqn:gaugetrafo}) in particular include
all transformations of the form $\lambda \mapsto \lambda + \dd\chi$, with $\chi\in C^\infty(M,\g)$.
Hence, $\dd C^\infty(M,\g) \subseteq B_G\subseteq \Omega^1_\dd(M,\g)$.
In order to give a precise characterization of the Abelian group $B_G$  we are going to use \v{C}ech cohomology
(see also \cite[Proposition 4.2]{Benini:2013tra} for a more abstract argument leading to the same results):
Let  $\mathcal{U} := \{U_\alpha\}_{\alpha\in \mathcal{I}}$ 
be any good open cover of $M$ and let us denote by $\check{H}^1(\mathcal{U},\g_\bbZ)$ 
the first \v{C}ech cohomology group of $\mathcal{U}$ with values in the constant presheaf $\g_\bbZ = 2\pi i\,\bbZ$. 
Notice that the Abelian group $\check{H}^1(\mathcal{U},\g_\bbZ)$ is a free $\bbZ$-module,
which is finitely generated because $M$ is assumed to be of finite type. 
Due to the canonical embedding $\g_\bbZ \hookrightarrow \g$ 
there exists a monomorphism of Abelian groups $\check{H}^1(\mathcal{U},\g_\bbZ) \to \check{H}^1(\mathcal{U},\g)$ 
into the first \v{C}ech cohomology group of $\mathcal{U}$ with values in the constant presheaf $\g = i\,\bbR$.
The latter is isomorphic to the first de Rham cohomology group $H^1_{\mathrm{dR}}(M,\g)$
via the \v{C}ech-de Rham isomorphism, which is given by
the following construction: For $[\eta]\in H^1_{\mathrm{dR}}(M,\g)$ choose any representative $\eta \in \Omega^1_\dd(M,\g)$.
Restricting $\eta$ to the $U_{\alpha}$, there exist $\chi_\alpha\in C^\infty(U_\alpha,\g)$, such that
$\eta\vert_{U_\alpha} = \dd \chi_\alpha$. On the overlaps $U_\alpha\cap U_\beta$
the difference $\eta_{\alpha\beta} =\chi_\alpha -\chi_\beta \in \g$ is constant, hence it defines
a \v{C}ech $1$-cocycle $\{\eta_{\alpha\beta}\}$ and therewith an element
 $[\{\eta_{\alpha\beta}\}] \in \check{H}^1(\mathcal{U},\g)$. The inverse of the \v{C}ech-de Rham isomorphism 
 is given by using a partition of unity $\{\psi_\alpha\}$ subordinated to $\mathcal{U}$ 
and sending $[\{\eta_{\alpha\beta}\}] \in \check{H}^1(\mathcal{U},\g)$ to the de Rham class 
$[\eta]\in H^1_{\mathrm{dR}}(M,\g)$, where the differential form 
$\eta\in \Omega^1_\dd(M,\g)$ is defined by setting 
$\eta\vert_{U_\alpha} = \dd \big(\sum_{\beta\in\mathcal{I}} \eta_{\alpha\beta}\psi_\beta\big)$.
We denote the image of the subgroup $\check{H}^1(\mathcal{U},\g_\bbZ)\subset \check{H}^1(\mathcal{U},\g)$
under the \v{C}ech-de Rham isomorphism by $H_\mathrm{dR}^1(M,\g_\bbZ)$
and notice that it is a lattice in $H^1_{\mathrm{dR}}(M,\g)$,
i.e.~any $\bbZ$-module basis of $H^1_{\mathrm{dR}}(M,\g_\bbZ)$ provides a vector space basis of $H^1_{\mathrm{dR}}(M,\g)$.
Using the \v{C}ech-de Rham isomorphism we observe that, 
for all $\widehat{f}\in C^\infty(M,G)$, $[\widehat{f}^\ast(\mu_G)]\in H^1_{\mathrm{dR}}(M,\g_\bbZ)$:
Indeed, restricting $\widehat{f}$ to the $U_\alpha$, there exist $\chi_\alpha\in C^\infty(U_\alpha,\g)$ such that
$\widehat{f}\vert_{U_\alpha} = \exp \circ \chi_\alpha$ and hence $\widehat{f}^\ast(\mu_G)\vert_{U_\alpha} =\dd\chi_\alpha$
with $\eta_{\alpha\beta} = \chi_\alpha - \chi_\beta \in \g_\bbZ =2\pi i\, \bbZ$ on $U_\alpha\cap U_\beta$.
On the other hand, any element in $H^1_{\mathrm{dR}}(M,\g_\bbZ)$ has a representative of the form $\widehat{f}^\ast(\mu_G)$ with
$\widehat{f}\in C^\infty(M,G)$: Indeed, for any integral \v{C}ech $1$-cocycle $\{\eta_{\alpha\beta}\in \g_\bbZ\}$
we can construct $\widehat{f}\in C^\infty(M,G)$ by setting $\widehat{f}\vert_{U_\alpha} = \exp \circ \chi_\alpha$
with $\chi_\alpha = \sum_{\beta\in\mathcal{I}} \eta_{\alpha\beta}\psi_\beta$ for some choice of partition of
unity $\{\psi_\alpha\}$ subordinated to $\mathcal{U}$. Via the \v{C}ech-de Rham isomorphism,
the de Rham class $[\widehat{f}^\ast(\mu_G)]$ is identified with the \v{C}ech class $[\{\eta_{\alpha\beta}\}]$.
Hence, we have obtained the explicit characterization  
\begin{flalign}\label{eqn:gaugespace}
B_G = \big\{ \eta\in \Omega^1_\dd(M,\g) : [\eta] \in H_\mathrm{dR}^1(M,\g_\bbZ)\big\}~.
\end{flalign}

 The gauge invariant affine functionals (\ref{eqn:affobs}) have been characterized in \cite[Theorem 4.4]{Benini:2013tra}.
 It is found that these functionals are labeled by those $\varphi \in \sectn{M}{\mathcal{C}(\Xi)^\dagger}/\mathrm{Triv}$ which satisfy
 $\varphi_V\in\delta \Omega^2_0(M,\g^\ast)$, where $\delta$ is the codifferential. 
 As a consequence of \cite[Remark 4.5]{Benini:2013tra}, these functionals in general do not separate gauge equivalence
 classes of connections.
 The goal of the present article is to resolve this issue by studying a set of  observables
 different from (\ref{eqn:affobs}).


\section{\label{sec:functionals}Gauge invariant exponential functionals}
Instead of  (\ref{eqn:affobs}), let us consider the exponential functionals,
for all $\varphi \in \sectn{M}{\mathcal{C}(\Xi)^\dagger}$,
\begin{flalign}\label{eqn:expobs}
\mathcal{W}_\varphi : \sect{M}{\mathcal{C}(\Xi)} \to \bbC~,~~\lambda \mapsto \mathcal{W}_\varphi(\lambda) = 
e^{2\pi i\,\mathcal{O}_\varphi(\lambda)}~.
\end{flalign}
The affine property of $\mathcal{O}_\varphi$ implies that, 
for all $\lambda\in\sect{M}{\mathcal{C}(\Xi)}$ and $\eta\in\Omega^1(M,\g)$,
\begin{flalign}\label{eqn:expobsaff}
\mathcal{W}_\varphi(\lambda + \eta ) = \mathcal{W}_\varphi(\lambda) \, e^{2\pi i\,\ip{\varphi_V}{\eta}}~.
\end{flalign}
We notice that the functional $\mathcal{W}_\varphi$ is trivial, i.e.~$\mathcal{W}_\varphi \equiv 1$, if and only if
$\varphi$ is an element in the subgroup
\begin{flalign}
\mathrm{Triv}_\bbZ := \Big\{a\,\1 : a\in C^\infty_0(M) \text{~satisfies~}\int_M a\,\vol \in \bbZ \Big\}
 \subseteq \sectn{M}{\mathcal{C}(\Xi)^\dagger}~.
\end{flalign}
Hence, we consider the quotient
\begin{flalign}
\EE^\mathrm{kin}:= \sectn{M}{\mathcal{C}(\Xi)^\dagger}/\mathrm{Triv}_\bbZ~
\end{flalign}
in order to label distinct exponential functionals.
Elements in this quotient are equivalence classes that we simply denote by $\varphi$ (suppressing square brackets).

We say that a functional $\mathcal{W}_\varphi$, $\varphi\in \EE^\mathrm{kin}$, is {\it gauge invariant}, if 
$\mathcal{W}_\varphi(\lambda+\eta) = \mathcal{W}_\varphi(\lambda)$, for 
all $\lambda \in \sect{M}{\mathcal{C}(\Xi)}$ and $\eta\in B_G$.
Due to (\ref{eqn:expobsaff}) this is equivalent to $\ip{\varphi_V}{B_G}\subseteq \bbZ$.
A necessary condition for $\mathcal{W}_\varphi$ to be gauge invariant is that $\delta\varphi_V =0$, 
i.e.~$\varphi_V\in\Omega^1_{0,\delta}(M,\g^\ast)$, where by the subscript $_{\delta}$ we denote co-closed forms. 
This can be seen by demanding invariance of $\mathcal{W}_\varphi$ under the gauge 
transformations  $\lambda \mapsto \lambda +\dd \chi$, $\chi\in C^\infty(M,\g)$, 
which are obtained by choosing $\widehat{f} = \exp\circ \chi \in C^\infty(M,G)$ in (\ref{eqn:gaugetrafo}).
We can associate to such $\varphi_V$ an element $[\varphi_V]$
in the dual de Rham cohomology group $H^1_{0\,\mathrm{dR}^\ast}(M,\g^\ast) 
:= \Omega^1_{0,\delta}(M,\g^\ast)/ \delta \Omega^2_0(M,\g^\ast)$.
Since any $\eta \in  B_G$ is closed, the pairing in (\ref{eqn:expobsaff}) depends only on the cohomology classes,
i.e.~$\ip{\varphi_V}{\eta} = \ip{[\varphi_V]}{[\eta]}$. Notice that the pairing
$\ip{~}{~}: H^1_{0\,\mathrm{dR}^\ast}(M,\g^\ast) \times H^1_{\mathrm{dR}}(M,\g) \to \bbR$ is non-degenerate
due to Poincar{\'e} duality, i.e.~$H^1_{0\,\mathrm{dR}^\ast}(M,\g^\ast) \simeq H^1_{\mathrm{dR}}(M,\g)^\ast :=
\Hom_\bbR(H^1_{\mathrm{dR}}(M,\g),\bbR)$.

Since the gauge transformations are characterized by an integral cohomology condition (\ref{eqn:gaugespace}),
also the gauge invariant exponential functionals will be characterized by some integral cohomology condition.
Before we can determine the exact form of this condition, we need some notations:
Let us denote the dual $\bbZ$-module of $H^1_\mathrm{dR}(M,\g_\bbZ)$ by $H^1_\mathrm{dR}(M,\g_\bbZ)^\ast := 
\Hom_\bbZ(H^1_\mathrm{dR}(M,\g_\bbZ),\bbZ)$. Since $H^1_\mathrm{dR}(M,\g_\bbZ)$ is a lattice
in $H^1_\mathrm{dR}(M,\g)$ any element in $H^1_\mathrm{dR}(M,\g_\bbZ)^\ast$
defines a unique element in $H^1_\mathrm{dR}(M,\g)^\ast$ by $\bbR$-linear extension.
Thus, there is a monomorphism of Abelian groups $H^1_\mathrm{dR}(M,\g_\bbZ)^\ast \to H^1_\mathrm{dR}(M,\g)^\ast$
which we shall suppress in the following. Composing this map with the isomorphism $H^1_{0\,\mathrm{dR}^\ast}(M,\g^\ast) 
\simeq H^1_{\mathrm{dR}}(M,\g)^\ast$ given by the pairing $\ip{~}{~}$ we can regard
$H^1_\mathrm{dR}(M,\g_\bbZ)^\ast $ as a subgroup of $H^1_{0\,\mathrm{dR}^\ast}(M,\g^\ast)$, which
we shall denote by $H^1_{0\,\mathrm{dR}^\ast}(M,\g^\ast)_\bbZ \subseteq H^1_{0\,\mathrm{dR}^\ast}(M,\g^\ast)$.
With these preparations we can now provide an explicit characterization of the gauge invariant exponential functionals.
\begin{propo}\label{propo:gaugeinv}
Let $\varphi \in \EE^\mathrm{kin}$ be such that $\delta\varphi_V=0$, i.e.~$\varphi$ satisfies the necessary condition
for $\mathcal{W}_\varphi$ being gauge invariant. Then $\mathcal{W}_\varphi$ is a gauge invariant
functional if and only if $[\varphi_V] \in H^1_{0\,\mathrm{dR}^\ast}(M,\g^\ast)_\bbZ$.
\end{propo}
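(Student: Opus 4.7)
The plan is to unwind the gauge-invariance condition into a statement about the pairing of $[\varphi_V]$ with the lattice $H^1_{\mathrm{dR}}(M,\g_\bbZ)$, and then recognise this as exactly membership in the subgroup $H^1_{0\,\mathrm{dR}^\ast}(M,\g^\ast)_\bbZ$ as it was defined above the proposition.

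First I would rewrite the gauge-invariance condition. By (\ref{eqn:expobsaff}), $\mathcal{W}_\varphi(\lambda+\eta)=\mathcal{W}_\varphi(\lambda)$ for all $\lambda$ and all $\eta\in B_G$ is equivalent to $\ip{\varphi_V}{\eta}\in\bbZ$ for every $\eta\in B_G$. Under the standing assumption $\delta\varphi_V=0$, the paragraph preceding the proposition already established that the pairing descends to cohomology, $\ip{\varphi_V}{\eta}=\ip{[\varphi_V]}{[\eta]}$ with $[\varphi_V]\in H^1_{0\,\mathrm{dR}^\ast}(M,\g^\ast)$ and $[\eta]\in H^1_{\mathrm{dR}}(M,\g)$. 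Combining this with the explicit characterization (\ref{eqn:gaugespace}) of $B_G$ and the fact that every class in $H^1_{\mathrm{dR}}(M,\g_\bbZ)$ has a representative of the form $\widehat{f}^\ast(\mu_G)\in B_G$, I can replace the original condition by the equivalent statement
\begin{flalign}\label{eqn:planpair}
\ip{[\varphi_V]}{c}\in\bbZ \quad\text{for all } c\in H^1_{\mathrm{dR}}(M,\g_\bbZ)~.
\end{flalign}

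Next I would identify (\ref{eqn:planpair}) with the assertion $[\varphi_V]\in H^1_{0\,\mathrm{dR}^\ast}(M,\g^\ast)_\bbZ$. Recall that this subgroup was defined as the image of $H^1_{\mathrm{dR}}(M,\g_\bbZ)^\ast \to H^1_{\mathrm{dR}}(M,\g)^\ast \simeq H^1_{0\,\mathrm{dR}^\ast}(M,\g^\ast)$, where the first map is $\bbR$-linear extension. For the ``if'' direction, an element in the image by construction takes integer values on the lattice, so (\ref{eqn:planpair}) holds. For the ``only if'' direction, I would use that $H^1_{\mathrm{dR}}(M,\g_\bbZ)$ is a lattice in $H^1_{\mathrm{dR}}(M,\g)$, i.e.\ any $\bbZ$-basis is simultaneously an $\bbR$-basis; hence the restriction of the $\bbR$-linear functional $\ip{[\varphi_V]}{\,\cdot\,}$ to the lattice is a $\bbZ$-linear map to $\bbZ$ precisely when (\ref{eqn:planpair}) holds, and this restriction uniquely determines its $\bbR$-linear extension, which is $\ip{[\varphi_V]}{\,\cdot\,}$ itself. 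Thus $[\varphi_V]$ lies in the image, i.e.\ in $H^1_{0\,\mathrm{dR}^\ast}(M,\g^\ast)_\bbZ$.

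The only potentially delicate point is the cohomological well-definedness that was invoked in passing: one must check that $\ip{\varphi_V}{\eta}$ is insensitive to shifting $\varphi_V$ by $\delta\alpha$ with $\alpha\in\Omega^2_0(M,\g^\ast)$ and to shifting $\eta$ by $\dd\chi$ with $\chi\in C^\infty(M,\g)$. Both reductions follow from Stokes' theorem together with the adjointness $\ip{\delta\alpha}{\eta}=\ip{\alpha}{\dd\eta}$ applied with $\dd\eta=0$ (for the first) and with $\delta\varphi_V=0$ (for the second); compact support of $\varphi_V$ and $\alpha$ makes all boundary terms vanish. I do not expect any real obstacle beyond this routine book-keeping: once the pairing is recognised to be well-defined on cohomology and the lattice argument is made explicit, the equivalence reduces to a formal matching of definitions.
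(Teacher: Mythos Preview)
Your proof is correct and follows essentially the same route as the paper's: reduce gauge invariance to $\ip{[\varphi_V]}{H^1_{\mathrm{dR}}(M,\g_\bbZ)}\subseteq\bbZ$ via (\ref{eqn:gaugespace}) and the cohomological descent of the pairing, then identify this with membership in $H^1_{0\,\mathrm{dR}^\ast}(M,\g^\ast)_\bbZ$. The paper compresses this into three sentences, leaving the lattice argument for the final ``if and only if'' implicit in the definition of $H^1_{0\,\mathrm{dR}^\ast}(M,\g^\ast)_\bbZ$; you have simply spelled out that step (and the well-definedness of the pairing on cohomology) more carefully.
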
 
\begin{proof}
The functional (\ref{eqn:expobs}) is gauge invariant if and only if $\ip{\varphi_V}{B_G}=\ip{[\varphi_V]}{[B_G]}
\subseteq \bbZ$.
By (\ref{eqn:gaugespace}) this is equivalent to
the condition 
$\ip{[\varphi_V]}{H^1_\mathrm{dR}(M,\g_\bbZ)}\subseteq \bbZ$, which is satisfied if and only if
 $[\varphi_V]\in H^1_{0\,\mathrm{dR}^\ast}(M,\g^\ast)_\bbZ $.
\end{proof}
Let us define the subgroup
\begin{flalign}\label{eqn:Einv}
\EE^\mathrm{inv}:= \big\{\varphi \in \EE^\mathrm{kin} : \delta\varphi_V=0\text{~and~} [\varphi_V]\in H^1_{0\,\mathrm{dR}^\ast}(M,\g^\ast)_\bbZ\big\}\subseteq \EE^\mathrm{kin}~,
\end{flalign}
which labels the gauge invariant functionals $\mathcal{W}_\varphi$.
\begin{theo}\label{thmSep}
The set $\{\mathcal{W}_\varphi: \varphi \in \EE^\mathrm{inv}\}$ of gauge invariant exponential functionals is
separating on gauge equivalence classes of configurations. This means that, for any two 
$\lambda,\lambda^\prime \in \sect{M}{\mathcal{C}(\Xi)}$ which are not gauge equivalent via (\ref{eqn:gaugetrafo}), there
exists $\varphi\in \EE^\mathrm{inv}$, such that $\mathcal{W}_\varphi(\lambda^\prime)\neq \mathcal{W}_\varphi(\lambda)$.
\end{theo}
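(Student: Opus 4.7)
The plan is to reduce the separation statement to finding, for a prescribed non-gauge-equivalent displacement $\eta \in \Omega^1(M,\g)$, an invariant exponential label whose pairing with $\eta$ is non-integer. Concretely, since $\sect{M}{\mathcal{C}(\Xi)}$ is an affine space over $\Omega^1(M,\g)$, I write $\lambda' = \lambda + \eta$ for a unique $\eta$, and the hypothesis $\lambda \not\sim \lambda'$ together with (\ref{eqn:gaugespace}) amounts to $\eta \notin B_G$. By the affine transformation property (\ref{eqn:expobsaff}), the conclusion $\mathcal{W}_\varphi(\lambda') \neq \mathcal{W}_\varphi(\lambda)$ is equivalent to $\ip{\varphi_V}{\eta} \notin \bbZ$. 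So I need to produce $\varphi \in \EE^\mathrm{inv}$ with this property, and the natural thing is to split according to the two possible reasons for $\eta \notin B_G$: (A) $\eta$ is not closed, or (B) $\eta$ is closed but $[\eta] \notin H^1_{\mathrm{dR}}(M,\g_\bbZ)$.

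In Case (A), I exploit the trivial-cohomology part of $\EE^\mathrm{inv}$. Since $\dd \eta \neq 0$, I choose a point $x_0 \in M$ where $\dd\eta(x_0) \neq 0$ and a compactly supported form $\omega \in \Omega^2_0(M,\g^\ast)$ localized near $x_0$, scaled so that $\ip{\omega}{\dd\eta} = 1/2$ (any non-integer value works). Setting $\varphi_V := \delta \omega \in \Omega^1_{0,\delta}(M,\g^\ast)$, clearly $\delta \varphi_V = 0$ and $[\varphi_V] = 0 \in H^1_{0\,\mathrm{dR}^\ast}(M,\g^\ast)_\bbZ$, so any lift $\varphi \in \EE^\mathrm{kin}$ with this linear part lies in $\EE^\mathrm{inv}$. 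Integration by parts, which is permissible because $\omega$ has compact support, gives $\ip{\varphi_V}{\eta} = \ip{\delta\omega}{\eta} = \ip{\omega}{\dd\eta} = 1/2 \notin \bbZ$.

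In Case (B), I exploit Poincaré duality. Pick a $\bbZ$-module basis $\{e_1,\dots,e_n\}$ of the lattice $H^1_\mathrm{dR}(M,\g_\bbZ)$, which is simultaneously an $\bbR$-basis of the vector space $H^1_\mathrm{dR}(M,\g)$ (by the final sentence before Proposition \ref{propo:gaugeinv}). Expanding $[\eta] = \sum_i c_i\, e_i$ with $c_i \in \bbR$, the assumption forces some $c_{j} \notin \bbZ$. The $\bbZ$-dual basis element $\phi_j \in \Hom_\bbZ(H^1_\mathrm{dR}(M,\g_\bbZ),\bbZ)$, extended $\bbR$-linearly, defines an element of $H^1_{0\,\mathrm{dR}^\ast}(M,\g^\ast)_\bbZ$ via the isomorphism $H^1_{0\,\mathrm{dR}^\ast}(M,\g^\ast) \simeq H^1_\mathrm{dR}(M,\g)^\ast$. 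Choosing any co-closed compactly supported representative $\varphi_V$ of this class (and any lift $\varphi \in \EE^\mathrm{inv}$), one obtains $\ip{\varphi_V}{\eta} = \ip{[\varphi_V]}{[\eta]} = \phi_j([\eta]) = c_j \notin \bbZ$.

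The only mildly delicate point is the bookkeeping in Case (A): one must observe that a priori $\langle\varphi_V,\eta\rangle$ is well defined for non-closed $\eta$ since $\varphi_V$ is compactly supported, and that no boundary terms appear when moving $\delta$ to $\dd$, again because $\omega$ is compactly supported. Beyond that, everything reduces to elementary density of bump forms (Case A) and to the standard fact that a lattice in a finite-dimensional real vector space admits integer-valued functionals separating any non-lattice point from $\bbZ$ (Case B); both are routine once the cases have been correctly separated.
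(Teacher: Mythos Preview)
Your proof is correct and follows essentially the same approach as the paper: the identical case split on whether $\dd\eta=0$, with Case~(A) handled via $\varphi_V=\delta\omega$ (the paper phrases this as $\underline{\mathcal{F}}^\ast(\zeta)$, whose linear part is $-\delta\zeta$), and Case~(B) handled via the lattice structure of $H^1_\mathrm{dR}(M,\g_\bbZ)$ inside $H^1_\mathrm{dR}(M,\g)$. The only cosmetic difference is that in Case~(B) you argue constructively via a dual $\bbZ$-basis, whereas the paper argues by contradiction via the double-dual isomorphism $H^1_\mathrm{dR}(M,\g_\bbZ)^{\ast\ast}\simeq H^1_\mathrm{dR}(M,\g_\bbZ)$; these are two phrasings of the same elementary fact about finitely generated free $\bbZ$-modules.
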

\begin{proof}
Let $\lambda,\lambda^\prime \in \sect{M}{\mathcal{C}(\Xi)}$ be not gauge equivalent, 
i.e.~$\lambda^\prime = \lambda + \eta$ with $\eta\in \Omega^1(M,\g)\setminus B_G$.

Let us first assume that $\eta$ is not closed, $\dd\eta\neq 0$. For all $\zeta \in \Omega^2_0(M,\g^\ast)$ let us consider 
$\underline{\mathcal{F}}^\ast(\zeta)\in \EE^\mathrm{kin}$, where $\underline{\mathcal{F}}^\ast:
\Omega^2_0(M,\g^\ast) \to \EE^\mathrm{kin}$
is the formal adjoint of the curvature affine differential operator
$\underline{\mathcal{F}}: \sect{M}{\mathcal{C}(\Xi)} \to \Omega^2(M,\g)$
 (cf.~\cite[Lemma 2.14 and Proposition 2.18]{Benini:2013tra}). Notice that
 $\underline{\mathcal{F}}^\ast(\zeta)_V = -\delta \zeta $ (for an explanation for the
 minus sign see \cite[Proposition 2.18]{Benini:2013tra} and the subsequent discussion), 
 hence $\underline{\mathcal{F}}^\ast(\zeta)\in \EE^\mathrm{inv}$.
 We obtain for the corresponding functional
 \begin{flalign}
 \mathcal{W}_{\underline{\mathcal{F}}^\ast(\zeta)}(\lambda^\prime) = 
 \mathcal{W}_{\underline{\mathcal{F}}^\ast(\zeta)}(\lambda)\,e^{-2\pi i\,\ip{\zeta}{\dd\eta}}~.
 \end{flalign}
 Since $\dd\eta\neq 0$ there exists $\zeta \in \Omega^2_0(M,\g^\ast)$ such that
 $\mathcal{W}_{\underline{\mathcal{F}}^\ast(\zeta)}(\lambda^\prime) \neq 
 \mathcal{W}_{\underline{\mathcal{F}}^\ast(\zeta)}(\lambda)$.

Let us now assume that $\dd\eta =0$.  
By hypothesis, the corresponding cohomology class
$[\eta]\in H^1_\mathrm{dR}(M,\g)$ is not included in the subgroup
$H^1_\mathrm{dR}(M,\g_\bbZ)\subseteq H^1_\mathrm{dR}(M,\g)$, since otherwise $\eta$
would be an element in $B_G$. We prove the statement by contradiction:
Assume that $\mathcal{W}_\varphi(\lambda^\prime) = \mathcal{W}_\varphi(\lambda)$, for all
$\varphi\in \EE^\mathrm{inv}$. As a consequence, 
$\ip{H^1_{0\,\mathrm{dR}^\ast}(M,\g^\ast)_\bbZ}{[\eta]}\subseteq \mathbb{Z}$,
which implies that $[\eta]$ defines a homomorphism of Abelian groups
$H^1_{0\,\mathrm{dR}^\ast}(M,\g^\ast)_\bbZ \to \bbZ$. Notice that this is an element in the double
dual $\bbZ$-module of $H^1_\mathrm{dR}(M,\g_\bbZ)$, which 
is isomorphic to $H^1_\mathrm{dR}(M,\g_\bbZ)$ since the latter is 
 finitely generated and free. This is a contradiction and hence
there exists $\varphi\in \EE^\mathrm{inv}$, such that
 $\mathcal{W}_\varphi(\lambda^\prime) \neq \mathcal{W}_\varphi(\lambda)$.
\end{proof}

\begin{rem} \label{rem:wilson}
There is the following relation to the usual Wilson loop observables: 
Given a smooth loop $\gamma: \mathbb{S}^1 \to M$
we can construct the pull-back bundle $\gamma^\ast(P)$, which is a principal $U(1)$-bundle over
$\mathbb{S}^1$. By construction, we have the commuting diagram
\begin{flalign}
\xymatrix{
\gamma^\ast(P)\ar[d]_-{\pi^\prime} \ar[rr]^-{\overline{\gamma}} && P \ar[d]^-{\pi}\\
\mathbb{S}^1 \ar[rr]^\gamma && M
}
\end{flalign}
Notice that $\gamma^\ast(P)$ is necessarily a trivial bundle (as $H^2(\mathbb{S}^1,\bbZ)=\{0\}$) and hence there
exists a global section $\sigma: \mathbb{S}^1 \to \gamma^\ast(P)$ of $\pi^\prime$. 
Given any $\lambda\in \sect{M}{\mathcal{C}(\Xi)}$,  its associated connection form $\omega_\lambda\in\Omega^1(P,\g)$ 
pulls back to a connection form $\overline{\gamma}^\ast(\omega_\lambda)\in\Omega^1(\gamma^\ast(P),\g)$,
which can be further pulled back via the section to a $\g$-valued one-form on $\mathbb{S}^1$,
$\sigma^\ast(\overline{\gamma}^\ast(\omega_\lambda))\in \Omega^1(\mathbb{S}^1,\g)$. We call the functional
\begin{flalign}\label{eqn:Wilsonloop}
w_\gamma: \sect{M}{\mathcal{C}(\Xi)} \to \bbC~,~~\lambda \mapsto w_\gamma(\lambda)=
e^{\int_{\mathbb{S}^1} \sigma^\ast(\overline{\gamma}^\ast(\omega_\lambda))}
\end{flalign}
a Wilson loop observable and notice that $w_\gamma$ does not depend on the choice of trivialization $\sigma$.
The exponent of the Wilson loop observables is an affine functional,  
for all $\lambda\in \sect{M}{\mathcal{C}(\Xi)}$ and $\eta \in \Omega^1(M,\g)$, 
$\int_{\mathbb{S}^1} \sigma^\ast(\overline{\gamma}^\ast(\omega_{\lambda+\eta})) = \int_{\mathbb{S}^1}
 \sigma^\ast(\overline{\gamma}^\ast(\omega_\lambda)) + \int_{\mathbb{S}^1} \gamma^\ast(\eta)$.
This immediately implies that
\begin{flalign}
w_\gamma(\lambda +\eta) = w_\gamma(\lambda) \,e^{\int_{\mathbb{S}^1} \gamma^\ast(\eta)}~,
\end{flalign}
which the reader should compare with (\ref{eqn:expobsaff}). Hence, the usual Wilson loop observables
(\ref{eqn:Wilsonloop})  can be regarded as exponential functionals (\ref{eqn:expobs}) obtained by
using distributional sections of the vector dual bundle $\mathcal{C}(\Xi)^\dagger$.
In our work we shall discard these distributional functionals and only work with smooth sections 
of the vector dual bundle $\mathcal{C}(\Xi)^\dagger$ for the following reasons:
Firstly, because of Theorem \ref{thmSep} the set of gauge invariant 
observables $\{\mathcal{W}_\varphi : \varphi\in \EE^\mathrm{inv}\}$
is already large enough to separate gauge equivalence classes of connections, hence we see no reason
to extend it by allowing for distributional sections of $\mathcal{C}(\Xi)^\dagger$.
Secondly, allowing for distributional sections will lead to singularities in our quantization prescription, 
the renormalization of which we would like to avoid in this paper. 
\end{rem}

For $k=0,\dots, \dim(M)$, let $\square_{(k)} := \delta\circ \dd + \dd\circ \delta : \Omega^k(M,\g^\ast)\to \Omega^k(M,\g^\ast)$
be the Hodge-d'Alembert operator acting on $\g^\ast$-valued $k$-forms.
As these operators are normally hyperbolic, they have unique retarded and advanced Green's operators denoted by
$G_{(k)}^\pm : \Omega^k_0(M,\g^\ast)\to \Omega^k(M,\g^\ast)$, see \cite{Bar:2007zz,Pfenning:2009nx} for details.
It is easy to prove that the d'Alembert operators $\square_{(k)}$ and the Green's operators 
 $G_{(k)}^\pm$ commute with the differential and codifferential, i.e.\
\begin{subequations}\label{eqn:commuteddeltaG}
\begin{flalign}
\dd \circ \square_{(k)} = \square_{(k+1)}\circ \dd \quad &,\qquad \delta\circ \square_{(k+1)}= \square_{(k)}\circ \delta~,\\
\dd \circ G^\pm_{(k)} = G^\pm_{(k+1)} \circ \dd \quad&,\qquad \delta\circ G^\pm_{(k+1)}= G^\pm_{(k)}\circ \delta~.
\end{flalign}
\end{subequations}
We denote the causal propagator by
$G_{(k)}:= G_{(k)}^+ - G_{(k)}^-: \Omega^k_0(M,\g^\ast)\to \Omega^k(M,\g^\ast)$
and notice that also the $G_{(k)}$ commute with $\dd$ and $\delta$ as a consequence of (\ref{eqn:commuteddeltaG}).
Given further a bi-invariant Riemannian metric $h$ on the structure group $G$ (or equivalently a $G$-equivariant positive 
linear map $h:\g\to\g^\ast$), 
we can define a presymplectic structure $\tau:\EE^\mathrm{inv} \times \EE^\mathrm{inv}\to \bbR$ 
on the Abelian group $\EE^\mathrm{inv}$ by,
for all $\varphi,\psi\in \EE^\mathrm{inv}$,
\begin{flalign}\label{eqn:taumap}
\tau(\varphi,\psi) := \ip{\varphi_V}{G_{(1)}(\psi_V)}_h^{} :=
 \int_M \varphi_V \wedge \ast\big(h^{-1}\big(G_{(1)}(\psi_V)\big)\big) ~,
\end{flalign}
where $h^{-1}: \g^\ast \to \g$ is the inverse of $h$.
This presymplectic structure can be derived from the Lagrangian density $\mathcal{L}[\lambda] = 
-\frac{1}{2} \,h(\underline{\mathcal{F}}(\lambda))\wedge \ast(\underline{\mathcal{F}}(\lambda))$
by slightly adapting Peierls' method \cite[Remark 3.5]{Benini:2013tra}.
It is worth mentioning that since $\g = i\,\bbR$ is one-dimensional and the adjoint action of $G$ on $\g$ is trivial, 
 bi-invariant Riemannian metrics on $G$ are in bijective correspondence with positive linear maps 
 $h:\g \to \g^\ast\,,~t\mapsto h(t) = \frac{1}{q^2}\,t $, where $q\in (0,\infty)$. 
Hence, the metric $h$ plays the role of an electric charge constant. To see this, plug $h(t) = \frac{1}{q^2}\,t$ into the Lagrangian above
and compare it with the usual textbook Lagrangian of Maxwell's theory. The metric $h$ will be fixed throughout this work.

Before we take  the quotient of $\EE^\mathrm{inv}$ by a  subgroup containing
the equation of motion, let us study the elements $\psi\in \EE^\mathrm{inv}$
which lead to central Weyl symbols in the quantum field theory. The Weyl relations
(\ref{eqn:Weylrelations}) read 
$W(\varphi) \,W(\psi) = e^{-i\,\tau(\varphi,\psi)/2} \,W(\varphi+\psi) $.
$W(\psi)$ commutes with all other Weyl symbols if and only if $\tau(\EE^\mathrm{inv},\psi)\subseteq 2\pi \,\mathbb{Z}$.
We denote by $\NN\subseteq \EE^\mathrm{inv}$ the subgroup of all $\psi\in \EE^\mathrm{inv}$
satisfying this condition, i.e.\
\begin{flalign}\label{eqn:N}
\NN :=\big\{ \psi\in \EE^\mathrm{inv} : \tau(\EE^\mathrm{inv},\psi)\subseteq 2\pi \,\mathbb{Z}\big\} ~.
\end{flalign}
This subgroup can also be characterized as follows:
\begin{propo}\label{propo:null1}
$\NN = \big\{\psi\in \EE^\mathrm{inv}: \psi_V \in \delta\Omega^2_{0,\dd}(M,\g^\ast) ~\text{and}~
\big[h^{-1}\big(G_{(1)}(\psi_V)\big)\big]
\in 2\pi\, H^1_\mathrm{dR}(M,\g_\bbZ) \big\}$.
\end{propo}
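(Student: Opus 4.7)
The plan is to prove both inclusions by testing the defining condition $\tau(\EE^\mathrm{inv},\psi)\subseteq 2\pi\bbZ$ of $\NN$ against two distinguished families of elements of $\EE^\mathrm{inv}$: first against linear parts of the form $-\delta\zeta$ with $\zeta\in \Omega^2_0(M,\g^\ast)$ (i.e.\ the elements $\underline{\mathcal{F}}^\ast(\zeta)\in\EE^\mathrm{inv}$ of vanishing cohomology class), and then against general $\varphi$ whose class $[\varphi_V]$ spans the integrality lattice $H^1_{0\,\mathrm{dR}^\ast}(M,\g^\ast)_\bbZ$. The two structural inputs are the standard exact sequence for the Green's operators of the normally hyperbolic $\square_{(k)}$ on a globally hyperbolic spacetime --- which yields $\ker(G_{(k)}|_{\Omega^k_0})=\square_{(k)}\Omega^k_0(M,\g^\ast)$ as well as $\ker(\square_{(k)}|_{\Omega^k_0})=\{0\}$ by uniqueness of the Cauchy problem --- and the non-degenerate Poincar\'e duality pairing $H^1_{0\,\mathrm{dR}^\ast}(M,\g^\ast)\times H^1_{\mathrm{dR}}(M,\g)\to\bbR$ recalled above.

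For the forward inclusion, testing with $\varphi=\underline{\mathcal{F}}^\ast(\zeta)$ and using the commutation $\dd\circ G_{(1)}=G_{(2)}\circ\dd$ together with integration by parts gives
\begin{flalign*}
\tau(\underline{\mathcal{F}}^\ast(\zeta),\psi)\,=\,-\int_M\zeta\wedge\ast\bigl(h^{-1}\bigl(G_{(2)}(\dd\psi_V)\bigr)\bigr)~.
\end{flalign*}
Since this is a continuous $\bbR$-linear functional of $\zeta$ constrained to the discrete set $2\pi\bbZ$, it must vanish identically, so $G_{(2)}(\dd\psi_V)=\dd G_{(1)}(\psi_V)=0$. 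By the Green's complex exact sequence, $\dd\psi_V=\square_{(2)}\chi$ for some $\chi\in\Omega^2_0(M,\g^\ast)$. Exploiting $\delta\psi_V=0$ and $\square=\dd\delta+\delta\dd$, one checks $\square(\psi_V-\delta\chi)=\delta\dd\psi_V-\delta\square\chi=0$, so triviality of $\ker\square_{(1)}$ on $\Omega^1_0(M,\g^\ast)$ forces $\psi_V=\delta\chi$; the parallel computation $\square(\dd\chi)=\dd\square\chi=\dd^2\psi_V=0$ combined with triviality of $\ker\square_{(3)}$ on $\Omega^3_0(M,\g^\ast)$ yields $\dd\chi=0$, establishing the first condition $\psi_V\in\delta\Omega^2_{0,\dd}(M,\g^\ast)$. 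In particular $h^{-1}(G_{(1)}(\psi_V))$ is closed, and the pairing (\ref{eqn:taumap}) descends to the Poincar\'e duality pairing $\tau(\varphi,\psi)=\ip{[\varphi_V]}{[h^{-1}(G_{(1)}(\psi_V))]}$ on cohomology. As $\varphi$ ranges over $\EE^\mathrm{inv}$, its class $[\varphi_V]$ exhausts exactly the lattice $H^1_{0\,\mathrm{dR}^\ast}(M,\g^\ast)_\bbZ$, so the constraint $\tau(\EE^\mathrm{inv},\psi)\subseteq 2\pi\bbZ$ is equivalent to $[h^{-1}(G_{(1)}(\psi_V))]$ defining a homomorphism $H^1_{0\,\mathrm{dR}^\ast}(M,\g^\ast)_\bbZ\to 2\pi\bbZ$. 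Using the identification $H^1_{0\,\mathrm{dR}^\ast}(M,\g^\ast)_\bbZ\simeq H^1_{\mathrm{dR}}(M,\g_\bbZ)^\ast$ together with the reflexivity of the finitely generated free $\bbZ$-module $H^1_{\mathrm{dR}}(M,\g_\bbZ)$, this is equivalent to $[h^{-1}(G_{(1)}(\psi_V))]\in 2\pi\,H^1_{\mathrm{dR}}(M,\g_\bbZ)$.

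The converse inclusion is direct: if $\psi_V=\delta\chi$ with $\chi\in\Omega^2_{0,\dd}(M,\g^\ast)$, then $\dd\psi_V=\dd\delta\chi=\square\chi$ (using $\dd\chi=0$), so $G_{(2)}(\dd\psi_V)=G_{(2)}\square\chi=0$ because $G\circ\square=0$ on compactly supported forms, whence $h^{-1}(G_{(1)}(\psi_V))$ is closed and (\ref{eqn:taumap}) again descends to cohomology; the second hypothesis combined with the definition of $H^1_{0\,\mathrm{dR}^\ast}(M,\g^\ast)_\bbZ$ then gives $\tau(\varphi,\psi)\in 2\pi\bbZ$ for every $\varphi\in\EE^\mathrm{inv}$. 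The main obstacle I anticipate is bookkeeping: pinning down the correct signs in the mixed $\g$/$\g^\ast$-valued integration by parts twisted by $h^{-1}$, and carefully articulating that (\ref{eqn:taumap}) descends to the Poincar\'e duality pairing precisely when both $\delta\varphi_V=0$ and $\dd(h^{-1}(G_{(1)}(\psi_V)))=0$. Everything else reduces to (\ref{eqn:commuteddeltaG}), the exactness of the Green's complex, and triviality of $\ker\square$ on compactly supported forms.
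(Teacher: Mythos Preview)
Your proposal is correct and follows essentially the same approach as the paper's proof: both test against the subgroup $\{\varphi:\varphi_V\in\delta\Omega^2_0(M,\g^\ast)\}$ to deduce $G_{(2)}(\dd\psi_V)=0$, use exactness of the Green's sequence to write $\dd\psi_V=\square_{(2)}\chi$, and then exploit $\ker\square\vert_{\Omega^k_0}=\{0\}$ to extract $\psi_V=\delta\chi$ with $\dd\chi=0$, before passing to cohomology for the integrality condition. The only cosmetic differences are that the paper derives $\dd\chi=0$ before $\psi_V=\delta\chi$ (and treats $\dim M=2$ by a separate sentence, which your formulation handles implicitly since $\Omega^3_0=\{0\}$ there), and that you spell out the reflexivity of the free $\bbZ$-module $H^1_{\mathrm{dR}}(M,\g_\bbZ)$ where the paper leaves this implication as a one-liner.
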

\begin{proof}
We first prove the inclusion $\supseteq$.
Assume that $\psi\in\EE^\mathrm{inv}$ satisfies the first condition of the Abelian group specified on the right hand side above,
i.e.~$\psi_V = \delta\zeta$ for some $\zeta\in \Omega^2_{0,\dd}(M,\g^\ast)$.
Then $\dd( h^{-1}(G_{(1)}(\psi_V)))=h^{-1}(G_{(2)}(\dd\delta\zeta)) =
h^{-1}(G_{(2)}(\square_{(2)}(\zeta)))= 0$, thus the second condition is well-posed.
Using Proposition \ref{propo:gaugeinv} the following holds true,
\begin{flalign}\label{eqn:tmpNull}
 \tau(\EE^\mathrm{inv},\psi)=\ip{H^1_{0\,\mathrm{dR}^\ast}(M,\g^\ast)_\bbZ}{\big[h^{-1}\big(G_{(1)}(\psi_V)\big)\big]}
 \subseteq 2\pi\,\bbZ~.
\end{flalign}
To prove the inclusion $\subseteq$, 
suppose that $\psi\in\EE^\mathrm{inv}$ is such that 
$\tau(\EE^\mathrm{inv},\psi)\subseteq 2\pi\,\bbZ$. Since $\EE^\mathrm{inv}$
contains the subgroup 
$\{\varphi\in \EE^\mathrm{kin}: \varphi_V \in \delta \Omega^2_0(M,\g^\ast)\}$,
 we obtain that $\dd( h^{-1}(G_{(1)}(\psi_V)) )= 0 $.
As a consequence of global hyperbolicity and $\square_{(2)}$ being normally hyperbolic,
we obtain $\dd\psi_V = \square_{(2)}(\zeta)$ for some $\zeta\in \Omega^2_0(M,\g^\ast)$.
If $\dim(M)\geq 3$, we apply $\dd$ to this equation and find $0=\square_{(3)}(\dd\zeta)$,  which again due to
global hyperbolicity and $\square_{(3)}$ being normally hyperbolic
implies that $\zeta\in \Omega^2_{0,\dd}(M,\g^\ast)$. If $\dim(M)=2$, the statement $\zeta\in \Omega^2_{0,\dd}(M,\g^\ast)$
is automatic.
Applying $\delta$ on $\dd\psi_V = \square_{(2)}(\zeta)$ and using that $\delta\psi_V=0$ we find 
$\square_{(1)}(\psi_V) = \square_{(1)}(\delta\zeta)$ and hence $\psi_V = \delta \zeta$.
The condition $\tau(\EE^\mathrm{inv},\psi)\subseteq 2\pi\,\bbZ$ then reads
as in (\ref{eqn:tmpNull}), which implies that $[h^{-1}(G_{(1)}(\psi_V))]
\in 2\pi\, H^1_\mathrm{dR}(M,\g_\bbZ) $.
\end{proof}
With this characterization it is easy to see that the equation of motion is contained in $\NN$. 
\begin{lem}\label{lem:EOM}
$\MW^\ast\big[\Omega^1_0(M,\g^\ast)\big]\subseteq \NN$, where $\MW^\ast = \underline{\mathcal{F}}^\ast
 \circ \dd: \Omega^1_0(M,\g^\ast)\to \EE^\mathrm{kin}$ is the formal adjoint
of Maxwell's affine differential operator $\MW:= \delta\circ \underline{\mathcal{F}} : \sect{M}{\mathcal{C}(\Xi)} \to \Omega^1(M,\g)$.
\end{lem}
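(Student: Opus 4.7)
The plan is to pick an arbitrary $\alpha\in\Omega^1_0(M,\g^\ast)$, set $\psi:=\MW^\ast(\alpha)=\underline{\mathcal{F}}^\ast(\dd\alpha)$, and verify both that $\psi\in\EE^\mathrm{inv}$ and that the two conditions characterizing $\NN$ in Proposition \ref{propo:null1} hold. The starting identity is $\underline{\mathcal{F}}^\ast(\zeta)_V=-\delta\zeta$ (as recalled in the proof of Theorem \ref{thmSep}), so the linear part is simply $\psi_V=-\delta\dd\alpha$, which is compactly supported and co-closed because $\delta^2=0$. This immediately gives $\delta\psi_V=0$, and taking $\zeta:=-\dd\alpha\in\Omega^2_0(M,\g^\ast)$ exhibits $\psi_V=\delta\zeta$. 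Hence $[\psi_V]=0$ in $H^1_{0\,\mathrm{dR}^\ast}(M,\g^\ast)$, which of course lies in the subgroup $H^1_{0\,\mathrm{dR}^\ast}(M,\g^\ast)_\bbZ$, so by Proposition \ref{propo:gaugeinv} we have $\psi\in\EE^\mathrm{inv}$.

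For the first condition in Proposition \ref{propo:null1}, the choice $\zeta=-\dd\alpha$ already used above is closed since $\dd\zeta=-\dd^2\alpha=0$, so $\zeta\in\Omega^2_{0,\dd}(M,\g^\ast)$ and $\psi_V\in\delta\Omega^2_{0,\dd}(M,\g^\ast)$ as required.

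For the second condition, the idea is to show that $G_{(1)}(\psi_V)$ is exact; its de Rham class then vanishes and trivially sits in $2\pi H^1_\mathrm{dR}(M,\g_\bbZ)$. Using $\square_{(1)}=\delta\dd+\dd\delta$, one rewrites
\begin{flalign*}
\psi_V=-\delta\dd\alpha=\dd\delta\alpha-\square_{(1)}\alpha~.
\end{flalign*}
Since $\alpha\in\Omega^1_0(M,\g^\ast)$, one has $G_{(1)}(\square_{(1)}\alpha)=0$, and by the commutation relations (\ref{eqn:commuteddeltaG}),
\begin{flalign*}
G_{(1)}(\psi_V)=G_{(1)}(\dd\delta\alpha)=\dd\,G_{(0)}(\delta\alpha)~.
\end{flalign*}
Applying $h^{-1}$ (which is simply $\bbR$-linear scalar multiplication since $\g$ is one-dimensional) and using that it commutes with $\dd$, we conclude that $h^{-1}\bigl(G_{(1)}(\psi_V)\bigr)=\dd\bigl(h^{-1}(G_{(0)}(\delta\alpha))\bigr)$ is exact, so its class in $H^1_\mathrm{dR}(M,\g)$ is zero.

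This essentially completes the proof; no step looks obstructing, and the argument is entirely formal once one has the commutation of $G_{(k)}^\pm$ with $\dd$ and $\delta$ together with the characterization in Proposition \ref{propo:null1}. The only point that deserves a brief mention is that the Hodge-type decomposition $\delta\dd\alpha=\square_{(1)}\alpha-\dd\delta\alpha$ combined with $G_{(1)}\circ\square_{(1)}=0$ on $\Omega^1_0$ is the key simplification that turns $G_{(1)}(\psi_V)$ into a manifestly exact form.
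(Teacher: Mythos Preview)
Your proof is correct and follows essentially the same approach as the paper: both compute $\psi_V=-\delta\dd\alpha$, rewrite it via $\square_{(1)}=\delta\dd+\dd\delta$, use $G_{(1)}\circ\square_{(1)}=0$ on compactly supported forms together with the commutation relations (\ref{eqn:commuteddeltaG}) to conclude that $G_{(1)}(\psi_V)$ is exact, hence $[h^{-1}(G_{(1)}(\psi_V))]=0$. You are somewhat more explicit than the paper in separately checking $\psi\in\EE^\mathrm{inv}$ and the first condition of Proposition~\ref{propo:null1}, but the substance of the argument is the same.
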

\begin{proof}
For any $\zeta \in \Omega^1_0(M,\g^\ast)$, $\MW^\ast(\zeta)_V = -\delta\dd \zeta$.
As a consequence of this and (\ref{eqn:commuteddeltaG}), 
$G_{(1)}(\MW^\ast(\zeta)_V) = - G_{(1)}(\delta\dd\zeta) = - G_{(1)}((\square_{(1)}-\dd\delta)(\zeta)) =
 \dd\delta G_{(1)}(\zeta)$
and thus $[h^{-1}(G_{(1)}(\MW^\ast(\zeta)_V))]=0$.
\end{proof}

The characterization of $\NN$ given in Proposition \ref{propo:null1} is still rather abstract. In particular,
it is quite hard to control the second condition since it involves the causal propagator and hence
the equation of motion together with its solution theory. Fortunately, it will be sufficient for us
to characterize explicitly only the  subgroup of $\NN$ given by
\begin{flalign}\label{eqn:N0}
\NN^0 := \big\{\psi\in \EE^\mathrm{inv}: \psi_V \in \delta\Omega^2_{0,\dd}(M,\g^\ast) ~\text{and}~
\big[h^{-1}\big(G_{(1)}(\psi_V)\big)\big]=0\big\}\subseteq \NN~.
\end{flalign}
Notice that $\NN^0$ can be defined as the set of all $\psi\in \EE^\mathrm{inv}$
 satisfying $\tau(\EE^\mathrm{inv},\psi) =\{0\}$, i.e.\ $\NN^0$ is the radical of the presymplectic structure 
 in $\EE^\mathrm{inv}$.
\begin{propo}\label{propo:Nmin}
$\NN^0 = \big\{\psi\in \EE^\mathrm{inv} : \psi_V 
\in \delta\big(\Omega^2_0(M,\g^\ast)\cap \dd \Omega^1_\mathrm{tc}(M,\g^\ast)\big)\big\}$, where the subscript
$_\mathrm{tc}$ stands for forms with timelike compact support.
\end{propo}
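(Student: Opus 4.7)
The plan is to verify both inclusions; the nontrivial part is constructing, for each $\psi\in\NN^0$, a \emph{timelike compact} primitive of the closed compactly supported $2$-form $\zeta$ with $\psi_V=\delta\zeta$.

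For the easy inclusion $\supseteq$, assume $\psi_V=\delta\dd\beta$ with $\beta\in\Omega^1_\mathrm{tc}(M,\g^\ast)$ and $\dd\beta\in\Omega^2_0(M,\g^\ast)$. The first condition in (\ref{eqn:N0}) is then immediate. For the second, I would use the standard extensions of $G_{(k)}^\pm$ to past/future compact sections for normally hyperbolic operators on globally hyperbolic spacetimes (cf.~\cite{Bar:2007zz}). Writing $\psi_V=\square_{(1)}\beta-\dd\delta\beta$ (from $\delta\dd=\square_{(1)}-\dd\delta$) and applying $G_{(1)}^\pm$, which satisfies $G_{(1)}^\pm(\square_{(1)}\beta)=\beta$ for past/future compact $\beta$, yields $\beta=G_{(1)}^\pm(\psi_V)+\dd\,G_{(0)}^\pm(\delta\beta)$; subtracting the two relations gives $G_{(1)}(\psi_V)=-\dd\,G_{(0)}(\delta\beta)$, which is exact, hence $[h^{-1}(G_{(1)}(\psi_V))]=0$.

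For the harder inclusion $\subseteq$, let $\psi\in\NN^0$, so $\psi_V=\delta\zeta$ with $\zeta\in\Omega^2_{0,\dd}(M,\g^\ast)$, and the hypothesis yields $\chi\in C^\infty(M,\g^\ast)$ with $G_{(1)}(\psi_V)=\dd\chi$. Define $\beta^\pm:=\delta G_{(2)}^\pm(\zeta)\in\Omega^1(M,\g^\ast)$; from $\dd\zeta=0$ and (\ref{eqn:commuteddeltaG}) one obtains $\dd\beta^\pm=\square_{(1)}G_{(2)}^\pm(\zeta)=\zeta$, while the support properties of the retarded/advanced Green's operators give $\supp\beta^\pm\subseteq J_M^\pm(\supp\zeta)$, and $\beta^+-\beta^-=\delta G_{(2)}(\zeta)=G_{(1)}(\delta\zeta)=G_{(1)}(\psi_V)=\dd\chi$.

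Now comes the key interpolation: pick Cauchy surfaces $\Sigma_\pm$ with $\supp\zeta$ in the open slab between them, and a smooth cutoff $f\in C^\infty(M)$ vanishing to the past of $\Sigma_-$ and equal to $1$ to the future of $\Sigma_+$ (available by Bernal--S\'anchez). Set $\beta:=\beta^+-\dd(f\chi)$: then $\dd\beta=\dd\beta^+=\zeta$, while to the past of $\Sigma_-$ both $\beta^+$ and $f\chi$ vanish, giving $\beta=0$, and to the future of $\Sigma_+$ the vanishing of $\beta^-$ forces $\beta^+=\dd\chi$ while $\dd(f\chi)=\dd\chi$, again giving $\beta=0$. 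Hence $\beta\in\Omega^1_\mathrm{tc}(M,\g^\ast)$, and $\psi_V=\delta\dd\beta\in\delta\big(\Omega^2_0(M,\g^\ast)\cap\dd\Omega^1_\mathrm{tc}(M,\g^\ast)\big)$. The main obstacle is precisely this interpolation: the natural primitives $\beta^\pm$ have tails extending to infinity in one time direction, and the cohomological hypothesis $[h^{-1}(G_{(1)}(\psi_V))]=0$ enters \emph{exactly} to render their difference exact, so that a single cutoff $f$ can cancel both temporal tails simultaneously without spoiling $\dd\beta=\zeta$.
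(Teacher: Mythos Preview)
Your argument is correct (apart from a harmless typo: in the line $\dd\beta^\pm=\square_{(1)}G_{(2)}^\pm(\zeta)=\zeta$ the operator should be $\square_{(2)}$, since $G_{(2)}^\pm(\zeta)$ is a $2$-form). Both inclusions go through as written.

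Your route for $\subseteq$ is genuinely different from the paper's. The paper argues indirectly via the exact sequence for the causal propagator extended to timelike compact supports: from $G_{(1)}(\psi_V)=\dd\chi$ it deduces $\square_{(0)}\chi=0$, hence $\chi=G_{(0)}(\alpha)$ for some $\alpha\in C^\infty_{\mathrm{tc}}(M,\g^\ast)$; then $\psi_V-\dd\alpha$ lies in the kernel of the extended $G_{(1)}$, so $\psi_V=\dd\alpha+\square_{(1)}\beta$ for some $\beta\in\Omega^1_{\mathrm{tc}}(M,\g^\ast)$, and algebraic manipulation yields $\zeta=\dd\beta$. Your approach is instead fully constructive: you build explicit primitives $\beta^\pm=\delta G_{(2)}^\pm(\zeta)$ with future/past support tails, observe that the cohomological hypothesis makes their difference exact, and splice them together with a Cauchy time cutoff. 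The paper's proof is shorter and leans on the extended exact sequence as a black box (citing \cite{Baernew,Sanders:2012ac}); yours is more hands-on, makes the role of the hypothesis $[h^{-1}(G_{(1)}(\psi_V))]=0$ geometrically transparent, and does not require invoking the kernel/image characterization of $G$ on the timelike compact domain --- only the standard extension of $G^\pm$ to past/future compact supports. For $\supseteq$ the two arguments are essentially the same computation, yours unpacking $G_{(1)}$ into $G_{(1)}^\pm$ and the paper's using the extended $G_{(1)}$ directly.
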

\begin{proof}
We first show the inclusion $\subseteq$: 
Let $\psi_V=\delta\zeta$, $\zeta \in \Omega^2_{0,\dd}(M,\g^\ast)$, be the linear part of $\psi\in\NN^0$.
The second condition in (\ref{eqn:N0}) implies that there exists a $\chi^\prime \in C^\infty(M,\g)$, such that
$h^{-1}(G_{(1)}(\psi_V)) = \dd \chi^\prime$. Absorbing $h^{-1}$ into $\chi^\prime$ we obtain 
the equivalent equation $G_{(1)}(\psi_V) = \dd\chi$, for some $\chi\in C^\infty(M,\g^\ast)$. Applying $\delta$ to both sides
leads to $\square_{(0)}(\chi)=0$, hence there exists an $\alpha\in C^\infty_{\mathrm{tc}}(M,\g^\ast)$ such that
$\chi = G_{(0)}(\alpha)$.
See e.g.\ \cite{Baernew,Sanders:2012ac}  for details on how to extend the causal propagator
to sections of timelike compact support. The original equation $G_{(1)}(\psi_V) = \dd\chi$ implies that
$\psi_V  = \dd\alpha + \square_{(1)}(\beta) $ for some $\beta\in \Omega^1_{\mathrm{tc}}(M,\g^\ast)$.
Applying $\delta$ and using that $\delta\psi_V=0$ gives $\alpha = -\delta \beta$ and the equation simplifies to
$\delta\zeta = \psi_V = \delta\dd \beta$. Applying $\dd$ and using $\dd\zeta =0$ 
shows that $\zeta = \dd\beta$ and hence $\dd\beta \in\Omega^2_0(M,\g^\ast)$.
The other inclusion $\supseteq$ is easily shown, for all $\dd \beta \in
 \Omega^2_0(M,\g^\ast)\cap \dd \Omega^1_\mathrm{tc}(M,\g^\ast)$,
\begin{flalign}
G_{(1)}(\delta\dd\beta) = \delta\dd G_{(1)}(\beta) = (\square_{(1)} - \dd\delta)\big(G_{(1)}(\beta)\big)
= -\dd\delta G_{(1)}(\beta)
 \end{flalign}
 and hence $[h^{-1}(G_{(1)}(\delta\dd\beta))]=0$.
\end{proof}
\begin{lem}\label{lem:EOMN0}
$\MW^\ast[\Omega^1_0(M,\g^\ast)]\subseteq \NN^0$.
\end{lem}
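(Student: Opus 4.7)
The plan is to exploit the characterization of $\NN^0$ provided by Proposition \ref{propo:Nmin}, which states
\[
\NN^0 = \big\{\psi\in \EE^\mathrm{inv} : \psi_V
\in \delta\big(\Omega^2_0(M,\g^\ast)\cap \dd \Omega^1_\mathrm{tc}(M,\g^\ast)\big)\big\}~.
\]
So for any $\zeta\in\Omega^1_0(M,\g^\ast)$, I want to exhibit $\MW^\ast(\zeta)_V$ as $\delta$ of a form lying in $\Omega^2_0(M,\g^\ast)\cap \dd\Omega^1_\mathrm{tc}(M,\g^\ast)$.

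First I recall from the proof of Lemma \ref{lem:EOM} that $\MW^\ast(\zeta)_V = -\delta\dd\zeta$. The candidate two-form is then $-\dd\zeta$. Two observations suffice: (i) since $\supp(\dd\zeta)\subseteq \supp(\zeta)$ and $\zeta$ has compact support, $\dd\zeta\in\Omega^2_0(M,\g^\ast)$; (ii) $\zeta\in\Omega^1_0(M,\g^\ast)\subseteq \Omega^1_\mathrm{tc}(M,\g^\ast)$ since compact sets are in particular timelike compact, hence $\dd\zeta\in \dd\Omega^1_\mathrm{tc}(M,\g^\ast)$. Combining these gives $-\dd\zeta\in \Omega^2_0(M,\g^\ast)\cap \dd\Omega^1_\mathrm{tc}(M,\g^\ast)$ and thus $\MW^\ast(\zeta)_V = \delta(-\dd\zeta)$ lies in the required subspace.

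It remains to confirm that $\MW^\ast(\zeta)$ is actually in $\EE^\mathrm{inv}$, but this is already part of the content of Lemma \ref{lem:EOM}, which shows $\MW^\ast[\Omega^1_0(M,\g^\ast)]\subseteq \NN\subseteq \EE^\mathrm{inv}$. There is no real obstacle: this lemma is essentially a bookkeeping refinement of Lemma \ref{lem:EOM} once one unpacks Proposition \ref{propo:Nmin}, and the proof is two lines.
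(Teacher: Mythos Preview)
Your proof is correct. The paper's proof is the one-liner ``Follows immediately from the proof of Lemma~\ref{lem:EOM},'' which refers back to the computation $G_{(1)}(\MW^\ast(\zeta)_V) = \dd\delta G_{(1)}(\zeta)$ showing that $[h^{-1}(G_{(1)}(\MW^\ast(\zeta)_V))]=0$; together with $\MW^\ast(\zeta)_V = -\delta\dd\zeta \in \delta\Omega^2_{0,\dd}(M,\g^\ast)$ this verifies the two conditions in the \emph{defining} description (\ref{eqn:N0}) of $\NN^0$. You instead invoke the equivalent characterization of $\NN^0$ given in Proposition~\ref{propo:Nmin} and observe that $-\dd\zeta$ manifestly lies in $\Omega^2_0(M,\g^\ast)\cap \dd\Omega^1_\mathrm{tc}(M,\g^\ast)$. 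Both routes are immediate; yours avoids re-invoking the Green's operator identity at the cost of appealing to Proposition~\ref{propo:Nmin} (whose proof already absorbed that analytic work), while the paper's route stays closer to the primitive definition.
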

\begin{proof}
Follows immediately from the proof of Lemma \ref{lem:EOM}.
\end{proof}


\section{\label{sec:functor} The presymplectic Abelian group functor and its quotients} 
We associate the presymplectic Abelian groups constructed in the previous section
to objects $\Xi$ in the category $G{-}\PrBu$ and study how morphisms
in $G{-}\PrBu$ induce morphisms between these presymplectic Abelian groups. Our strategy
is to construct first an off-shell functor, i.e.\ a functor which does not encode the equation 
of motion, and afterwards we shall study the possibility of taking natural quotients of
this functor using a more abstract mathematical machinery. This point of view will be useful
for establishing certain properties of our functors.
For the definition of the category $\PAG$ of presymplectic Abelian groups
we refer to the Appendix, Definition \ref{defi:PAG}.
\begin{propo}\label{propo:offshellfunctor}
The following association defines a covariant functor $\PhaseSpaceOff: G{-}\PrBu \to \PAG$:  
For objects $\Xi$ in $G{-}\PrBu$ we set  $\PhaseSpaceOff(\Xi) := (\EE^\mathrm{inv},\tau)$,
where $\EE^\mathrm{inv}$ is given in (\ref{eqn:Einv}) and $\tau$ in (\ref{eqn:taumap}).
For morphisms $f:\Xi_1\to \Xi_2$  in $G{-}\PrBu$ we set
\begin{flalign}\label{eqn:funcmorph}
\PhaseSpaceOff(f): \PhaseSpaceOff(\Xi_1)\to \PhaseSpaceOff(\Xi_2) ~,~~\varphi \mapsto f_\ast(\varphi)~,
\end{flalign}
where $f_\ast$ is the push-forward given in \cite[Definition 5.4]{Benini:2013tra}.
\end{propo}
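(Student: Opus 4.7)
The plan is to verify, in order: (i) that $\PhaseSpaceOff(\Xi)=(\EE^\mathrm{inv},\tau)$ is a well-defined object of $\PAG$; (ii) that $\PhaseSpaceOff(f)$ is a well-defined morphism in $\PAG$, which amounts to checking (a) that $f_\ast$ descends from the representative level to a group homomorphism $\EE^\mathrm{kin}(\Xi_1)\to\EE^\mathrm{kin}(\Xi_2)$, (b) that it restricts to $\EE^\mathrm{inv}(\Xi_1)\to\EE^\mathrm{inv}(\Xi_2)$, and (c) that it preserves $\tau$; (iii) functoriality, i.e.\ preservation of identities and composition. Point (i) is almost immediate: $\EE^\mathrm{inv}$ is a subgroup of the Abelian group $\EE^\mathrm{kin}$ by construction, and $\tau$ in \eqref{eqn:taumap} is $\bbZ$-bilinear and antisymmetric because the causal propagator $G_{(1)}$ is formally skew-adjoint with respect to $\ip{~}{~}_h$. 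Point (iii) follows directly from the functoriality of the push-forward in \cite[Definition 5.4]{Benini:2013tra}, so the real content lies in (ii).

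For (ii.a), recall that the push-forward of $\varphi\in\sectn{M_1}{\mathcal{C}(\Xi_1)^\dagger}$ acts on linear parts as extension by zero: $(f_\ast\varphi)_V = \underline{f}_\ast(\varphi_V)$, using that $\underline{f}[M_1]\subseteq M_2$ is open and $\varphi_V$ is compactly supported. This is a homomorphism of Abelian groups that descends through $\mathrm{Triv}_\bbZ$ because $\int_{M_2}\underline{f}_\ast(a)\,\volt=\int_{M_1}a\,\volo$ by the isometric embedding property. For (ii.b), the codifferential commutes with push-forward along orientation-preserving isometric embeddings with open image, so $\delta\,\underline{f}_\ast(\varphi_V)=\underline{f}_\ast(\delta\varphi_V)=0$. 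The integrality condition $[\underline{f}_\ast(\varphi_V)]\in H^1_{0\,\mathrm{dR}^\ast}(M_2,\g^\ast)_\bbZ$ is the more delicate point: by Proposition \ref{propo:gaugeinv} it is equivalent to requiring $\ip{\underline{f}_\ast(\varphi_V)}{\eta}_{M_2}\in\bbZ$ for every $\eta\in B_G(M_2)$. Using the adjointness $\ip{\underline{f}_\ast(\varphi_V)}{\eta}_{M_2}=\ip{\varphi_V}{\underline{f}^\ast(\eta)}_{M_1}$, the problem reduces to observing that $\underline{f}^\ast$ maps $B_G(M_2)$ into $B_G(M_1)$; this is clear from the characterization \eqref{eqn:gaugespace} together with naturality of the \v{C}ech-de Rham isomorphism, since pulling back an integral closed form yields an integral closed form. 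Then gauge invariance of $\varphi$ on $M_1$ forces the pairing into $\bbZ$.

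For (ii.c), one uses the standard compatibility of Green's operators with morphisms in $G{-}\PrBu$: since $\underline{f}$ is an isometric embedding with causally compatible open image, one has $\underline{f}^\ast\circ G^\pm_{(1),M_2}\circ \underline{f}_\ast = G^\pm_{(1),M_1}$ (cf.\ \cite{Bar:2007zz}), and hence also $\underline{f}^\ast\circ G_{(1),M_2}\circ \underline{f}_\ast = G_{(1),M_1}$. Combined with adjointness of push-forward and pull-back under the Hodge pairing and with the naturality of $h^{-1}$ (which acts fibrewise on the trivial adjoint bundle), this gives
\begin{flalign*}
\tau_{M_2}\big(f_\ast\varphi,f_\ast\psi\big)
=\ip{\underline{f}_\ast(\varphi_V)}{G_{(1),M_2}(\underline{f}_\ast(\psi_V))}_h
=\ip{\varphi_V}{G_{(1),M_1}(\psi_V)}_h
=\tau_{M_1}(\varphi,\psi)~,
\end{flalign*}
showing that $\PhaseSpaceOff(f)$ preserves the presymplectic structure. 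The main obstacle I anticipate is the integrality check in (ii.b); the other steps are either formal or follow from well-established properties of push-forward, codifferential, and Green's operators on globally hyperbolic spacetimes.
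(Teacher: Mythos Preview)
Your proposal is correct and follows essentially the same route as the paper: the paper too identifies (ii.b) as the only non-trivial step and reduces it to the observation that $\underline{f}^\ast$ maps $B_G(M_2)$ into $B_G(M_1)$, with the remaining points deferred to \cite[Theorem 5.5]{Benini:2013tra}. The only minor difference is that the paper bypasses the \v{C}ech--de Rham argument entirely by using the original description \eqref{eqn:gaugespace00} of $B_G$: for any $\widehat{g}\in C^\infty(M_2,G)$ one has $\underline{f}^\ast\big(\widehat{g}^\ast(\mu_G)\big)=(\widehat{g}\circ\underline{f})^\ast(\mu_G)$ with $\widehat{g}\circ\underline{f}\in C^\infty(M_1,G)$, which is a one-line verification and avoids invoking naturality of the \v{C}ech--de Rham isomorphism.
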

\begin{proof}
The proof can be obtained by following the same steps as in the proof of \cite[Theorem 5.5]{Benini:2013tra}. 
As the reader might ask if the integral cohomology condition in the definition of $\EE^\mathrm{inv}$ in
(\ref{eqn:Einv}) can cause problems, we are repeating the relevant part of this proof.
The only non-trivial step is to show that the morphisms (\ref{eqn:funcmorph}) are well-defined. 
Using the short notation $(\EE^\mathrm{inv}_1 ,\tau_1) := \PhaseSpaceOff(\Xi_1)$
and $(\EE^\mathrm{inv}_2,\tau_2):= \PhaseSpaceOff(\Xi_2)$, this amounts to showing that the push-forward
$f_\ast$ maps $\EE^\mathrm{inv}_1$ to $\EE^\mathrm{inv}_2$. (The proof that $f_\ast$ preserves the presymplectic
structures is exactly the one in \cite[Theorem 5.5]{Benini:2013tra}.) This is indeed the case, since, for
all $\varphi\in \EE_1^\mathrm{inv}$ and $\widehat{g}\in C^\infty(M_2,G)$,
\begin{flalign}
\ip{f_\ast(\varphi)_V}{ \widehat{g}^\ast(\mu_G)}_2 = \langle\underline{f}_\ast(\varphi_V),\;\widehat{g}^\ast(\mu_G)\rangle_2 = \ip{\varphi_V}{\underline{f}^\ast\circ \widehat{g}^\ast(\mu_G)}_1 = \ip{\varphi_V}{(\widehat{g}\circ \underline{f})^\ast(\mu_G)}_1=0~,
\end{flalign}
where $\underline{f}: M_1\to M_2$ is the map induced by $f:P_1\to P_2$ and we have used that 
$\widehat{g}\circ \underline{f}\in C^\infty(M_1,G)$. Hence, $f_\ast(\varphi) \in \EE^\mathrm{inv}_2$
 for all $\varphi\in \EE^\mathrm{inv}_1$.
\end{proof}
\begin{rem}
The covariant functor $\PhaseSpaceOff: G{-}\PrBu \to \PAG$ restricts in the obvious way
to the full subcategory $G{-}\PrBu^{(m)}$ (and also to $G{-}\PrBu^{(m)}_0$) of $G{-}\PrBu$ (with $m\geq 2$),
which describes principal $G$-bundles over (connected) $m$-dimensional spacetimes for a fixed $m\geq 2$.
We shall denote the restricted functors by the same symbol, i.e.\ $\PhaseSpaceOff: G{-}\PrBu^{(m)} \to \PAG$
and $\PhaseSpaceOff: G{-}\PrBu^{(m)}_0 \to \PAG$.
\end{rem}

The covariant functor $\PhaseSpaceOff$ is not yet the one required in physics since it does not encode the equation of motion.
We will address the question of taking quotients of the objects $\PhaseSpaceOff(\Xi)$
by subgroups $\SubGr(\Xi)\subseteq \PhaseSpaceOff(\Xi)$ from a more abstract point of view.
This is required to understand if we can take in our present model
a quotient by the equation of motion and also certain ``electric charges'', cf.~\cite[Section 7]{Benini:2013tra}.
Eventually, this will decide whether the covariant functor resulting from taking quotients satisfies the locality
property (i.e.~injectivity of the induced morphisms in $\PAG$) or not.

There are the following restrictions on the choice of the collection
 $\SubGr(\Xi)\subseteq \PhaseSpaceOff(\Xi)$ of subgroups:
 First, for $\PhaseSpaceOff(\Xi)/\SubGr(\Xi)$ to be an object in $\PAG$ (with the induced presymplectic structure) 
 it is necessary and sufficient that  $\SubGr(\Xi)$ is a subgroup of the radical $\NN^0\subseteq \PhaseSpaceOff(\Xi)$.
 Second, for $\PhaseSpaceOff(f):\PhaseSpaceOff(\Xi_1)\to \PhaseSpaceOff(\Xi_2)$ to induce a morphism on the quotients it is
 necessary and sufficient that $\PhaseSpaceOff(f)$ maps $\SubGr(\Xi_1)$ to $\SubGr(\Xi_2)$.
 These conditions can be abstractly phrased as follows.
 \begin{defi}\label{defi:quotient}
 Let $\mathsf{C}$ be any category and let $\mathfrak{F}:\mathsf{C} \to \PAG$ be a covariant functor.
 \begin{itemize}
 \item[a)] A covariant functor $\SubGr:\mathsf{C}\to \PAG$ is called a {\bf subfunctor} of $\mathfrak{F}$
 if for all objects $A$ in $\mathsf{C}$ we have $\SubGr(A)\subseteq \mathfrak{F}(A)$ (i.e.\  $\SubGr(A)$
 is a presymplectic Abelian subgroup of $\mathfrak{F}(A)$) and if for all morphisms $f:A_1\to A_2$ in $\mathsf{C}$
 the morphism $\SubGr(f)$ is the restriction of $\mathfrak{F}(f)$ to $\SubGr(A_1)$.
 \item[b)] A subfunctor $\SubGr:\mathsf{C}\to \PAG$ of $\mathfrak{F}$ is called {\bf quotientable} if
 for all objects $A$ in $\mathsf{C}$ the presymplectic Abelian group $\SubGr(A)$ is a presymplectic
  Abelian subgroup of the radical in $\mathfrak{F}(A)$.
 \end{itemize}
 \end{defi}
 \begin{rem}
 Notice that if the category $\mathsf{C}$ is $G{-}\PrBu$ and the functor $\mathfrak{F}$ is $\PhaseSpaceOff$
 we  recover exactly the situation explained before Definition \ref{defi:quotient}. 
 We have formulated the definition in this generality, since 
 we shall also encounter the case of a category $\mathsf{C}$ different from $G{-}\PrBu$.
 \end{rem}
\begin{propo}\label{propo:quotientphasespace}
Let  $\SubGr: \mathsf{C} \to \PAG$ be a quotientable subfunctor of a 
covariant functor $\mathfrak{F}:\mathsf{C} \to \PAG$. Then there exists
 a covariant functor $\mathfrak{F}/\SubGr : \mathsf{C}\to \PAG$, called the quotient of $\mathfrak{F}$ by $\SubGr$, 
 defined as follows: It associates to any object 
$A$ in $\mathsf{C}$ the object $\mathfrak{F}(A)/\SubGr(A)$ in $\PAG$. To any
morphism $f:A_1\to A_2$  in $\mathsf{C}$ the functor associates the morphism
$\mathfrak{F}(A_1)/\SubGr(A_1)\to \mathfrak{F}(A_2)/\SubGr(A_2)$ in 
$\PAG$ that is canonically induced by $\mathfrak{F}(f)$.
\end{propo}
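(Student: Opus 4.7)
The plan is to verify the construction pointwise and then check functoriality, with the quotientability hypothesis ensuring that everything descends cleanly. The whole statement is essentially a formalization of the two necessary-and-sufficient conditions spelled out just before Definition \ref{defi:quotient}, and there is no deep obstacle to overcome; the work consists of checking that each piece is well-defined.

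First I would construct the objects. Fix an object $A$ in $\mathsf{C}$ and write $(E,\tau) := \mathfrak{F}(A)$ and $N := \SubGr(A)$. Since $\SubGr$ is quotientable, $N$ sits inside the radical of $\tau$, i.e.\ $\tau(E,N) = \{0\}$. In particular $\tau$ is bi-additive and vanishes whenever either argument lies in $N$, so the prescription $\bar\tau([\varphi],[\psi]) := \tau(\varphi,\psi)$ on the Abelian group quotient $E/N$ is independent of the choice of representatives and defines a presymplectic structure on $E/N$. This shows that $(\mathfrak{F}/\SubGr)(A) := (E/N,\bar\tau)$ is a well-defined object of $\PAG$.

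Next I would construct the morphisms. Given $f:A_1\to A_2$ in $\mathsf{C}$, the subfunctor condition in Definition \ref{defi:quotient}(a) says that $\mathfrak{F}(f)$ restricts to $\SubGr(f):\SubGr(A_1)\to \SubGr(A_2)$, i.e.\ $\mathfrak{F}(f)\big[\SubGr(A_1)\big]\subseteq \SubGr(A_2)$. The universal property of the group quotient then yields a unique group homomorphism $(\mathfrak{F}/\SubGr)(f): \mathfrak{F}(A_1)/\SubGr(A_1) \to \mathfrak{F}(A_2)/\SubGr(A_2)$ with $[\mathfrak{F}(f)(\varphi)] = (\mathfrak{F}/\SubGr)(f)([\varphi])$. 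Because $\mathfrak{F}(f)$ is a morphism in $\PAG$, i.e.\ preserves the presymplectic form, the induced map satisfies $\bar\tau_2\bigl((\mathfrak{F}/\SubGr)(f)[\varphi],(\mathfrak{F}/\SubGr)(f)[\psi]\bigr) = \tau_2(\mathfrak{F}(f)(\varphi),\mathfrak{F}(f)(\psi)) = \tau_1(\varphi,\psi) = \bar\tau_1([\varphi],[\psi])$, so $(\mathfrak{F}/\SubGr)(f)$ is a morphism in $\PAG$.

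Finally I would check the two functorial axioms. Uniqueness in the universal property of the quotient forces $(\mathfrak{F}/\SubGr)(\id_A) = \id_{\mathfrak{F}(A)/\SubGr(A)}$ and $(\mathfrak{F}/\SubGr)(g\circ f) = (\mathfrak{F}/\SubGr)(g)\circ (\mathfrak{F}/\SubGr)(f)$ for composable morphisms $f$ and $g$, since both sides make the relevant diagrams involving $\mathfrak{F}(\id_A) = \id_{\mathfrak{F}(A)}$ and $\mathfrak{F}(g\circ f) = \mathfrak{F}(g)\circ \mathfrak{F}(f)$ commute. The closest thing to an obstacle here is simply bookkeeping: one has to invoke the quotientability hypothesis precisely at the step where $\bar\tau$ is shown to be well-defined, and the subfunctor hypothesis precisely at the step where $\mathfrak{F}(f)$ descends; neither requires any additional input beyond the definitions.
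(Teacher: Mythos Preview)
Your proof is correct and follows essentially the same approach as the paper's own proof, which is a brief two-sentence argument invoking exactly the two points you single out: quotientability ensures $\mathfrak{F}(A)/\SubGr(A)$ is an object in $\PAG$, and the subfunctor property ensures $\mathfrak{F}(f)$ descends to the quotients. Your version simply spells out more of the routine verifications (well-definedness of $\bar\tau$, preservation of the presymplectic form, the functorial axioms) that the paper leaves implicit.
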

\begin{proof}
For any object $A$ in $\mathsf{C}$ the quotient $\mathfrak{F}(A)/\SubGr(A)$ is an object in $\PAG$, since
$\SubGr(A) $ is a presymplectic Abelian subgroup of the radical in $\mathfrak{F}(A)$.
For any morphism $f:A_1\to A_2$ in $\mathsf{C}$
the morphism $\mathfrak{F}(f): \mathfrak{F}(A_1)\to \mathfrak{F}(A_2)$ induces
a well-defined morphism between the quotients, since by the subfunctor properties
$\SubGr(A_1)$ is mapped to $\SubGr(A_2)$.
\end{proof}
It remains to provide explicit examples of quotientable subfunctors
of $\PhaseSpaceOff: G{-}\PrBu \to \PAG$. The following example is standard, since it describes within the terminology
developed above the quotient by the equation of motion.
\begin{propo}\label{propo:MWquotient}
Let $\PhaseSpaceOff:G{-}\PrBu \to \PAG$ be the functor constructed in Proposition \ref{propo:offshellfunctor}.
Then there exists a quotientable subfunctor $\MWfunc: G{-}\PrBu\to \PAG$ defined by associating
to any object $\Xi$ in $G{-}\PrBu$ the presymplectic
Abelian subgroup $\MWfunc(\Xi) := (\MW^\ast[\Omega^1_0(M,\g^\ast)], \tau) \subseteq \PhaseSpaceOff(\Xi)$, where
$\tau$ is given in  (\ref{eqn:taumap}).
\end{propo}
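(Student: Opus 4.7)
The proof reduces to verifying three items: (i) for each object $\Xi$ in $G{-}\PrBu$, the subset $\MW^\ast[\Omega^1_0(M,\g^\ast)]$ is an Abelian subgroup of $\EE^\mathrm{inv}$, hence a presymplectic Abelian subgroup of $\PhaseSpaceOff(\Xi)$ with the restricted pairing $\tau$; (ii) for each morphism $f:\Xi_1\to\Xi_2$ in $G{-}\PrBu$, the push-forward $\PhaseSpaceOff(f)=f_\ast$ sends $\MW^\ast_1[\Omega^1_0(M_1,\g^\ast)]$ into $\MW^\ast_2[\Omega^1_0(M_2,\g^\ast)]$, so that its restriction is a well-defined morphism in $\PAG$; and (iii) the subgroup $\MW^\ast[\Omega^1_0(M,\g^\ast)]$ lies inside the radical $\NN^0$ of $\PhaseSpaceOff(\Xi)$.

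Items (i) and (iii) are immediate consequences of Lemma \ref{lem:EOMN0}, which gives $\MW^\ast[\Omega^1_0(M,\g^\ast)] \subseteq \NN^0 \subseteq \EE^\mathrm{inv}$; the Abelian subgroup property is automatic since $\MW^\ast$ is $\bbR$-linear. Once (ii) is established, functoriality of $\MWfunc$ (preservation of identities and composition) is inherited from $\PhaseSpaceOff$ of Proposition \ref{propo:offshellfunctor}, because $\MWfunc(f)$ is defined simply as the restriction of $\PhaseSpaceOff(f)$.

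The key step is therefore (ii), and the approach is to prove the naturality identity $f_\ast\circ \MW^\ast_1 = \MW^\ast_2\circ\underline{f}_\ast$ from which (ii) is immediate. Two cross-checks suggest this: first, for any $\zeta\in \Omega^1_0(M_1,\g^\ast)$, using $\MW^\ast(\zeta)_V = -\delta\dd\zeta$ from the proof of Lemma \ref{lem:EOM}, and using that $\underline{f}_\ast$ (the extension by zero for compactly supported forms along an orientation and time-orientation preserving isometric open embedding) commutes with both $\dd$ and $\delta$, one computes
\[
f_\ast\bigl(\MW^\ast_1(\zeta)\bigr)_V \;=\; \underline{f}_\ast(-\delta\dd\zeta) \;=\; -\delta\,\dd\,\underline{f}_\ast(\zeta) \;=\; \MW^\ast_2\bigl(\underline{f}_\ast(\zeta)\bigr)_V~,
\]
so the two affine functionals agree on linear parts. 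To promote this to equality in $\EE^\mathrm{kin}_2$, I would use the dual characterisation directly: Maxwell's operator is natural under pull-back of connections, i.e.\ $\MW_1\circ f^\ast = \underline{f}^\ast\circ \MW_2$ on $\sect{M_2}{\mathcal{C}(\Xi_2)}$, and pairing against $\zeta\in \Omega^1_0(M_1,\g^\ast)$ yields $\MW^\ast_1(\zeta)(f^\ast\lambda_2) = \MW^\ast_2(\underline{f}_\ast\zeta)(\lambda_2)$ for all $\lambda_2$, which by the very definition of the push-forward $f_\ast$ on $\EE^\mathrm{kin}$ is $f_\ast(\MW^\ast_1(\zeta)) = \MW^\ast_2(\underline{f}_\ast\zeta)$ in $\EE^\mathrm{kin}_2$.

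I do not anticipate real obstacles: all ingredients are already in place. The only mild subtlety is that agreement of linear parts alone is weaker than equality in $\EE^\mathrm{kin} = \sectn{M}{\mathcal{C}(\Xi)^\dagger}/\mathrm{Triv}_\bbZ$ (since affine observables also have a constant part), and this is why I prefer to argue directly via the formal-adjoint pairing rather than by comparing linear parts in isolation.
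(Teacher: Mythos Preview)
Your proposal is correct and follows essentially the same three-step structure as the paper's proof: the subgroup property, compatibility with push-forward, and containment in the radical via Lemma~\ref{lem:EOMN0}. The only difference is that for step (ii) the paper simply cites \cite[Theorem 5.5]{Benini:2013tra}, whereas you supply a self-contained argument using the naturality identity $f_\ast\circ\MW^\ast_1=\MW^\ast_2\circ\underline{f}_\ast$ derived from $\MW_1\circ f^\ast=\underline{f}^\ast\circ\MW_2$; your observation that agreement of linear parts alone is insufficient, and that one should instead argue via the formal-adjoint pairing, is well taken and makes the argument cleaner.
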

\begin{proof}
Since $(\MW^\ast[\Omega^1_0(M,\g^\ast)], \tau)$ is clearly a presymplectic Abelian subgroup
of $\PhaseSpaceOff(\Xi)$, for all objects $\Xi$ in $G{-}\PrBu$, it remains to check
if the morphism $\PhaseSpaceOff(f)$ defined in (\ref{eqn:funcmorph}) induces a morphism
between $\MW_1^\ast[\Omega^1_0(M_1,\g^\ast)]$ and $\MW_2^\ast[\Omega^1_0(M_2,\g^\ast)]$.
This was shown in the proof of \cite[Theorem 5.5]{Benini:2013tra}, hence $\MWfunc$ is a subfunctor
of $\PhaseSpaceOff$. It is a quotientable subfunctor, since by Lemma \ref{lem:EOMN0} 
all $\MWfunc(\Xi)$ are presymplectic Abelian subgroups of the radical in $\PhaseSpaceOff(\Xi)$ 
(this also implies that $\tau$ restricted to $\MW^\ast[\Omega^1_0(M,\g^\ast)]$ is the trivial presymplectic
structure).
\end{proof}

Using Proposition \ref{propo:quotientphasespace} we construct a covariant
functor 
\begin{flalign}
\PhaseSpace := \PhaseSpaceOff/\MWfunc : G{-}\PrBu \to \PAG~.
\end{flalign}
This functor describes exactly the gauge invariant {\it on-shell} presymplectic Abelian groups, i.e.~for any object $\Xi$
in $G{-}\PrBu$ we have $\PhaseSpace(\Xi) = (\EE^\mathrm{inv}/\MW^\ast[\Omega^1_0(M,\g^\ast)],\tau)$
with $\EE^\mathrm{inv}$ given in (\ref{eqn:Einv}) and $\tau$ in (\ref{eqn:taumap}).
Of course, the functor $\PhaseSpace$ restricts to the full subcategories $G{-}\PrBu^{(m)}$ and
$G{-}\PrBu^{(m)}_0$, for any $m\geq 2$.
The functor $\PhaseSpace$ has many desired properties of a locally covariant field theory,
namely the causality property and the time-slice axiom, which can be shown using the same arguments
as in the proof of \cite[Theorem 5.7 and Theorem 5.8]{Benini:2013tra}.
However, it does not satisfy the locality property.
\begin{defi}
Let $\mathsf{C}$ be any category.
A covariant functor $\mathfrak{F}:\mathsf{C}\to \PAG$ is said to satisfy the {\bf locality property}, if
it is a functor to the subcategory $\PAG^\mathrm{inj}$ where all morphisms are injective,
cf.~Definition \ref{defi:PAG}.
\end{defi}

In order to prove that the functor $\PhaseSpace$ does not satisfy the locality property
we shall need some technical tools. Even though the next steps are rather abstract,
taking this burden will pay off, since we can eventually rephrase the injectivity of the morphism
$\PhaseSpace(f)$ in $\PAG$ in terms of an injectivity condition on a certain induced morphism of cohomology groups.

Let us define the following covariant functor $H^2_{0\,\mathrm{dR}}: G{-}\PrBu \to \Vec$ to the category of real vector spaces:
To any object $\Xi$ in $G{-}\PrBu$ we associate the cohomology group 
$H^2_{0\,\mathrm{dR}}(\Xi):= H^2_{0\,\mathrm{dR}}(M,\g^\ast)$ 
of the spacetime $M$ underlying $\Xi$. To any morphism $f:\Xi_1\to \Xi_2$ in $G{-}\PrBu$ we associate
 the linear map
\begin{flalign}\label{eqn:tmpcohomologymap}
H^2_{0\,\mathrm{dR}}(f) :  H^2_{0\,\mathrm{dR}}(M_1,\g^\ast)
\to H^2_{0\,\mathrm{dR}}(M_2,\g^\ast)  ~,~~[\zeta]\mapsto  [\underline{f}_\ast(\zeta)]~,
\end{flalign}
where $\underline{f}_\ast$ is the push-forward along the induced map $\underline{f} : M_1\to M_2$.
We shall compose both, the functor $\PhaseSpace$ and the functor $H^2_{0\,\mathrm{dR}}$, with the forgetful
functor to the category of Abelian groups $\mathsf{AG}$ and denote the resulting functor with a  slight abuse of
notation by the same symbol. We observe
\begin{lem}\label{lem:nattrafocoho}
Let us define for every object $\Xi$ in $G{-}\PrBu$ a morphism of Abelian groups
\begin{flalign}\label{eqn:tmpcomap2}
\iota_\Xi : H^2_{0\,\mathrm{dR}}(\Xi) \to \PhaseSpace(\Xi)~,~~[\zeta] \mapsto \big[\underline{\mathcal{F}}^\ast(\zeta)\big]~.
\end{flalign}
Then the collection $\iota = \{\iota_\Xi\} : H^2_{0\,\mathrm{dR}}\Rightarrow \PhaseSpace$ defines a natural transformation
between the two covariant functors $\PhaseSpace, H^2_{0\,\mathrm{dR}}:G{-}\PrBu\to\mathsf{AG}$.
Furthermore, each $\iota_\Xi$ is injective.
\end{lem}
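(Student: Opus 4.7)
The plan is to verify (i) well-definedness of $\iota_\Xi$ on cohomology classes, (ii) naturality of the collection $\iota=\{\iota_\Xi\}$, and (iii) injectivity of each $\iota_\Xi$. The key ingredients are the identity $\underline{\mathcal{F}}^\ast(\zeta)_V=-\delta\zeta$ together with the factorisation $\MW^\ast=\underline{\mathcal{F}}^\ast\circ\dd$ already exploited in Section \ref{sec:functionals}.

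For well-definedness, I would first check that $\underline{\mathcal{F}}^\ast(\zeta)$ lies in $\EE^\mathrm{inv}$ for any closed $\zeta\in\Omega^2_{0,\dd}(M,\g^\ast)$: its linear part $-\delta\zeta$ is automatically co-closed and represents the zero class in $H^1_{0\,\mathrm{dR}^\ast}(M,\g^\ast)$, hence a fortiori belongs to the integral sublattice $H^1_{0\,\mathrm{dR}^\ast}(M,\g^\ast)_\bbZ$, so Proposition \ref{propo:gaugeinv} applies. A shift of representative $\zeta\mapsto\zeta+\dd\alpha$ with $\alpha\in\Omega^1_0(M,\g^\ast)$ changes $\underline{\mathcal{F}}^\ast(\zeta)$ by $\MW^\ast(\alpha)$, which is exactly what is quotiented out in $\PhaseSpace(\Xi)$. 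Naturality then amounts to the representative-level identity $f_\ast\circ\underline{\mathcal{F}}_1^\ast=\underline{\mathcal{F}}_2^\ast\circ\underline{f}_\ast$, which is the formal adjoint of the naturality of the curvature operator established in \cite{Benini:2013tra}; the forgetful functor to $\mathsf{AG}$ does not affect this.

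For injectivity, I would argue as follows. Assume $\iota_\Xi([\zeta])=0$, so there exists $\alpha\in\Omega^1_0(M,\g^\ast)$ with $\underline{\mathcal{F}}^\ast(\zeta)-\MW^\ast(\alpha)\in\mathrm{Triv}_\bbZ$ at the level of representatives in $\sectn{M}{\mathcal{C}(\Xi)^\dagger}$. Since elements of $\mathrm{Triv}_\bbZ$ have vanishing linear part, equating linear parts yields $\delta(\zeta-\dd\alpha)=0$; combined with $\dd\zeta=0$ (and $\dd\dd\alpha=0$) this gives $\square_{(2)}(\zeta-\dd\alpha)=0$. But $\zeta-\dd\alpha\in\Omega^2_0(M,\g^\ast)$ is compactly supported, and on a globally hyperbolic spacetime any compactly supported solution of a normally hyperbolic equation vanishes: applying the retarded Green's operator $G^+_{(2)}$ and using $G^+_{(2)}\circ\square_{(2)}=\id$ on $\Omega^2_0(M,\g^\ast)$ gives $\zeta-\dd\alpha=0$. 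Thus $\zeta=\dd\alpha$ with $\alpha$ compactly supported, i.e.\ $[\zeta]=0$ in $H^2_{0\,\mathrm{dR}}(M,\g^\ast)$.

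The main obstacle is essentially bookkeeping: one must take care that the quotient by $\mathrm{Triv}_\bbZ$ in the definition of $\EE^\mathrm{kin}$ only affects the constant part of affine sections, so that all linear-part computations descend cleanly to representatives. Once that is in hand, the injectivity step reduces to the familiar rigidity of compactly supported solutions of normally hyperbolic operators on globally hyperbolic spacetimes, and the naturality step reduces to the already-known naturality of the curvature.
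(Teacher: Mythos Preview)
Your proof is correct and follows essentially the same route as the paper's: well-definedness via $\underline{\mathcal{F}}^\ast(\zeta)_V=-\delta\zeta$ and $\underline{\mathcal{F}}^\ast\circ\dd=\MW^\ast$, naturality via the naturality of the curvature operator, and injectivity by comparing linear parts and concluding $\zeta=\dd\alpha$ from $\square_{(2)}(\zeta-\dd\alpha)=0$ with compact support. The paper states the injectivity step more tersely (``Applying $\dd$ and using that $\dd\zeta=0$ we obtain $\zeta=\dd\eta$''), while you spell out the normally hyperbolic argument and are more careful about the $\mathrm{Triv}_\bbZ$ quotient, but the substance is identical.
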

\begin{proof}
Let $\Xi$ be any object in $G{-}\PrBu$. Then the map (\ref{eqn:tmpcomap2}) is well-defined, 
since, for any $\zeta\in\Omega^2_{0,\dd}(M,\g^\ast)$, $\underline{\mathcal{F}}^\ast(\zeta)_V 
=-\delta \zeta \in \EE^\mathrm{inv}$, cf.\ (\ref{eqn:Einv}), 
and since, for any $\eta \in \Omega^1_0(M,\g^\ast)$, 
$\big[\underline{\mathcal{F}}^\ast(\dd\eta)\big] = \big[\MW^\ast(\eta)\big] =0$.
It is clearly a morphism of Abelian groups.
To show that $\iota_\Xi$ is injective, let us assume that $\big[\underline{\mathcal{F}}^\ast(\zeta)\big]=0$. Hence,
there exists an $\eta\in \Omega^1_0(M,\g^\ast)$, such that $\underline{\mathcal{F}}^\ast(\zeta)=\MW^\ast(\eta)$.
Taking the linear part gives $\delta\zeta = \delta\dd\eta$. Applying $\dd$ and
using that $\dd\zeta=0$ we obtain $\zeta=\dd\eta$, thus $[\zeta]=0$.

Let now $f:\Xi_1\to \Xi_2$ be any morphism in $G{-}\PrBu$. It remains to show that the diagram
\begin{flalign}\label{eqn:diagcomap}
\xymatrix{
\PhaseSpace(\Xi_1) \ar[rr]^-{\PhaseSpace(f)} && \PhaseSpace(\Xi_2)\\
H^2_{0\,\mathrm{dR}}(\Xi_1) \ar[u]^-{\iota_{\Xi_1}}\ar[rr]^-{H^2_{0\,\mathrm{dR}}(f)} && H^2_{0\,\mathrm{dR}}(\Xi_2) \ar[u]_-{\iota_{\Xi_2}}
}
\end{flalign}
commutes. This is a simple calculation, for all $[\zeta]\in H^2_{0\,\mathrm{dR}}(\Xi_1)$,
\begin{flalign}
\PhaseSpace(f)\big(\iota_{\Xi_1}\big([\zeta]\big)\big)
=\big[f_\ast\big(\underline{\mathcal{F}}_1^\ast(\zeta)\big)\big]
=\big[\underline{\mathcal{F}}_2^\ast\big(\underline{f}_\ast(\zeta)\big)\big]
=\iota_{\Xi_2}\big(H^2_{0\,\mathrm{dR}}(f)\big([\zeta]\big)\big)~,
\end{flalign}
where in the second equality we have used naturality of the curvature affine differential operator
 -- see \cite[Lemma 2.14]{Benini:2013tra} and the subsequent discussion.
\end{proof}
With this preparation we can give a simple characterization of the injectivity of the morphism
$\PhaseSpace(f)$ in terms of injectivity of $H^2_{0\,\mathrm{dR}}(f)$.
\begin{theo}\label{theo:diagcomap}
 Let $f: \Xi_1\to\Xi_2$ be a morphism in $G{-}\PrBu$.
 Then $\PhaseSpace(f): \PhaseSpace(\Xi_1)\to\PhaseSpace(\Xi_2)$ is injective if and only if
 $H^2_{0\,\mathrm{dR}} (f): H^2_{0\,\mathrm{dR}}(\Xi_1)\to H^2_{0\,\mathrm{dR}}(\Xi_2)$
 is injective. In other words, $\PhaseSpace(f): \PhaseSpace(\Xi_1)\to\PhaseSpace(\Xi_2)$ is not injective if and only if
  $H^2_{0\,\mathrm{dR}} (f): H^2_{0\,\mathrm{dR}}(\Xi_1)\to H^2_{0\,\mathrm{dR}}(\Xi_2)$
  is not injective. 
\end{theo}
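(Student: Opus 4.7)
The ``only if'' direction is immediate from the commutative diagram of Lemma \ref{lem:nattrafocoho}: if $\PhaseSpace(f)$ is injective, then so is $\PhaseSpace(f) \circ \iota_{\Xi_1} = \iota_{\Xi_2} \circ H^2_{0\,\mathrm{dR}}(f)$, and since $\iota_{\Xi_2}$ is injective this forces $H^2_{0\,\mathrm{dR}}(f)$ to be injective.

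For the converse, the plan is to prove the stronger inclusion $\ker \PhaseSpace(f) \subseteq \iota_{\Xi_1}\big(\ker H^2_{0\,\mathrm{dR}}(f)\big)$; combined with injectivity of $\iota_{\Xi_1}$ this yields $\ker \PhaseSpace(f) = 0$ under the hypothesis on $H^2_{0\,\mathrm{dR}}(f)$. Pick $[\varphi] \in \ker \PhaseSpace(f)$ with representative $\varphi\in\EE^\mathrm{inv}$, so that $f_\ast(\varphi) = \underline{\mathcal{F}}_2^\ast(\dd\eta) + t$ in $\sectn{M_2}{\mathcal{C}(\Xi_2)^\dagger}$ for some $\eta \in \Omega^1_0(M_2,\g^\ast)$ and $t \in \mathrm{Triv}_\bbZ$. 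Reading off linear parts, $\underline{f}_\ast(\varphi_V) = -\delta\dd\eta$; applying $\dd$ gives $\underline{f}_\ast(\dd\varphi_V) = -\square_{(2)}\dd\eta$, so that $G_{(2)}(\underline{f}_\ast(\dd\varphi_V)) = 0$ on $M_2$. Causal compatibility of $\underline{f}$ allows $\underline{f}_\ast$ to be commuted past the causal propagator on $\underline{f}[M_1]$, giving $G_{(2)}(\dd\varphi_V) = 0$ on $M_1$ and hence $\dd\varphi_V = \square_{(2)}\omega$ for some $\omega \in \Omega^2_0(M_1,\g^\ast)$. Exploiting that compactly supported wave solutions on a globally hyperbolic spacetime must vanish, applied first to $\dd\omega$ and then to the closed, co-closed, compactly supported form $\varphi_V - \delta\omega$, one concludes $\dd\omega = 0$ and $\varphi_V = \delta\omega$. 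Setting $\zeta := -\omega \in \Omega^2_{0,\dd}(M_1,\g^\ast)$, the affine functional $\varphi - \underline{\mathcal{F}}_1^\ast(\zeta)$ has vanishing linear part and is therefore represented by $c\,\1$ in $\EE^\mathrm{kin}_1$ for some $c \in C^\infty_0(M_1)$.

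The final step handles the constant residue and the cohomology class of $\zeta$ simultaneously. Applying $f_\ast$ and using naturality of $\underline{\mathcal{F}}^\ast$, the form $\underline{f}_\ast(\zeta) - \dd\eta \in \Omega^2_0(M_2,\g^\ast)$ turns out to be closed, co-closed and compactly supported, hence vanishes by the same wave-vanishing argument, yielding $\underline{f}_\ast(\zeta) = \dd\eta$ and thus $[\zeta] \in \ker H^2_{0\,\mathrm{dR}}(f)$. Substituting back, $f_\ast(c\,\1) = t \in \mathrm{Triv}_\bbZ$ in $\sectn{M_2}{\mathcal{C}(\Xi_2)^\dagger}$, and since extension by zero along $\underline{f}$ preserves the total integral of $c$, one obtains $\int_{M_1} c\,\vol \in \bbZ$, i.e.\ $c\,\1 \in \mathrm{Triv}_\bbZ$. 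Hence $\varphi = \underline{\mathcal{F}}_1^\ast(\zeta)$ in $\EE^\mathrm{inv}_1$ and $[\varphi] = \iota_{\Xi_1}([\zeta])$, establishing the required inclusion. The main obstacle is precisely this last reduction: the geometric half of the argument, based on commutation of Green's operators under causally compatible embeddings and on vanishing of compactly supported wave solutions, produces the candidate $\zeta$ with the desired cohomological vanishing, but what does not come for free is showing that the residual class $c\,\1$ actually sits in $\mathrm{Triv}_\bbZ$, and it is here that the integrality condition defining $\mathrm{Triv}_\bbZ$ enters in an essential way via the integral-preservation property of $\underline{f}_\ast$.
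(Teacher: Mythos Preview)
Your argument is correct and reaches the same conclusion, but the route to the key structural fact---that any $[\varphi]\in\ker\PhaseSpace(f)$ has a representative with linear part $\varphi_V\in\delta\Omega^2_{0,\dd}(M_1,\g^\ast)$---differs from the paper's. The paper observes that $\ker\PhaseSpace(f)$ must lie in the radical $[\NN^0_1]$ of $\PhaseSpace(\Xi_1)$ (since $\PhaseSpace(f)$ preserves the presymplectic structure, any kernel element pairs trivially with everything), and then invokes the earlier characterization of $\NN^0_1$ (Proposition~\ref{propo:Nmin}) to extract the $\delta\dd\beta$ form of the linear part. You instead work directly with the equation $f_\ast(\varphi)=\MW_2^\ast(\eta)+t$, use naturality of the causal propagator under the causally compatible isometric embedding $\underline{f}$ to pull the vanishing of $G_{(2)}(\underline{f}_\ast(\dd\varphi_V))$ back to $M_1$, and then run two compact-support wave-vanishing arguments to produce the compactly supported closed $\zeta$ with $\varphi_V=-\delta\zeta$. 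From that point on (handling the residual $c\,\1$ via $\underline{f}_\ast(\zeta)=\dd\eta$ and integral preservation) the two arguments coincide.

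What each buys: the paper's route is more conceptual and reuses the already established Proposition~\ref{propo:Nmin}, at the cost of passing through timelike-compact forms. Your route is self-contained and avoids the radical/presymplectic detour, but it relies on the (standard, though not stated in the paper) compatibility $\underline{f}^\ast\circ G^{M_2}_{(k)}\circ\underline{f}_\ast=G^{M_1}_{(k)}$ for causally compatible embeddings. A minor point: in dimension $m=2$ the step ``$\square_{(3)}\dd\omega=0$'' is vacuous since $\Omega^3=0$, so $\dd\omega=0$ is automatic there; you may want to say this explicitly.
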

\begin{proof}
It is easier to show the negation of the statement, i.e.\ $\PhaseSpace(f): \PhaseSpace(\Xi_1)\to\PhaseSpace(\Xi_2)$ 
is not injective $\Leftrightarrow$  $H^2_{0\,\mathrm{dR}} (f): H^2_{0\,\mathrm{dR}}(\Xi_1)\to H^2_{0\,\mathrm{dR}}(\Xi_2)$ 
is not injective.

The direction ``$\Leftarrow$'' follows immediately from the commuting diagram in (\ref{eqn:diagcomap}). Indeed,
if $H^2_{0\,\mathrm{dR}} (f)$ is not injective, then the lower composition of morphisms is not injective
and by commutativity of the diagram also the upper composition of morphisms is not injective. Since
$\iota_{\Xi_1}$ is injective this implies that $\PhaseSpace(f)$ is not injective.

To show the direction ``$\Rightarrow$'' let us assume that $\PhaseSpace(f)$
is not injective. The kernel $\ker\big(\PhaseSpace(f)\big)$ is a subgroup of the radical $\big[\NN^0_1\big] := \NN^0_1/\MWfunc(\Xi_1)$
in $\PhaseSpace(\Xi_1)$ for the following reason: 
For any non-trivial element $0\neq [\psi]\in \ker\big(\PhaseSpace(f)\big)$
we have (by definition) that $\PhaseSpace(f)([\psi]) =0$ and hence, for all $[\varphi]\in \PhaseSpace(\Xi_2)$,
a vanishing presymplectic structure $\tau_2\big([\varphi],\PhaseSpace(f)([\psi])\big)=0$. Taking in particular
 $[\varphi] = \PhaseSpace(f)([\widetilde\varphi])$,
for $[\widetilde\varphi]\in \PhaseSpace(\Xi_1)$, and using that $\PhaseSpace(f)$ preserves the presymplectic
structures implies that $[\psi]$ lies in the radical $\big[\NN^0_1\big]$, which by Proposition \ref{propo:Nmin} implies
that for any representative $\psi$ we have $\psi_V = -\delta \dd\beta$ (the minus sign is purely conventional) for some
 $\beta \in \Omega^1_{\mathrm{tc}}(M_1,\g^\ast)$.
This in turn implies that there exists $a\in C_0^\infty(M_1)$, such that 
$\psi = \underline{\mathcal{F}}^\ast_1(\dd\beta) + a\,\1_1$,
and since $\PhaseSpace(f)([\psi])=0$, the push-forward $f_\ast(\psi)$ has to be of the form
$\MW_2^\ast(\eta)$ for some $\eta\in \Omega^1_0(M_2,\g^\ast)$, i.e.\ 
\begin{flalign}\label{eqn:tmp12345}
 \MW_2^\ast(\eta) = f_\ast(\psi) =\underline{\mathcal{F}}^\ast_2\big(\underline{f}_\ast(\dd\beta)\big) + \underline{f}_\ast(a) \,\1_2 ~.  
 \end{flalign}
 Taking the linear part of this equation, i.e.\ $-\delta\dd\eta = - \delta\underline{f}_\ast(\dd\beta)$,
 and applying $\dd$ to both sides leads to $\underline{f}_\ast(\dd\beta) = \dd\eta$ and hence, by plugging this back into 
 (\ref{eqn:tmp12345}), we obtain that $\underline{f}_\ast(a)\in\mathrm{Triv}_{\bbZ\,2}$, which is equivalent to
 $a\in \mathrm{Triv}_{\bbZ\,1}$. Hence, $\psi = \underline{\mathcal{F}}^\ast_1(\dd\beta)  $
 and as a consequence $[\psi]$ lies in the image of $\iota_{\Xi_1}$. This shows that the upper compositions
 of morphisms in (\ref{eqn:diagcomap}) is not injective, hence (as a consequence of the commutativity of the diagram) also
 the lower composition is not injective. As $\iota_{\Xi_2}$ is injective, the morphism 
 $H^2_{0\,\mathrm{dR}}(f)$ has to be non-injective, which proves our claim.
\end{proof}

\begin{propo}\label{propo:nonlocal}
The covariant functor  $\PhaseSpace  : G{-}\PrBu^{(m)} \to \PAG$ does not satisfy the locality property,
for any $m\geq 2$. Furthermore, the covariant functor $\PhaseSpace : G{-}\PrBu^{(m)}_0 \to \PAG$
(i.e.\ the restriction to connected spacetimes of dimension $m$) does not satisfy the locality property, for any $m\geq 3$.
\end{propo}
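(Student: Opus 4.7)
The key idea is to invoke Theorem \ref{theo:diagcomap}: to exhibit a morphism $f$ in $G{-}\PrBu^{(m)}$ (or in $G{-}\PrBu^{(m)}_0$) for which $\PhaseSpace(f)$ fails to be injective, it suffices to construct one for which the induced map $H^2_{0\,\mathrm{dR}}(f)$ on compactly supported second de Rham cohomology fails to be injective. So the whole task reduces to producing explicit geometric examples of bundle morphisms whose base maps are isometric causally compatible open embeddings $\underline{f}:M_1\hookrightarrow M_2$ such that the push-forward $\underline{f}_\ast:\Omega^2_0(M_1,\g^\ast)\to\Omega^2_0(M_2,\g^\ast)$ has non-trivial kernel on cohomology. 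Both principal bundles can be taken trivial, so that the bundle map $f$ is just a lift of $\underline{f}$ along $\id_G$.

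For $m\geq 3$ (the relevant case for $G{-}\PrBu^{(m)}_0$, which also proves the statement for $G{-}\PrBu^{(m)}$), I would take $M_2$ to be $m$-dimensional Minkowski spacetime and $M_1\subset M_2$ the open globally hyperbolic subregion obtained by removing a timelike inextensible geodesic (the worldline of a static point). Its Cauchy surface is $\bbR^{m-1}\setminus\{0\}\simeq S^{m-2}$, and the inclusion is manifestly an orientation/time-orientation preserving, causally compatible open isometric embedding. By Poincar\'e duality,
\begin{flalign*}
H^2_{0\,\mathrm{dR}}(M_1,\g^\ast)\;\simeq\;H^{m-2}_\mathrm{dR}(M_1,\g)^\ast\;\simeq\;H^{m-2}(S^{m-2},\g)^\ast\;\simeq\;\g^\ast,
\end{flalign*}
while $H^2_{0\,\mathrm{dR}}(M_2,\g^\ast)\simeq H^{m-2}_\mathrm{dR}(\bbR^m,\g)^\ast=0$ since $m-2\geq 1$. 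Hence $H^2_{0\,\mathrm{dR}}(f)$ is the zero map from a non-trivial group and is in particular not injective. Applying Theorem \ref{theo:diagcomap} concludes the proof in this case.

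For $m=2$ (only needed in $G{-}\PrBu^{(m)}$, since the connected case $G{-}\PrBu^{(2)}_0$ is excluded from the statement), connectedness of $M$ forces $H^2_{0\,\mathrm{dR}}(M,\g^\ast)\simeq\g^\ast$ via integration, so one must use disconnected source. I would take $M_1$ to be the disjoint union of two copies of $2$-dimensional Minkowski spacetime and $M_2$ a single Minkowski plane, choosing the isometric embedding so that the two copies land in two disjoint causally compatible open diamonds inside $M_2$. Then $H^2_{0\,\mathrm{dR}}(M_1,\g^\ast)\simeq\g^\ast\oplus\g^\ast$ and, through integration, $\underline{f}_\ast$ descends to the addition map $(\alpha,\beta)\mapsto\alpha+\beta$; its kernel contains $(\alpha,-\alpha)$ for any $0\neq\alpha\in\g^\ast$. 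Again Theorem \ref{theo:diagcomap} yields the non-injectivity of $\PhaseSpace(f)$.

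The main technical obstacle I expect is the verification that the chosen $M_1$ (for $m\geq 3$) is indeed globally hyperbolic and causally compatible in $M_2$; this is standard but needs to be stated. The rest is a bookkeeping exercise with Poincar\'e duality and the naturality of the push-forward, both of which are already encoded in the functor $H^2_{0\,\mathrm{dR}}$ defined before Theorem \ref{theo:diagcomap}.
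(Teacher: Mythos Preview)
Your overall strategy---reduce via Theorem~\ref{theo:diagcomap} to a dimension count on $H^2_{0\,\mathrm{dR}}$---is exactly the paper's, but your specific geometric constructions fail.

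For $m\geq 3$, Minkowski space minus a timelike geodesic is \emph{neither} globally hyperbolic \emph{nor} causally compatible, so it is not an object/morphism in $G{-}\PrBu^{(m)}$. Global hyperbolicity fails because the null ray $\gamma(t)=(t,tv)$, $t>0$, with $v$ a unit spatial vector, is an inextensible causal curve in $M_1$ (it cannot be continued past $t=0$ since $(0,0)$ lies on the removed line) yet never meets the slice $\{t=0\}\cap M_1$; hence $\{0\}\times(\bbR^{m-1}\setminus\{0\})$ is not a Cauchy surface. Causal compatibility fails for the same reason: the points $x=(0,v)$ and $y=(2,-v)$ are null-related in $M_2$, but the \emph{unique} causal curve joining them in Minkowski is the null geodesic through $(1,0)$, which lies on the removed line; so $y\in J^+_{M_2}(x)\cap M_1$ while $y\notin J^+_{M_1}(x)$. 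The paper avoids both problems by removing instead the full causal set $J_{M_2}(\{0\})$ of a point; the complement is then causally compatible by construction, globally hyperbolic, and (for $m\geq 3$) connected and diffeomorphic to $\bbR^2\times\mathbb{S}^{m-2}$, which gives the same cohomological count you were after.

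For $m=2$, your idea is right but the stated construction is impossible: a full copy of $2$-dimensional Minkowski space cannot be isometrically embedded as a proper open subset of another, let alone into a bounded diamond. You should take $M_1$ to be two disjoint \emph{diamonds} (or any two disjoint causally compatible globally hyperbolic open subsets of $M_2$); then $H^2_{0\,\mathrm{dR}}(M_1,\g^\ast)\simeq\g^\ast\oplus\g^\ast$ and your addition-map argument goes through. The paper in fact uses the single example $M_1=M_2\setminus J_{M_2}(\{0\})$, which for $m=2$ already has two connected components and hence $H^2_{0\,\mathrm{dR}}(M_1,\g^\ast)\simeq\bbR^2$.
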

\begin{rem}
This proposition implies that also the covariant functor 
$\PhaseSpace : G{-}\PrBu \to \PAG$ on the 
category $G{-}\PrBu$ does not satisfy the locality property.
\end{rem}
\begin{proof}
By Theorem \ref{theo:diagcomap}, it is enough to construct examples of morphisms $f:\Xi_1\to\Xi_2$ 
in $G{-}\PrBu^{(m)}$ (for $m\geq 2$) and also in $G{-}\PrBu^{(m)}_0$ (for $m\geq 3$),
such that $H^2_{0\,\mathrm{dR}}(f): H^2_{0\,\mathrm{dR}}(M_1,\g^\ast) \to H^2_{0\,\mathrm{dR}}(M_2,\g^\ast) $
is not injective. A sufficient condition for this property is that
the vector space dimension of $H^2_{0\,\mathrm{dR}}(M_1,\g^\ast) $ is greater than the dimension of
$H^2_{0\,\mathrm{dR}}(M_2,\g^\ast) $. This is achieved by the following construction:
For $m\geq 2$, let $\Xi_2$ be any object in $G{-}\PrBu^{(m)}$, such that $(M_2,\o_2,g_2,\t_2)$ 
is the $m$-dimensional Minkowski spacetime. Let us further denote by $\Xi_1$ the object in $G{-}\PrBu^{(m)}$
that is obtained by restricting all geometric data of $\Xi_2$ to the causally compatible and globally hyperbolic
open subset $M_1:= M_2\setminus J_{M_2}(\{0\})$, where $0\in M_2$ is some point in $M_2$. The canonical
embedding of $M_1$ into $M_2$ provides us with a morphism $f:\Xi_1\to\Xi_2$ in $G{-}\PrBu^{(m)}$.
Notice that for $m\geq 3$ this $f$ is also a morphism in $G{-}\PrBu^{(m)}_0$, as both, $M_1$ and $M_2$, 
are connected in this case.
Let us study the dimension of the de Rham cohomology groups $H^2_{0\,\mathrm{dR}}(M_1,\g^\ast)$
and $H^2_{0\,\mathrm{dR}}(M_2,\g^\ast)$. In dimension $m=2$ we have that $H^2_{0\,\mathrm{dR}}(M_1,\g^\ast)\simeq \bbR^2$
and $H^2_{0\,\mathrm{dR}}(M_2,\g^\ast)\simeq \bbR$, since $M_1$ consists of two disconnected components
and $M_2$ is connected. Hence, the morphism 
$H^2_{0\,\mathrm{dR}}(f)$ can not be injective and by Theorem \ref{theo:diagcomap} we find that $\PhaseSpace(f)$ is not injective
for this choice of morphism $f:\Xi_1\to\Xi_2$. In dimension $m>2$ we have that 
$H^2_{0\,\mathrm{dR}}(M_1,\g^\ast)\simeq \bbR$ (since $M_1$ is diffeomorphic to $\bbR^2\times \mathbb{S}^{m-2}$
with $\mathbb{S}^{m-2}$ denoting the $m-2$-sphere) and $H^2_{0\,\mathrm{dR}}(M_2,\g^\ast)=\{0\}$, thus again
$\PhaseSpace(f)$ is not injective. 
\end{proof}
This proposition shows that the usual on-shell functor $\PhaseSpace : G{-}\PrBu\to \PAG$,
as well as its restriction to the full subcategories $G{-}\PrBu^{(m)}$ (for $m\geq 2$) and
$G{-}\PrBu^{(m)}_0$ (for $m\geq 3$), is not locally covariant in the sense of \cite{Brunetti:2001dx}.
An exceptional case is given for the full subcategory $G{-}\PrBu^{(2)}_0$, where each spacetime
is two-dimensional and connected.  In this case local covariance in the sense of \cite{Brunetti:2001dx} holds true.
\begin{theo}
The covariant functor $\PhaseSpace: G{-}\PrBu^{(2)}_0 \to \PAG$ satisfies the locality property. 
\end{theo}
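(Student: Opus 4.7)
The plan is to invoke Theorem~\ref{theo:diagcomap}, which reduces the locality property for $\PhaseSpace$ to the injectivity of the induced morphism $H^2_{0\,\mathrm{dR}}(f) : H^2_{0\,\mathrm{dR}}(M_1,\g^\ast) \to H^2_{0\,\mathrm{dR}}(M_2,\g^\ast)$ for every morphism $f:\Xi_1\to\Xi_2$ in $G{-}\PrBu^{(2)}_0$. Thus it suffices to understand the structure of $H^2_{0\,\mathrm{dR}}(M,\g^\ast)$ and of push-forward for orientation-preserving causally compatible embeddings of connected oriented $2$-manifolds.

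First I would identify $H^2_{0\,\mathrm{dR}}(M,\g^\ast)$ explicitly: since $M$ is two-dimensional, $\Omega^2_0(M,\g^\ast)$ is the space of compactly supported top-degree forms, so every $\zeta \in \Omega^2_0(M,\g^\ast)$ is automatically closed, and $H^2_{0\,\mathrm{dR}}(M,\g^\ast)$ is just the top compactly supported de Rham cohomology. For a connected, oriented, finite-type $m$-manifold the integration map
\begin{flalign}
I_M : H^m_{0\,\mathrm{dR}}(M,\g^\ast) \to \g^\ast ~,~~ [\zeta] \mapsto \int_M \zeta
\end{flalign}
is an isomorphism (this is the standard Poincar\'e duality computation; Stokes' theorem makes it well-defined, and connectedness plus orientability make it both surjective and injective). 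In particular, for any object $\Xi$ in $G{-}\PrBu^{(2)}_0$ one has $H^2_{0\,\mathrm{dR}}(\Xi) \simeq \g^\ast$ via $I_M$.

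Next I would show that under these identifications the morphism $H^2_{0\,\mathrm{dR}}(f)$ corresponds to the identity on $\g^\ast$. For a morphism $f:\Xi_1\to\Xi_2$ in $G{-}\PrBu^{(2)}_0$ the induced map $\underline{f}:M_1\to M_2$ is an orientation-preserving open embedding with causally compatible image, and $\underline{f}_\ast(\zeta)$ is simply the extension of $\zeta\in\Omega^2_0(M_1,\g^\ast)$ by zero to $M_2$. Since $\underline{f}$ is orientation-preserving one has
\begin{flalign}
I_{M_2}\big(H^2_{0\,\mathrm{dR}}(f)[\zeta]\big) \;=\; \int_{M_2} \underline{f}_\ast(\zeta) \;=\; \int_{M_1}\zeta \;=\; I_{M_1}([\zeta])~,
\end{flalign}
so the diagram
\begin{flalign}
\xymatrix{
H^2_{0\,\mathrm{dR}}(\Xi_1)\ar[rr]^-{H^2_{0\,\mathrm{dR}}(f)}\ar[d]_-{I_{M_1}}^-{\simeq} && H^2_{0\,\mathrm{dR}}(\Xi_2)\ar[d]^-{I_{M_2}}_-{\simeq}\\
\g^\ast \ar@{=}[rr] && \g^\ast
}
\end{flalign}
commutes. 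Hence $H^2_{0\,\mathrm{dR}}(f)$ is an isomorphism, and in particular injective. By Theorem~\ref{theo:diagcomap}, $\PhaseSpace(f)$ is injective for every morphism $f$ in $G{-}\PrBu^{(2)}_0$, which is the locality property.

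The only delicate point I foresee is the careful verification that the Poincar\'e duality isomorphism $I_M$ is available under the standing hypotheses, namely that $M$ is a connected, oriented, finite-type (hence second-countable Hausdorff) smooth manifold; this is a purely topological fact independent of the Lorentzian structure, and is guaranteed by the assumptions on objects of $G{-}\PrBu$. Everything else is routine.
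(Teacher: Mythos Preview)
Your proof is correct and follows essentially the same route as the paper: both reduce to Theorem~\ref{theo:diagcomap} and then verify injectivity of $H^2_{0\,\mathrm{dR}}(f)$ by identifying the top compactly supported cohomology with $\g^\ast$ (equivalently $\bbR$) via integration and observing that $\int_{M_2}\underline{f}_\ast(\zeta)=\int_{M_1}\zeta$. Your version is somewhat more explicit in spelling out the naturality diagram, but the argument is the same.
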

\begin{proof}
Let $f:\Xi_1\to\Xi_2$ be any morphism in $G{-}\PrBu^{(2)}_0$ and let us denote by
$H^2_{0\,\mathrm{dR}}(f): H^2_{0\,\mathrm{dR}}(M_1,\g^\ast)\to H^2_{0\,\mathrm{dR}}(M_2,\g^\ast)$ 
the induced morphism of cohomology groups. 
Since by assumption $M_1$ and $M_2$ are connected, both $H^2_{0\,\mathrm{dR}}(M_1,\g^\ast)$ 
and $H^2_{0\,\mathrm{dR}}(M_2,\g^\ast)$ are isomorphic to $\mathbb{R}$ by Poincar\'e duality. 
Let $[\zeta]\in H^2_{0\,\mathrm{dR}}(M_1,\g^\ast)$ be a non-trivial cohomology class. 
Then $H^2_{0\,\mathrm{dR}}(f)([\zeta])=[\underline{f}_\ast (\zeta) ]$ is not trivial, since 
$\int_{M_2}\underline{f}_\ast(\zeta) =\int_{M_1}\zeta\neq 0$. 
In other words, $H^2_{0\,\mathrm{dR}}(f)$ is injective and by 
Theorem \ref{theo:diagcomap} also $\PhaseSpace(f)$. This completes the proof.
\end{proof}

In order to circumvent the violation of the locality property of the functor $\PhaseSpace$ acting on the categories
$G{-}\PrBu^{(m)}$ (with $m\geq 2$) or $G{-}\PrBu^{(m)}_0$ (with $m\geq 3$), 
there remains the possibility of taking further quotients of the functor $\PhaseSpace$ by quotientable subfunctors,
cf.~\cite[Section 7]{Benini:2013tra} for a similar strategy.
However, this turns out to be impossible due to the following
\begin{theo}\label{theo:noinjectivequotient}
For any $m\geq 2$ there exists no quotientable subfunctor  $\SubGr$ of $\PhaseSpace: G{-}\PrBu^{(m)} \to \PAG$, such that
$\PhaseSpace/\SubGr: G{-}\PrBu^{(m)}\to \PAG$ satisfies the locality property.
Furthermore, for any $m\geq 3$ there exists no quotientable subfunctor $\SubGr$ of
$\PhaseSpace:G{-}\PrBu^{(m)}_0\to \PAG$, such that $\PhaseSpace/\SubGr: G{-}\PrBu^{(m)}_0 \to \PAG$ satisfies the locality property.
\end{theo}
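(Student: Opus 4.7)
The strategy is proof by contradiction. Suppose that $\SubGr$ is a quotientable subfunctor of $\PhaseSpace$ on the relevant category such that $\PhaseSpace/\SubGr$ satisfies the locality property. I shall exhibit a morphism and an element of its kernel which must belong to $\SubGr$, yet which cannot, because pushing it forward along a suitably chosen second morphism lands outside the radical of the target.

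First, I reuse the morphism $f:\Xi_1\to\Xi_2$ from the proof of Proposition \ref{propo:nonlocal}, where $M_2$ is $m$-dimensional Minkowski spacetime and $M_1 := M_2\setminus J_{M_2}(\{0\})$; this morphism sits in $G{-}\PrBu^{(m)}$ for any $m\geq 2$ and also in $G{-}\PrBu^{(m)}_0$ for any $m\geq 3$. By that proof $H^2_{0\,\mathrm{dR}}(f)$ has a non-trivial kernel, so I fix a class $[\zeta_1]\in H^2_{0\,\mathrm{dR}}(\Xi_1)\setminus\{0\}$ with $H^2_{0\,\mathrm{dR}}(f)([\zeta_1])=0$. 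Lemma \ref{lem:nattrafocoho} then provides the non-zero element $\iota_{\Xi_1}([\zeta_1])\in \PhaseSpace(\Xi_1)$, and the commutativity of the naturality square $(\ref{eqn:diagcomap})$ places it inside $\ker(\PhaseSpace(f))$. Injectivity of the induced map $(\PhaseSpace/\SubGr)(f)$ in $\PAG$ therefore forces $\iota_{\Xi_1}([\zeta_1])\in\SubGr(\Xi_1)$.

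The crux of the argument is then to construct a second morphism $g:\Xi_1\to\Xi_3$ in the same category for which the pushed-forward cohomology class $H^2_{0\,\mathrm{dR}}(g)([\zeta_1])$ is non-zero and such that $\iota_{\Xi_3}\bigl(H^2_{0\,\mathrm{dR}}(g)([\zeta_1])\bigr) = \PhaseSpace(g)\bigl(\iota_{\Xi_1}([\zeta_1])\bigr)$, where the equality uses the naturality of $\iota$ once more, does \emph{not} belong to the radical of $\PhaseSpace(\Xi_3)$. Granted such a $g$, the subfunctor property of $\SubGr$ would place this non-radical element inside $\SubGr(\Xi_3)$, whereas quotientability demands that $\SubGr(\Xi_3)$ be contained in the radical; this is the desired contradiction.

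The main obstacle is therefore the explicit construction of $(\Xi_3,g)$. Applying Proposition \ref{propo:Nmin} together with the commutation $G_{(1)}\circ\delta = \delta\circ G_{(2)}$ from $(\ref{eqn:commuteddeltaG})$, the non-radicality requirement reduces to showing that the closed one-form $h^{-1}\bigl(\delta G_{(2)}(\underline{g}_\ast(\zeta_1))\bigr)$ has de Rham class pairing non-trivially, via $\langle\cdot,\cdot\rangle$, with some integer Aharonov--Bohm class in $H^1_{0\,\mathrm{dR}^\ast}(M_3,\g^\ast)_\bbZ$; equivalently, a suitable Aharonov--Bohm loop observable in $M_3$ must detect the pushed-forward magnetic flux carried by $[\zeta_1]$. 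Accordingly one chooses $\Xi_3$ so that $M_3$ is a globally hyperbolic $m$-dimensional spacetime containing $M_1$ as a causally compatible open subregion and whose Cauchy surfaces carry a $1$-cycle linking the image of the removed causal set $J_{M_2}(\{0\})$. Poincar\'e duality on $M_3$, combined with a direct computation involving the causal propagator $G_{(2)}$, then delivers the required non-trivial pairing. The topological construction of $M_3$ is essentially a matter of introducing such a linking $1$-cycle while staying in the ambient category, and has to be carried out separately in the connected case $m\geq 3$ and the non-connected case $m=2$.
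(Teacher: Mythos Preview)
Your overall architecture matches the paper's: exhibit a span of morphisms out of a common source, so that injectivity along one arrow forces an element into $\SubGr$ while the other arrow carries that element outside the radical. Where you diverge is in the choice of source. You reuse $M_1 = M_2\setminus J_{M_2}(\{0\})$ from Proposition~\ref{propo:nonlocal} and then ask for a further embedding $g:\Xi_1\to\Xi_3$. The paper instead starts from a \emph{small} diamond $M_3$ (the Cauchy development of an interval times a sphere) and embeds it into two targets: a Minkowski-type spacetime, and a product $\bbR\times\mathbb{S}^1\times\mathbb{S}^{m-2}$ with \emph{compact} Cauchy surfaces. The compactness is the decisive trick: on such a target $\Omega^1_\mathrm{tc}=\Omega^1_0$, so by Proposition~\ref{propo:Nmin} the radical reduces to $\{[\psi]:\psi_V\in\delta\dd\Omega^1_0\}$, and non-radicality of $\iota_{\Xi_1}([\underline{f_1}_\ast(\zeta)])$ becomes the purely cohomological statement $[\underline{f_1}_\ast(\zeta)]\neq 0$ in $H^2_{0\,\mathrm{dR}}$. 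No causal propagator ever needs to be computed.

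Your route, by contrast, leaves the two hardest steps entirely open. First, you assert the existence of a globally hyperbolic $M_3$ containing your $M_1$ as a causally compatible open subset and carrying a suitable $1$-cycle, but give no construction; note that $M_1$ has non-compact Cauchy surfaces $\bbR^{m-1}\setminus\{0\}$, so you cannot simply embed it into a spacetime with compact Cauchy surfaces, and the metric extension has to be engineered by hand. Second, even granting such an $M_3$, the non-radicality check you need is that $[h^{-1}(\delta G_{(2)}(\underline{g}_\ast(\zeta_1)))]\neq 0$ in $H^1_\mathrm{dR}(M_3,\g)$, which depends on the causal propagator of the as-yet-unspecified metric on $M_3$; the phrase ``a direct computation involving the causal propagator $G_{(2)}$'' is not a proof. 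Recall also from the proof of Theorem~\ref{theo:diagcomap} that the kernel of $\PhaseSpace(f)$ always lies in the radical of the \emph{source}, so your element $\iota_{\Xi_1}([\zeta_1])$ is already radical in $\PhaseSpace(\Xi_1)$; it is only the push-forward to $M_3$ that could fail to be radical, and this genuinely requires control of the propagator on $M_3$. Until you either carry out that computation on an explicit $M_3$ or switch to a target with compact Cauchy surfaces (which forces you to change the source as the paper does), the argument is incomplete.
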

\begin{rem}
This theorem of course implies that there exists no quotientable subfunctor  $\SubGr$ of $\PhaseSpace: G{-}\PrBu \to \PAG$, such that
$\PhaseSpace/\SubGr: G{-}\PrBu\to \PAG$ satisfies the locality property.
\end{rem}
\begin{proof}
The strategy for the proof is as follows: We will construct 
two morphisms $f_j : \Xi_3\to \Xi_j$, $j=1,2$, in $G{-}\PrBu^{(m)}$ (for $m\geq 2$) and $G{-}\PrBu^{(m)}_0$  (for $m\geq 3$),
 such that injectivity  of $\PhaseSpace(f_2)$ on the quotients requires $\SubGr(\Xi_3)$ in such a way that
$\PhaseSpace(f_1)$ is not well-defined on the quotients. The following diagram visualizes the envisaged setting:
\begin{flalign}
\xymatrix{
 \Xi_1 & & \Xi_2\\
 & \ar[lu]^-{f_1}\Xi_3 \ar[ru]_-{f_2}&
}
\end{flalign}

Let us first focus on the case $m\geq 3$ and consider $M_1 := \bbR \times \mathbb{S}^1\times \mathbb{S}^{m-2}$ 
equipped with the canonical Lorentzian metric
$g_1 := -dt\otimes dt + d\phi\otimes d\phi + g_{\mathbb{S}^{m-2}}$, where $t$ is a time coordinate
on $\bbR$,  $\phi$ is an angle coordinate on the circle $\mathbb{S}^{1}$ and $g_{\mathbb{S}^{m-2}}$ is the 
standard Riemannian metric on the unit $m-2$-sphere $\mathbb{S}^{ m-2}$.
Consider further $M_2:= \bbR^m$ equipped with the Lorentzian metric
$g_2:= -dt\otimes dt + \alpha(r^2)\,\sum_{i=1}^{m-1} dx^i\otimes dx^i
+ \beta(r^2)\,dr\otimes dr$, where $x^i$ are Cartesian coordinates, $r=\sqrt{\sum_{i=1}^{m-1}x^i\,x^i}$ is the radius,
$\alpha : \bbR \to\bbR$ is a strictly positive smooth function, such that $\alpha(\xi) = 1$ for $\xi<1$
and $\alpha(\xi) = \xi^{-1}$ for $\xi>4$, and $\beta :\bbR \to \bbR$ is a positive
smooth function, such that $\beta(\xi)=0$ for  $\xi<1$ and $\beta(\xi) = 1- \xi^{-1}$ for $\xi>4$.
Notice that, for $r^2<1$, $g_2 = -dt\otimes dt +\sum_{i=1}^{m-1} dx^i\otimes dx^i $ is the Minkowski metric 
 and that, for $r^2>4$, $g_2 = -dt\otimes dt + dr\otimes dr + g_{\mathbb{S}^{m-2}}$ formally looks like
 $g_1$.

 Define $M_3$ as the Cauchy development 
 of $\{0\} \times I \times \mathbb{S}^{m-2}$ in $(M_1,g_1)$,
 where $I$ is some open interval in $(0,2\pi)$. Notice that $(M_3,g_3 := g_1\vert_{M_3})$ is a causally compatible 
 and globally hyperbolic open subset of $(M_1,g_1)$. We denote by $\underline{f_1}:(M_3,g_3) \to (M_1,g_1)$ the isometric embedding.
 Furthermore, there exists  an isometric embedding $\underline{f_2}:(M_3,g_3) \to (M_2,g_2)$ into the 
 subset of $M_2$ specified by $r^2>4$, such that
 the image is causally compatible and open.\footnote{
 Explicitly, we can define an isometric embedding $M_1\supset \bbR \times I \times \mathbb{S}^{m-2}\to M_2$
 by $(t,\phi,p) \mapsto (t, 2+\phi,p)$, where $p$ is a point on the sphere $\mathbb{S}^{m-2}$ and 
  the coordinates $(t,r,p)$ on $M_2 = \bbR^m$ have been chosen as time $t$ and
  spherical coordinates $(r,p)$ on the equal-time hypersurfaces.
 This induces the desired isometric embedding $\underline{f_2}:(M_3,g_3) \to (M_2,g_2)$.
 } 
 We can equip $M_i$, $i=1,2,3$, with orientations and time-orientations,
 such that the isometric embeddings preserve those. Furthermore,
 taking the trivial principal $G$-bundles $P_i = M_i \times G$, $i=1,2,3$, 
 we can construct three objects $\Xi_i$, $i=1,2,3$, and two morphisms
 $f_j :=(\underline{f_j},\id_G): M_3\times G \to M_j \times G$, $j=1,2$,
 in $G{-}\PrBu_0^{(m)}$.

By the same arguments as in the proof of Proposition $\ref{propo:nonlocal}$,
the morphism $\PhaseSpace(f_2)$ can not be injective since $H^2_{0\,\mathrm{dR}}(M_3,\g^\ast) \simeq \bbR$ and
$H^2_{0\,\mathrm{dR}}(M_2,\g^\ast) =\{0\}$. To turn $\PhaseSpace(f_2): \PhaseSpace(\Xi_3)\to \PhaseSpace(\Xi_2)$
into an injective morphism we have to take a quotient $\PhaseSpace(\Xi_3)/\SubGr(\Xi_3)$, such that
$\SubGr(\Xi_3)$ contains $[\underline{\mathcal{F}}^\ast_3(\zeta)]$, for all $\zeta\in\Omega^2_{0,\dd}(M_3,\g^\ast)$,
as all these elements lie in the kernel of $\PhaseSpace(f_2)$. (The last statement can also be seen very explicitly:
For all $\zeta\in\Omega^2_{0,\dd}(M_3,\g^\ast)$ we have $\PhaseSpace(f_2)\big([\underline{\mathcal{F}}^\ast_3(\zeta)]\big)
= [\underline{\mathcal{F}}^\ast_2(\underline{f_2}_\ast(\zeta))] = [\MW_2(\eta)]=0$, since 
$H^2_{0\,\mathrm{dR}}(M_2,\g^\ast)=\{0\}$ and hence $\underline{f_2}_\ast(\zeta) = \dd \eta$ 
for some $\eta\in \Omega^1_0(M_2,\g^\ast)$.)
On the other hand, the morphism 
$H^2_{0\,\mathrm{dR}}(f_1): H^2_{0\,\mathrm{dR}}(M_3,\g^\ast) \to H^2_{0\,\mathrm{dR}}(M_1,\g^\ast)$ 
is injective, since for any non-trivial $0\neq [\zeta]\in H^2_{0\,\mathrm{dR}}(M_3,\g^\ast)$
we have $\int_{M_1} \underline{f_1}_\ast(\zeta)\wedge \nu_{\mathbb{S}^{m-2}}^1 
= \int_{M_3} \zeta\wedge \nu_{\mathbb{S}^{m-2}}^3\neq 0$, where $\nu_{\mathbb{S}^{m-2}}^i$ denotes the pull-back of the
volume form on $\mathbb{S}^{m-2}$ to $M_i$, $i=1,3$.
The following argument shows that,  for any $0\neq [\zeta]\in  H^2_{0\,\mathrm{dR}}(M_3,\g^\ast)$,
\begin{flalign}\label{eqn:tmpextendedargument}
\PhaseSpace(f_1)\big([\underline{\mathcal{F}}^\ast_3(\zeta)]\big)= \big[\underline{\mathcal{F}}^\ast_1(\underline{f_1}_\ast(\zeta))\big]\not\in
 \big[\NN_1^0\big]~,
\end{flalign} 
where $\big[\NN_1^0\big]= \NN_1^0/\MWfunc(\Xi_1)$ is the radical of $\PhaseSpace(\Xi_1)$: 
Using Proposition \ref{propo:Nmin}
and the fact that $M_1$ has compact Cauchy surfaces (which implies
$\Omega^1_{\mathrm{tc}}(M_1,\g^\ast) = \Omega^1_{0}(M_1,\g^\ast)$), any representative $\psi$
of an element $[\psi]\in \big[\NN_1^0\big]$ has a linear part satisfying $\psi_V\in\delta\dd \Omega^1_0(M_1,\g^\ast)$.
However, for $\underline{\mathcal{F}}^\ast_1(\underline{f_1}_\ast(\zeta))$ the linear part 
is $-\delta \underline{f_1}_\ast(\zeta)$, which is not of this form if $[\underline{f_1}_\ast(\zeta)]\neq 0$ 
(this property is implied by $[\zeta]\neq0$ as $H^2_{0\,\mathrm{dR}}(f_1)$ is injective). 
As $\SubGr(\Xi_3)$ contains $[\underline{\mathcal{F}}^\ast_3(\zeta)]$, for all
$\zeta\in \Omega^2_{0,\dd}(M_3,\g^\ast)$ (see the discussion of the morphism $\PhaseSpace(f_2)$ above),
(\ref{eqn:tmpextendedargument}) implies that the image of $\SubGr(\Xi_3)$ under $\PhaseSpace(f_1)$ 
is not a subgroup of the radical $\big[\NN_1^0\big]$.
Thus, by the definition of subfunctor, any admissible choice of $\SubGr(\Xi_1)$ can not be a 
subgroup of $\big[\NN_1^0\big]$  and as a consequence the subfunctor $\SubGr$ can not be quotientable.

Let us now consider $m=2$, which is rather special and deserves for a different construction: 
Let us take the disjoint union $M_1:= \bbR\times (\mathbb{S}^1\amalg \mathbb{S}^1)$
and equip each component with the standard Lorentzian metric $g_1= -dt\otimes dt + d\phi \otimes d\phi$.
Consider further $M_2:= \bbR^2$ equipped with the Minkowski metric $g_2 = -dt\otimes dt + dx\otimes dx$.
Define $M_3$ as the Cauchy development of $\{0\} \times ( I_1\amalg I_2)\subseteq M_1$,
where $I_1$ and $I_2$ are open intervals in $(0,2\pi)$,
and equip it with the induced Lorentzian metric $g_3:= g_1\vert_{M_3}$. 
There are obvious isometric embeddings $\underline{f_1}: (M_3,g_3)\to (M_1,g_1)$
and $\underline{f_2}: (M_3,g_3)\to (M_2,g_2)$. One possibility for the latter is given by restricting to $M_3$ 
the isometric embedding $M_1 \supset \bbR \times (I_1\amalg I_2) \to  M_2$ defined by $(t,\phi_1) \mapsto  
(t,\phi_1)$ and $(t,\phi_2)\mapsto (t,2\pi + \phi_2)$, where $\phi_1\in I_1$, $\phi_2\in I_2$ and we have taken
Cartesian coordinates $(t,x)$ on $M_2= \bbR^2$.
As above, this gives rise to three objects
$\Xi_i$, $i=1,2,3$, and two morphisms
 $f_j : M_3\times G \to M_j \times G$, $j=1,2$,
 in $G{-}\PrBu^{(2)}$.
By the same arguments as in the proof of Proposition $\ref{propo:nonlocal}$,
the morphism $\PhaseSpace(f_2)$ can not be injective since $H^2_{0\,\mathrm{dR}}(M_3,\g^\ast) \simeq \bbR^2$ and
$H^2_{0\,\mathrm{dR}}(M_2,\g^\ast) \simeq \bbR$. To turn $\PhaseSpace(f_2): \PhaseSpace(\Xi_3)\to \PhaseSpace(\Xi_2)$
into an injective morphism we have to take a quotient $\PhaseSpace(\Xi_3)/\SubGr(\Xi_3)$, where
$\SubGr(\Xi_3)$ contains $[\underline{\mathcal{F}}^\ast_3(\zeta)]$, for all $\zeta\in\Omega^2_{0,\dd}(M_3,\g^\ast)$
such that $\int_{M_3}\zeta =0$.  It is easy to see that 
$H^2_{0\,\mathrm{dR}}(f_1): H^2_{0\,\mathrm{dR}}(M_3,\g^\ast)\to H^2_{0\,\mathrm{dR}}(M_1,\g^\ast)$ 
is an isomorphism, since both $M_3$ and $M_1$ have two connected components.
By the same argument as above for $m>2$ one shows that any subgroup $\SubGr(\Xi_1)$ which contains the image
of $\SubGr(\Xi_3)$ under $\PhaseSpace(f_1)$ is not a subgroup 
of the radical $\big[\NN^0_1\big]$. Hence, the subfunctor $\SubGr$ can not be quotientable.
\end{proof}


\section{\label{sec:quantization}Quantization}

We finally study the quantization of the functor $\PhaseSpace: G{-}\PrBu\to \PAG$. Due to Theorem \ref{app:CCRfunc1}
we can construct a covariant functor $\QFT : G{-}\PrBu \to \CastAlg$ by composing the functors 
$\PhaseSpace$ and $\CCR: \PAG\to \CastAlg$, 
i.e.~$\QFT := \CCR\circ \PhaseSpace$. The functor $\QFT$ describes the association of $C^\ast$-algebras
of observables $\QFT(\Xi)$ to objects $\Xi$ in $G{-}\PrBu$.
The validity of the classical causality property and of the classical time-slice axiom for the functor $\PhaseSpace$
implies the quantum causality property and the quantum time-slice axiom due to the construction
of the functor $\CCR$. Likewise, the failure of the locality property extends from the 
classical context to the quantum case, as we shall demonstrate in the remainder of this section. 
We can summarize these results as follows:
\begin{theo}
There exists a covariant functor $\QFT := \CCR\circ \PhaseSpace : G{-}\PrBu \to \CastAlg$ describing
$C^\ast$-algebras of gauge invariant on-shell observables for quantized
principal $G$-connections. The covariant functor $\QFT$ satisfies the quantum causality property and the quantum time-slice axiom.
Furthermore, for each object $\Xi$ in $G{-}\PrBu$ the $C^\ast$-algebra $\QFT(\Xi)$ is a quantization
of an algebra of functionals that separates gauge 
equivalence classes of connections (cf.~Theorem \ref{thmSep}).
Neither the functor $\QFT:G{-}\PrBu\to \CastAlg$ nor its restriction to the full subcategory
$G{-}\PrBu^{(m)}$ (for $m\geq 2$) or $G{-}\PrBu^{(m)}_0$ (for $m\geq 3$) satisfies 
the locality property, i.e.~$\QFT$ is not a covariant functor to the subcategory
$\CastAlg^\mathrm{inj}$. As an exceptional case, the functor $\QFT$ satisfies the locality property when
restricted to the full subcategory $G{-}\PrBu^{(2)}_0$ of principal $G$-bundles over 
 two-dimensional connected spacetimes.
\end{theo}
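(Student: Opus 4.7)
The plan is to transport every property of the classical presymplectic functor $\PhaseSpace$ through the CCR-functor $\CCR:\PAG\to\CastAlg$ established in the appendix (Theorem \ref{app:CCRfunc1}). Since a composition of covariant functors is a covariant functor, $\QFT:=\CCR\circ\PhaseSpace$ is immediately a covariant functor from $G{-}\PrBu$ to $\CastAlg$, and the same composition restricts to the full subcategories $G{-}\PrBu^{(m)}$ and $G{-}\PrBu^{(m)}_0$.

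For the quantum causality property and the quantum time-slice axiom, I would first cite the classical analogues for $\PhaseSpace$ which, as remarked in the paragraph before the theorem, follow by literally the same arguments as in the proofs of Theorems 5.7 and 5.8 of \cite{Benini:2013tra}. Concretely, classical causality means that whenever $f_1:\Xi_1\to\Xi$ and $f_2:\Xi_2\to\Xi$ are morphisms in $G{-}\PrBu$ with causally disjoint images, the presymplectic structure $\tau$ vanishes on pairs from $\PhaseSpace(f_1)[\PhaseSpace(\Xi_1)]$ and $\PhaseSpace(f_2)[\PhaseSpace(\Xi_2)]$; via the Weyl relation $W(\varphi)W(\psi)=e^{-i\tau(\varphi,\psi)/2}W(\varphi+\psi)$, this forces the corresponding Weyl generators (hence the generated $C^\ast$-subalgebras) in $\QFT(\Xi)$ to commute. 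Classical time-slice, i.e.\ the statement that a morphism $f:\Xi_1\to\Xi_2$ whose image contains a Cauchy surface of $\Xi_2$ induces an isomorphism $\PhaseSpace(f)$, is turned into the quantum time-slice axiom by applying the functor $\CCR$, which sends isomorphisms to isomorphisms.

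The separation statement follows by combining Theorem \ref{thmSep} with the faithfulness of the Weyl map: for any object $\Xi$ the distinct elements of $\EE^\mathrm{inv}/\MW^\ast[\Omega^1_0(M,\g^\ast)]$ label distinct gauge invariant exponential functionals, and the CCR construction in the appendix produces, for each element of $\PhaseSpace(\Xi)$, a Weyl unitary $W(\varphi)\in\QFT(\Xi)$ with the property that different classes give rise to different unitaries. This is exactly the sense in which $\QFT(\Xi)$ is the quantization of an algebra of functionals separating gauge equivalence classes of connections.

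The remaining and most delicate point concerns the locality property. Here I would rely on the crucial compatibility of $\CCR$ with injectivity and non-injectivity of morphisms in $\PAG$, which is part of Theorem \ref{app:CCRfunc1}: a morphism $\PhaseSpace(f)$ is injective if and only if $\CCR(\PhaseSpace(f))=\QFT(f)$ is injective. Given this, the negative statements are inherited directly from Proposition \ref{propo:nonlocal} applied to the categories $G{-}\PrBu^{(m)}$ with $m\geq 2$ and $G{-}\PrBu^{(m)}_0$ with $m\geq 3$, while the positive statement on $G{-}\PrBu^{(2)}_0$ follows from the preceding theorem showing that $\PhaseSpace$ is locally covariant on that subcategory. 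The main obstacle I expect is precisely this CCR bookkeeping: one must verify that $\CCR$ indeed reflects injectivity, i.e.\ that the canonical map from a presymplectic Abelian group into its Weyl $C^\ast$-algebra is a monomorphism, which in turn relies on the non-degeneracy properties and the extension of the Manuceau-type construction of \cite{Manuceau:1973yn} to presymplectic Abelian groups developed in Appendix \ref{sec:weyl}. Once this is in hand, the theorem is an essentially formal consequence of the classical results of Section \ref{sec:functor}.
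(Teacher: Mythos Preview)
Your plan matches the paper's treatment: the theorem is stated there as a summary, with functoriality coming from Theorem~\ref{app:CCRfunc1}, causality and time-slice inherited from $\PhaseSpace$ via the Weyl relations and functoriality of $\CCR$, separation from Theorem~\ref{thmSep}, and the locality statements transported from the classical results of Section~\ref{sec:functor}.

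Two small corrections. First, the injectivity-preservation of $\CCR$ is not part of Theorem~\ref{app:CCRfunc1} (which only asserts functoriality); the relevant fact is Corollary~\ref{cor:injectivealgmorphism} (equivalently Theorem~\ref{app:CCRfunc2}). Second, for the \emph{failure} of locality you need the converse direction, namely that non-injectivity of $\PhaseSpace(f)$ forces non-injectivity of $\QFT(f)$. This is not a subtle extension-of-states or Manuceau-type argument as you suggest, but an elementary computation carried out in the theorem immediately following the summary: if $\PhaseSpace(f)([\varphi])=0$ for some $[\varphi]\neq 0$, then $\QFT(f)\big(W_1([\varphi])-\oone_1\big)=W_2(0)-\oone_2=0$, and $W_1([\varphi])-\oone_1\neq 0$ because distinct group elements label linearly independent Weyl symbols already in $\Delta(B,\tau)$. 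So the ``CCR bookkeeping'' you anticipate as the main obstacle is in fact trivial in this direction.
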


As a first step in the analysis of the quantum locality property, we 
notice that the characterization of injective morphisms $\PhaseSpace(f)$ given in Theorem
\ref{theo:diagcomap} extends to the quantized case.
\begin{theo}
Let $f: \Xi_1\to\Xi_2$ be a morphism in $G{-}\PrBu$.
 Then $\QFT(f): \QFT(\Xi_1)\to\QFT(\Xi_2)$ is injective if and only if
 $H^2_{0\,\mathrm{dR}} (f): H^2_{0\,\mathrm{dR}}(\Xi_1)\to H^2_{0\,\mathrm{dR}}(\Xi_2)$
 is injective.
\end{theo}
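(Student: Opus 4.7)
The plan is to reduce this quantum statement to the classical one, Theorem \ref{theo:diagcomap}, which already characterizes injectivity of $\PhaseSpace(f)$ via injectivity of $H^2_{0\,\mathrm{dR}}(f)$. Since $\QFT = \CCR\circ \PhaseSpace$ by construction, it suffices to establish that the functor $\CCR:\PAG\to \CastAlg$ both preserves and reflects injectivity of morphisms, i.e.\ for any morphism $L:(H_1,\tau_1)\to(H_2,\tau_2)$ in $\PAG$, the induced $\CCR(L)$ is injective if and only if $L$ is. Granted that, chaining with Theorem \ref{theo:diagcomap} yields the claim immediately.

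For the preservation direction (``$L$ injective $\Rightarrow \CCR(L)$ injective''), I would appeal to the standard Manuceau-type result, which should be stated in Appendix \ref{sec:weyl}: the CCR algebra over a presymplectic Abelian group is a simple $C^\ast$-algebra after quotienting by the radical, and an injection of presymplectic Abelian groups induces an injective $\ast$-homomorphism of Weyl algebras. Concretely, $\CCR(L)$ sends the Weyl generator $W(\varphi)$ to $W(L(\varphi))$ and extends to the closure; one checks that the pre-image of the null-ideal of $\CCR(H_2,\tau_2)$ under the pre-$C^\ast$ map is exactly the null-ideal of $\CCR(H_1,\tau_1)$ whenever $L$ is injective, so the induced map between the completed $C^\ast$-algebras is an isometric embedding.

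For the reflection direction (``$\CCR(L)$ injective $\Rightarrow L$ injective''), I would argue by contrapositive: if $L$ is not injective, pick $0\neq \varphi\in\ker L$. Then $\CCR(L)(W(\varphi))=W(L(\varphi))=W(0)=\1$, while $W(\varphi)\neq \1$ in $\CCR(H_1,\tau_1)$ because the Weyl generators at distinct group elements are linearly independent (indeed, $\|W(\varphi)-\1\|=2$ by the Manuceau norm estimate for $\varphi\neq 0$). Hence $W(\varphi)-\1\neq 0$ lies in $\ker \CCR(L)$, contradicting injectivity of $\CCR(L)$.

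The only step that requires care is verifying that the two ingredients invoked from Appendix \ref{sec:weyl} are actually proven there: namely, that $\CCR$ sends injective $\PAG$-morphisms to injective $\CastAlg$-morphisms, and that $W(\varphi)\neq \1$ whenever $\varphi\neq 0$ in the presymplectic Abelian group. Assuming these are available (they follow from an adaptation of the classical Manuceau construction \cite{Manuceau:1973yn} to the presymplectic Abelian group setting, which the authors develop in the appendix), the theorem follows by the composition argument above, together with Theorem \ref{theo:diagcomap}.
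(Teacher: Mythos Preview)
Your proposal is correct and follows essentially the same approach as the paper: reduce to Theorem~\ref{theo:diagcomap} by showing that $\CCR$ preserves injectivity (via Corollary~\ref{cor:injectivealgmorphism}) and reflects it (via the contrapositive argument that $W(\varphi)-\oone$ lies in the kernel when $\varphi\neq 0$ is in $\ker L$). One small remark: the precise norm claim $\Vert W(\varphi)-\oone\Vert = 2$ is stronger than needed and not established in the appendix; all you require is $W(\varphi)\neq \oone$, which follows directly from the linear independence of the Weyl generators in $\Delta(B,\tau)$ together with the faithfulness of the tracial state used to define the minimal regular norm.
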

\begin{proof}
We show that the $\CastAlg$-morphism $\QFT(f) = \CCR\big(\PhaseSpace(f)\big)$ is injective if and only if 
the $\PAG$-morphism $\PhaseSpace(f)$ is injective, from which the proof follows by using
 Theorem \ref{theo:diagcomap}.

To prove the direction ``$\Leftarrow$'', notice that if $\PhaseSpace(f)$ is injective, 
then by Corollary \ref{cor:injectivealgmorphism} also $\QFT(f) = \CCR\big(\PhaseSpace(f)\big)$
is injective. We show the direction ``$\Rightarrow$'' by contraposition:
If $\PhaseSpace(f)$ is not injective, then there exists a $[\varphi]\in \PhaseSpace(\Xi_1)$, such that
$\PhaseSpace(f)([\varphi])=0$. As a consequence, we obtain in the CCR-algebras 
$\QFT(f) ( W_1([\varphi]) -\oone_1) = W_2\big(\PhaseSpace(f)([\varphi])\big) - \oone_2 = W_2(0)-\oone_2 =0$, 
hence $\QFT(f)$ is not injective.
\end{proof}

It remains to extend our no-go Theorem \ref{theo:noinjectivequotient} to the functor 
$\QFT: G{-}\PrBu\to\CastAlg$ and its restriction to the full subcategories 
$G{-}\PrBu^{(m)}$ (for $m\geq 2$) or $G{-}\PrBu^{(m)}_0$ (for $m\geq 3$).
This requires us to adapt (in an obvious way) the content of 
Definition \ref{defi:quotient} to functors with values in the category $\CastAlg$:
A covariant functor $\SubGr: \mathsf{C}\to \CastAlg $ is called a subfunctor 
of a covariant functor $\mathfrak{F}: \mathsf{C} \to \CastAlg$ if for all objects $A$ in $\mathsf{C}$
we have that $\SubGr(A)\subseteq \mathfrak{F}(A)$ is a $C^\ast$-subalgebra (not necessarily unital) and if
for all morphisms $f:A_1\to A_2$ in $\mathsf{C}$ the morphism $\SubGr(f)$ is the restriction
of $\mathfrak{F}(f)$ to $\SubGr(A_1)$. This subfunctor is called quotientable if 
for all objects $A$ in $\mathsf{C}$ the $C^\ast$-subalgebra $\SubGr(A)$  is a closed 
two-sided $\ast$-ideal in $\mathfrak{F}(A)$.
It is clear that under these assumptions the analog of Proposition \ref{propo:quotientphasespace} holds true, i.e.\ 
the functor $\mathfrak{F}/\SubGr: \mathsf{C}\to \CastAlg$ exists.
It is worth to  mention that, according to the definition above, a quotientable
subfunctor $\SubGr$ of $\mathfrak{F}: \mathsf{C}\to \CastAlg$ could be such that
for some object $A$ in $\mathsf{C}$ we have $\SubGr(A) = \mathfrak{F}(A)$
and hence $\mathfrak{F}(A)/\SubGr(A)\simeq \bbC$ is the trivial $C^\ast$-algebra. 
In order to avoid such trivial constructions, we shall add the requirement
that $\SubGr$ should be a proper subfunctor of $\QFT$, i.e.\ that $\SubGr(A)$ is a proper 
$C^\ast$-subalgebra of $\QFT(A)$ for all objects $A$ in $\mathsf{C}$. In this way our
no-go Theorem \ref{theo:noinjectivequotient} extends to the quantum level.
\begin{theo}\label{theo:noinjectivequotientquantum}
For any $m\geq 2$ there exists no quotientable subfunctor  $\SubGr$ of $\QFT: G{-}\PrBu^{(m)} \to \CastAlg$, such that
$\QFT/\SubGr: G{-}\PrBu^{(m)}\to \CastAlg$ satisfies the locality property. 
Furthermore, for any $m\geq 3$ there exists no quotientable subfunctor $\SubGr$ of
$\QFT : G{-}\PrBu^{(m)}_0\to \CastAlg$, such that $\QFT/\SubGr: G{-}\PrBu^{(m)}_0 \to \CastAlg$ satisfies the locality property.
\end{theo}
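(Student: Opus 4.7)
The plan is to mimic Theorem \ref{theo:noinjectivequotient} by reusing the same geometric constructions of objects $\Xi_1,\Xi_2,\Xi_3$ together with the morphisms $f_1:\Xi_3\to\Xi_1$ and $f_2:\Xi_3\to\Xi_2$ exhibited there, working separately in $G{-}\PrBu^{(m)}$ for $m\geq 2$ and in $G{-}\PrBu^{(m)}_0$ for $m\geq 3$. From that proof I borrow two facts: first, $\PhaseSpace(f_2)\big([\underline{\mathcal{F}}_3^\ast(\zeta)]\big)=0$ for every $\zeta\in\Omega^2_{0,\dd}(M_3,\g^\ast)$; second, whenever $0\neq[\zeta_0]\in H^2_{0\,\mathrm{dR}}(M_3,\g^\ast)$, the element $[\psi_0]:=\big[\underline{\mathcal{F}}_1^\ast(\underline{f_1}_\ast(\zeta_0))\big]\in\PhaseSpace(\Xi_1)$ is non-zero and does not lie in the classical radical $[\NN_1^0]$.

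Now suppose for contradiction that a proper quotientable subfunctor $\SubGr$ of $\QFT$ exists such that $\QFT/\SubGr$ satisfies the locality property. For every $\lambda\in\bbR$ the element $W_3\big([\underline{\mathcal{F}}_3^\ast(\lambda\zeta_0)]\big)-\oone_3$ lies in the kernel of $\QFT(f_2)=\CCR(\PhaseSpace(f_2))$, so the required injectivity on the quotient forces it to belong to $\SubGr(\Xi_3)$. The subfunctor condition applied to $f_1$ then yields
\begin{flalign*}
W_1(\lambda[\psi_0])-\oone_1=\QFT(f_1)\big(W_3([\underline{\mathcal{F}}_3^\ast(\lambda\zeta_0)])-\oone_3\big)\in\SubGr(\Xi_1)\qquad\text{for all }\lambda\in\bbR.
\end{flalign*}

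The main auxiliary lemma to establish is the following statement about Weyl algebras: if $W([\psi])-\oone$ lies in some proper closed two-sided $\ast$-ideal $J\subseteq \CCR(\PhaseSpace(\Xi))$, then $\tau([\varphi],[\psi])\in 2\pi\bbZ$ for every $[\varphi]\in\PhaseSpace(\Xi)$. This follows from the Weyl-relation computation
\begin{flalign*}
W([\varphi])\big(W([\psi])-\oone\big)W([\varphi])^\ast-\big(W([\psi])-\oone\big)=\big(e^{-i\tau([\varphi],[\psi])}-1\big)\,W([\psi])\in J,
\end{flalign*}
combined with unitarity of $W([\psi])$: a non-vanishing scalar prefactor would give $\oone\in J$, contradicting properness. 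Applying this lemma to $[\psi]=\lambda[\psi_0]$ gives $\lambda\,\tau_1([\varphi],[\psi_0])\in 2\pi\bbZ$ for every $\lambda\in\bbR$ and every $[\varphi]\in\PhaseSpace(\Xi_1)$, which forces $\tau_1([\varphi],[\psi_0])=0$ for all $[\varphi]$. Hence $[\psi_0]\in[\NN_1^0]$, contradicting the property recalled in the first paragraph.

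The hard part will be the structural lemma on ideals of Weyl algebras; once available, the rest is essentially a bookkeeping translation of the classical proof through the CCR-functor, with the scaling $\zeta_0\mapsto\lambda\zeta_0$ playing the role of upgrading the weaker ``centrality'' condition supplied by the ideal analysis back to the strict radical condition exploited in Theorem \ref{theo:noinjectivequotient}. I expect this lemma to be either already present in Appendix \ref{sec:weyl} or a straightforward extension of the Manuceau-type results developed there.
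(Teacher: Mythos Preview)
Your proposal is correct and follows essentially the same route as the paper: both reuse the diagram $\Xi_1\leftarrow\Xi_3\to\Xi_2$ from Theorem~\ref{theo:noinjectivequotient}, force $W_3([\underline{\mathcal{F}}_3^\ast(\lambda\zeta_0)])-\oone_3$ into $\SubGr(\Xi_3)$ via the kernel of $\QFT(f_2)$, push to $\SubGr(\Xi_1)$, and use the Weyl conjugation identity to show that a proper ideal cannot contain $W_1([\psi])-\oone_1$ for non-central $[\psi]$. The only organizational difference is that the paper picks a \emph{single} $\lambda$ so that $\lambda[\psi_0]$ escapes the integral group $[\NN_1]$ of Proposition~\ref{propo:null1} and runs the ideal computation once, whereas you apply your lemma for \emph{all} $\lambda$ and deduce $[\psi_0]\in[\NN_1^0]$; these are contrapositives of one another. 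Your ideal lemma is not isolated in Appendix~\ref{sec:weyl}---the paper performs the same computation inline---but your sketch of it is already complete.

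One small imprecision: in the $m=2$ construction of Theorem~\ref{theo:noinjectivequotient} the first borrowed fact holds only for $\zeta$ with $\int_{M_3}\zeta=0$, not for every $\zeta\in\Omega^2_{0,\dd}(M_3,\g^\ast)$. Since your argument only uses real multiples of a fixed such $\zeta_0$, this does not affect the proof.
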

\begin{rem}
This theorem implies that there exists no quotientable subfunctor  $\SubGr$ of $\QFT: G{-}\PrBu \to \CastAlg$, 
such that $\QFT/\SubGr: G{-}\PrBu\to \CastAlg$ satisfies the locality property.
\end{rem}
\begin{proof}
Consider the same diagram of morphisms in $G{-}\PrBu^{(m)}$ as constructed in Theorem \ref{theo:noinjectivequotient}.
Notice that for $m\geq 3$ this is also a diagram in the full subcategory $G{-}\PrBu^{(m)}_0$.
By applying the functor $\QFT$, this gives a diagram in the category $\CastAlg$, i.e.\
\begin{flalign}
\xymatrix{
\QFT(\Xi_1) & & \QFT(\Xi_2)\\
&\ar[lu]^-{\QFT(f_1)}\QFT(\Xi_3)\ar[ru]_-{\QFT(f_2)}&
}
\end{flalign}
In the proof of Theorem \ref{theo:noinjectivequotient} we have shown that 
there exist non-trivial elements $0\neq [\zeta]\in H^2_{0\,\mathrm{dR}}(M_3,\g^\ast)$, 
such that $\PhaseSpace(f_2)\big([\underline{\mathcal{F}}_3^\ast(\zeta)]\big)=0$. As a consequence,
 $\QFT(f_2)\big(W_3\big([\underline{\mathcal{F}}_3^\ast(\zeta)]\big)\big) =
  W_2\big(\PhaseSpace(f_2)\big([\underline{\mathcal{F}}_3^\ast(\zeta)]\big)\big)= W_2(0)=\oone_2$
and thus $W_3\big([\underline{\mathcal{F}}_3^\ast(\zeta)]\big)-\oone_3$
is an element in the kernel $\ker\big(\QFT(f_2)\big)$.
In order to turn $\QFT(f_2)$ into an injective map we have to take the quotient of
$\QFT(\Xi_3)$ by a closed two-sided $\ast$-ideal $\SubGr(\Xi_3)$ which contains $\ker\big(\QFT(f_2)\big)$. 
On the other hand, as it was also shown in the proof of Theorem \ref{theo:noinjectivequotient} 
(cf.\ (\ref{eqn:tmpextendedargument})), the morphism $\PhaseSpace(f_1)$
maps all elements $[\underline{\mathcal{F}}_3^\ast(\zeta)]$, $[\zeta]\neq 0$, out 
of the radical $\big[\NN^0_1\big]$. From (\ref{eqn:N0}) it
is clear that the cohomology class $\big[h^{-1}\big(G_{(1)}(\psi_V)\big)\big]
\in H^1_\mathrm{dR}(M_1,\g)$ of any representative $\psi$ of $\PhaseSpace(f_1)\big([\underline{\mathcal{F}}_3^\ast(\zeta)]\big)$,
$[\zeta]\neq 0$, is non-vanishing. Hence, taking any $0\neq [\zeta]\in H^2_{0\,\mathrm{dR}}(M_3,\g^\ast)$
for which $[\underline{\mathcal{F}}_3^\ast(\zeta)]$ lies in the kernel of $\PhaseSpace(f_2)$,
we can find some real number $\lambda \in \bbR$ such that any representative of
$\PhaseSpace(f_1)\big([\underline{\mathcal{F}}_3^\ast(\lambda \zeta)]\big)$ violates the integral cohomology
condition in $\NN_1$ (cf.\ Proposition \ref{propo:null1}).
(Of course, $[\underline{\mathcal{F}}_3^\ast(\lambda \zeta)]$ 
is still an element in the kernel of $\PhaseSpace(f_2)$.) 
 In other words, $\PhaseSpace(f_1)\big([\underline{\mathcal{F}}_3^\ast(\lambda \zeta)]\big)\not\in \big[\NN_1\big]$
 and since $\big[\NN_1\big]$ characterizes by definition the central Weyl symbols (see (\ref{eqn:N})),
 this implies that $\QFT(f_1)$ maps $W_3\big([\underline{\mathcal{F}}_3^\ast(\lambda \zeta)]\big)$ out of the center in
  $\QFT(\Xi_1)$.
  (Of course, $\QFT(f_2)$ maps $W_3\big([\underline{\mathcal{F}}_3^\ast(\lambda \zeta)]\big)$ to the unit 
  $\oone_2$.)
 Let us denote by $W_1([\varphi])$ the image of this element 
 and by $W_1([\psi])$ a Weyl symbol that does not commute with $W_1([\varphi])$.
  For consistently 
inducing the morphism $\QFT(f_1)$ to the quotient $\QFT(\Xi_3)/\SubGr(\Xi_3)$, we have to take
the quotient $\QFT(\Xi_1)/\SubGr(\Xi_1)$ by a closed two-sided $\ast$-ideal $\SubGr(\Xi_1)$ of $\QFT(\Xi_1)$
that contains in particular the element $W_1([\varphi])-\oone_1\in \SubGr(\Xi_1)$ (as this is the image of
 $W_3\big([\underline{\mathcal{F}}_3^\ast(\lambda\zeta)]\big)-\oone_3$, lying in the kernel of $\QFT(f_2)$, under $\QFT(f_1)$). 
 Then also $W_1(-[\psi])\,\big(W_1([\varphi]) -\oone_1\big)\,W_1([\psi]) \in \SubGr(\Xi_1)$, which upon using the Weyl relations
 (\ref{eqn:Weylrelations}) simplifies to $W_1([\varphi])\,e^{-i\tau([\varphi],[\psi])}-\oone_1\in \SubGr(\Xi_1)$.
 Finally, subtracting from this the element $e^{-i\tau([\varphi],[\psi])}\,\big(W_1([\varphi])-\oone_1\big)\in\SubGr(\Xi_1)$
 we find $\big(e^{-i\tau([\varphi],[\psi])} -1\big)\,\oone_1\in\SubGr(\Xi_1)$ and hence $\oone_1\in\SubGr(\Xi_1)$,
 since $[\varphi]$ and $[\psi]$ have been chosen such that $e^{-i\tau([\varphi],[\psi])}\neq 1$.
 As a consequence, $\SubGr(\Xi_1) = \QFT(\Xi_1)$ is not a proper closed two-sided $\ast$-ideal and
thus the subfunctor $\SubGr$ would not be proper.
 \end{proof}

\begin{rem}\label{rem:topqft}
In analogy to \cite[Section 6]{Benini:2013tra}, the covariant functor $\QFT:G{-}\PrBu\to\CastAlg$ has a generally covariant
topological quantum field which measures the Chern class of the principal $G$-bundle. More precisely,
 we can construct a natural transformation $\Psi^\mathrm{mag}$ from the
singular homology functor $\mathfrak{H}_2$ to the functor $\QFT$ as follows: 
Let us denote by $\mathfrak{H}_2: G{-}\PrBu\to \mathsf{Monoid}$ the covariant functor associating
to any object $\Xi$ in $G{-}\PrBu$ the singular homology group $H_2(M,\g^\ast)$ (considered as a monoid with respect to $+$) 
of the base space. To any morphism $f:\Xi_1\to \Xi_2$ in $G{-}\PrBu$ the functor
associates the usual morphism of singular homology groups, considered as a morphism in the category $\mathsf{Monoid}$.
We further use the forgetful functor $\CastAlg \to \mathsf{Monoid}$, which forgets all structures
of $C^\ast$-algebras (but the multiplication) and turns the multiplication into a monoid structure. With a slight abuse of notation
we use the same symbol
$\QFT$ to denote the covariant functor $\QFT:G{-}\PrBu\to\mathsf{Monoid}$.
With $\mathcal{K} : H_2(M,\g^\ast)\to H^2_{0\,\mathrm{dR}^\ast}(M,\g^\ast)$ denoting the natural isomorphism
described in \cite[Section 6]{Benini:2013tra} we can define for each object $\Xi$ in $G{-}\PrBu$ a map
\begin{flalign}
\Psi^\mathrm{mag}_\Xi : \mathfrak{H}_2(\Xi) \to \QFT(\Xi)~,~~\sigma \mapsto W\big(\big[\underline{\mathcal{F}}^\ast\big(\mathcal{K}(\sigma)\big)\big]\big)~.
\end{flalign}
Notice that $\Psi^\mathrm{mag}_\Xi$ is a morphism of monoids, since, for all $\sigma,\sigma^\prime\in  \mathfrak{H}_2(\Xi) $,
\begin{flalign}
\nn W\big(\big[\underline{\mathcal{F}}^\ast\big(\mathcal{K}(\sigma +\sigma^\prime)\big)\big]\big)
&=W\big(\big[\underline{\mathcal{F}}^\ast\big(\mathcal{K}(\sigma)\big)\big]  +\big[\underline{\mathcal{F}}^\ast\big(\mathcal{K}(\sigma^\prime)\big)\big]\big) \\
&= W\big(\big[\underline{\mathcal{F}}^\ast\big(\mathcal{K}(\sigma)\big)\big]\big) 
\,W\big(\big[\underline{\mathcal{F}}^\ast\big(\mathcal{K}(\sigma^\prime)\big)\big]\big)~,
\end{flalign}
where in the last equality we have used the Weyl relation  (\ref{eqn:Weylrelations}) and the fact that
 $\tau([\underline{\mathcal{F}}^\ast(\mathcal{K}(\sigma))],
 [\underline{\mathcal{F}}^\ast(\mathcal{K}(\sigma^\prime))])=0$,
 which follows from $\underline{\mathcal{F}}^\ast(\mathcal{K}(\sigma))_V=0$.
 The collection $\Psi^\mathrm{mag} = \{\Psi^\mathrm{mag}_\Xi\}$ is then a natural transformation
 from $\mathfrak{H}_2$ to $\QFT$ that associates to elements in the second singular homology group observables that
 can measure the Chern class of the principal $G$-bundle.
\end{rem}


\section{\label{sec:locality}The locality property in Haag-Kastler-type quantum field theories}
We have shown in Theorem \ref{theo:noinjectivequotient} that there exists no quotientable subfunctor $\SubGr$
of the on-shell functor $\PhaseSpace$, such that 
the covariant functor $\PhaseSpace/\SubGr$ satisfies the locality property (unless we restrict
the functor $\PhaseSpace$ to the very special full subcategory $G{-}\PrBu^{(2)}_0$ of principal
$G$-bundles over two-dimensional connected spacetimes). 
The same result extends to the quantized level as shown in Theorem \ref{theo:noinjectivequotientquantum}.
In this section we shall prove that if we fix any object $\widehat{\Xi}=
((\widehat{M},\widehat{\o},\widehat{g},\widehat{\t}),(\widehat{P},\widehat{r}))$ of the category $G{-}\PrBu$
and consider a suitable category of subsets of $\widehat{M}$, then there exists a
quotientable subfunctor such that the resulting quotient satisfies the locality property. 
This setting does of course not cover the full generality of locally covariant quantum field theory,
however, it provides us with a quantum field theory in the sense of Haag and Kastler (generalized to curved spacetimes), 
where the focus is on associating algebras to suitable subsets of a fixed spacetime in a coherent way.

Let us fix any object $\widehat{\Xi}=
((\widehat{M},\widehat{\o},\widehat{g},\widehat{\t}),(\widehat{P},\widehat{r}))$ 
 of the category $G{-}\PrBu$.
We denote by $\SubM$ the following category: The objects in $\SubM$ are causally compatible and globally hyperbolic
open subsets of $\widehat{M}$. The morphisms in $\SubM$ are given by the subset relation $\subseteq$, 
i.e.~for any two objects $M_1,M_2$ in $\SubM$ there is a unique morphism $M_1\to M_2$ if and only if $M_1\subseteq M_2$.
Notice that by definition there exists for any object $M$ in $\SubM$ a unique morphism $M\to \widehat{M}$, 
i.e.~$\widehat{M}$ is a terminal object in $\SubM$. We interpret $\widehat{M}$ physically as the whole spacetime
(the universe).
There exists further a covariant functor $\Pull: \SubM \to G{-}\PrBu$: To any object $M$ in $\SubM$
the functor associates the object $\Pull(M)$ in $G{-}\PrBu$ obtained by pulling back all the geometric
data of $\widehat{\Xi}$  to $M$. To any morphism $M_1\to M_2$ in $\SubM$ the functor
associates the canonical embedding $\Pull(M_1)\to \Pull(M_2)$, which is a morphism
in $G{-}\PrBu$. We can compose the covariant functor $\Pull: \SubM\to G{-}\PrBu$ 
with the on-shell  functor $\PhaseSpace:= \PhaseSpaceOff/\MWfunc: G{-}\PrBu\to \PAG$  and obtain
a covariant functor  $\PS:= \PhaseSpace\circ \Pull: \SubM \to \PAG$.

We shall make heavy use of the following fact: Let $M_1\to M_2$ be any morphism
in $\SubM$, then by definition of the category $\SubM$ there exists a commutative diagram:
\begin{flalign}\label{eqn:diagramm1}
\xymatrix{
 &\widehat{M}&\\
M_1\ar[ru] \ar[rr] && M_2 \ar[lu]
}
\end{flalign}
Due to functoriality this induces the commutative diagram:
\begin{flalign}\label{eqn:diagramm2}
\xymatrix{
 &\PS(\widehat{M})&\\
\PS(M_1)\ar[ru] \ar[rr] && \PS(M_2) \ar[lu]
}
\end{flalign}
\begin{lem}
For any object $M$ in $\SubM$ define $\Kerr(M)$ to be the object in $\PAG$
given by the kernel of the canonical map $\PS(M)\to \PS(\widehat{M})$.
For any morphism $M_1\to M_2$ in $\SubM$ define the morphism $\Kerr(M_1)\to \Kerr(M_2)$
in $\PAG$ by restriction of $\PS(M_1)\to \PS(M_2)$ to $\Kerr(M_1)$.
Then $\Kerr: \SubM \to \PAG$ is a quotientable subfunctor of $\PS:\SubM\to \PAG$.
\end{lem}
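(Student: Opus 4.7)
The plan is to verify three things in order: (i) the prescription assigning to a morphism $M_1\to M_2$ in $\SubM$ the restriction of $\PS(M_1)\to\PS(M_2)$ actually lands in $\Kerr(M_2)$, so that $\Kerr$ is well-defined on morphisms; (ii) the resulting data satisfy the subfunctor conditions of Definition \ref{defi:quotient}(a); (iii) each $\Kerr(M)$ is contained in the radical of the presymplectic structure on $\PS(M)$, which is the quotientable condition of Definition \ref{defi:quotient}(b).

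For (i), denote by $\iota_M:\PS(M)\to\PS(\widehat{M})$ the canonical $\PAG$-morphism that $\PS$ assigns to the unique arrow $M\to\widehat{M}$ in $\SubM$, and by $\PS(M_1\to M_2):\PS(M_1)\to\PS(M_2)$ the morphism induced by $M_1\to M_2$. For any $\varphi\in\Kerr(M_1)$, commutativity of diagram (\ref{eqn:diagramm2}) yields
\begin{flalign*}
\iota_{M_2}\big(\PS(M_1\to M_2)(\varphi)\big) \;=\; \iota_{M_1}(\varphi) \;=\; 0,
\end{flalign*}
so the image lies in $\Kerr(M_2)$ as required. Functoriality (preservation of identities and composition) is inherited from $\PS$, and the $\PAG$-morphism property holds because restricting a presymplectic-structure-preserving group homomorphism to a subgroup still preserves the structure. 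Item (ii) is then immediate: $\Kerr(M)$ is the kernel of a $\PAG$-morphism, hence a subgroup of $\PS(M)$, and becomes a presymplectic Abelian subgroup when equipped with the restricted presymplectic form.

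The substantive step is (iii), and it is where I would exploit that every morphism in $\PAG$ intertwines the presymplectic forms on source and target. Concretely, for any $\varphi\in\Kerr(M)$ and any $\psi\in\PS(M)$,
\begin{flalign*}
\tau_M(\varphi,\psi) \;=\; \tau_{\widehat{M}}\big(\iota_M(\varphi),\,\iota_M(\psi)\big) \;=\; \tau_{\widehat{M}}\big(0,\,\iota_M(\psi)\big) \;=\; 0,
\end{flalign*}
so $\varphi$ lies in the radical of $\PS(M)$, establishing quotientability. No genuine obstacle arises; the whole argument is purely categorical, relying only on the terminal status of $\widehat{M}$ in $\SubM$ (which produces the commuting diagrams (\ref{eqn:diagramm1}) and (\ref{eqn:diagramm2})) together with the defining property of $\PAG$-morphisms. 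The closest thing to a subtlety is remembering that the radical condition uses pairing against \emph{all} of $\PS(M)$ rather than against $\Kerr(M)$ alone, but the computation above covers exactly this case.
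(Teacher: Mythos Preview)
Your proof is correct and follows essentially the same approach as the paper: both use the commutative diagram (\ref{eqn:diagramm2}) induced by the terminal object $\widehat{M}$ to show that the restriction lands in $\Kerr(M_2)$, and both inherit functoriality from $\PS$. Your treatment of quotientability is in fact more explicit than the paper's---you spell out the computation $\tau_M(\varphi,\psi)=\tau_{\widehat{M}}(\iota_M(\varphi),\iota_M(\psi))=0$ showing that $\Kerr(M)$ lies in the radical of $\PS(M)$, whereas the paper only remarks that the induced presymplectic structure ``becomes trivial in $\Kerr(M)$'' and then asserts quotientability.
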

\begin{proof}
For any object $M$ in $\SubM$, the kernel $\Kerr(M)$ of the canonical map $\PS(M)\to \PS(\widehat{M})$
is an object in $\PAG$ with the presymplectic structure induced from $\PS(M)$, that becomes trivial
in $\Kerr(M)$. Furthermore, due to the commutative diagram (\ref{eqn:diagramm2}),
the restriction of any morphism $\PS(M_1)\to \PS(M_2)$ to $\Kerr(M_1)$
induces a morphism $\Kerr(M_1)\to \Kerr(M_2)$ in $\PAG$. The composition property of $\PS$
is preserved and hence $\Kerr: \SubM \to \PAG$ is a quotientable subfunctor
of $\PS:\SubM\to \PAG$.
\end{proof}
We can now prove the main statement of this section.
\begin{theo}\label{thm:HKworks}
The covariant functor $\PS^0 := \PS / \Kerr: \SubM \to \PAG$ satisfies the locality property,
the causality property and the time-slice axiom.
\end{theo}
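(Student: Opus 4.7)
The plan is to exploit the fact that, by construction, $\PS^0(M) := \PS(M)/\Kerr(M)$ embeds canonically and presymplectically into the fixed ``universal'' object $\PS(\widehat{M})$. Concretely, since $\Kerr(M)$ is defined as the kernel of the $\PAG$-morphism $\PS(M) \to \PS(\widehat{M})$ induced by the unique arrow $M \to \widehat{M}$ in $\SubM$, and since (by the preceding lemma and the general fact that the kernel of a presymplectic map lies in the radical) the quotient in $\PAG$ exists, the first isomorphism theorem provides for every object $M$ of $\SubM$ an injective $\PAG$-morphism
\begin{flalign*}
\iota_M : \PS^0(M) \longrightarrow \PS(\widehat{M})~.
\end{flalign*}
Moreover, functoriality of $\PS$ applied to the commutative triangle \eqref{eqn:diagramm1} gives the commutative triangle \eqref{eqn:diagramm2}, and passing to the quotients makes the induced arrow $\PS^0(M_1) \to \PS^0(M_2)$ compatible with $\iota_{M_1}$ and $\iota_{M_2}$.

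For the locality property, let $M_1 \to M_2$ be any morphism in $\SubM$ and suppose $[\varphi] \in \PS^0(M_1)$ is mapped to zero in $\PS^0(M_2)$. By commutativity of the triangle with $\iota_{M_1}, \iota_{M_2}$, we obtain $\iota_{M_1}([\varphi]) = 0$ in $\PS(\widehat{M})$, and injectivity of $\iota_{M_1}$ forces $[\varphi] = 0$. Hence $\PS^0(M_1) \to \PS^0(M_2)$ is injective. This is exactly the step for which the quotient by $\Kerr$ was designed: every ``obstruction'' to locality visible from within $\widehat{M}$ has been removed by hand.

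For the causality property and the time-slice axiom, I would simply lift the corresponding properties of $\PhaseSpace: G{-}\PrBu \to \PAG$ (established along the lines of \cite[Theorem 5.7 and Theorem 5.8]{Benini:2013tra}) through the composition $\PS = \PhaseSpace \circ \Pull$. Causality says that if two subregions of some $M$ in $\SubM$ have causally disjoint images, then the presymplectic pairing between their images in $\PS(M)$ vanishes; since the quotient presymplectic structure on $\PS^0(M)$ is the one induced from $\PS(M)$, this property descends verbatim. For the time-slice axiom, if $M_1 \hookrightarrow M_2$ is a Cauchy inclusion in $\SubM$, then $\PS(M_1) \to \PS(M_2)$ is an isomorphism, and because the composites to $\PS(\widehat{M})$ coincide, this isomorphism restricts to an isomorphism $\Kerr(M_1) \to \Kerr(M_2)$ and therefore descends to an isomorphism $\PS^0(M_1) \to \PS^0(M_2)$.

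The only conceptually delicate point, and in that sense the main obstacle, is justifying that $\iota_M$ is indeed a well-defined injective morphism in $\PAG$; this requires the observation that $\Kerr(M)$ lies in the radical of $\PS(M)$, which follows from the fact that a $\PAG$-morphism preserves the presymplectic pairing, so $\tau_M(\varphi,\psi) = \tau_{\widehat{M}}(f(\varphi), f(\psi)) = 0$ whenever $\psi \in \Kerr(M)$. Once this is in place, everything reduces to elementary diagram-chasing together with the already established properties of $\PhaseSpace$, and there is no need to re-examine the cohomological obstructions identified in Theorem \ref{theo:diagcomap}, since we are now working inside the fixed spacetime $\widehat{M}$ and the quotient is tailored precisely to kill them.
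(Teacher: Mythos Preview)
Your proposal is correct and follows essentially the same approach as the paper: the locality argument via the injective maps $\iota_M$ into the terminal object $\PS(\widehat{M})$ and the commutative triangle is exactly what the paper does, and you correctly note that causality and the time-slice axiom descend from $\PhaseSpace$ through the quotient. Your treatment is slightly more detailed (e.g.\ spelling out why the time-slice isomorphism restricts to the kernels), and the ``delicate point'' you flag about $\Kerr(M)$ lying in the radical is already handled by the preceding lemma declaring $\Kerr$ quotientable, so there is no gap.
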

\begin{proof}
The causality property and the time-slice axiom are induced since $\PhaseSpace: G{-}\PrBu\to \PAG$ satisfies these 
properties. To prove the locality property we have to check if all morphism
$\PS^0(M_1)\to \PS^0(M_2)$ are injective. Since by definition $\PS^0(M_i) = \PS(M_i)/\Kerr(M_i)$, for $i=1,2$,
and due to the commutative diagram (\ref{eqn:diagramm2}) we obtain the commutative diagram
\begin{flalign}
\xymatrix{
 &\PS^0(\widehat{M})= \PS(\widehat{M})&\\
\PS^0(M_1)\ar[ru] \ar[rr] && \PS^0(M_2) \ar[lu]
}
\end{flalign}
where both upwards going arrows are injective (by construction). As a consequence, the horizontal arrow has to be
injective too, which proves the locality property.
\end{proof}

As a consequence of this theorem we can consider $\PS^0$ as a covariant functor 
$\PS^0 : \SubM \to \PAG^\mathrm{inj}$, where the latter category is defined in Definition \ref{defi:PAG}.
Using further Theorem \ref{app:CCRfunc2} we obtain a covariant functor
$\QFT_{\widehat{\Xi}}^0 := \CCR\circ \PS^0 : \SubM \to \CastAlg^\mathrm{inj}$.
\begin{cor}
The  covariant functor
$\QFT_{\widehat{\Xi}}^0 : \SubM \to \CastAlg^\mathrm{inj}$
satisfies the quantum
causality property, the quantum time-slice axiom and the locality property.
\end{cor}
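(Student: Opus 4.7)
The plan is to derive all three properties of $\QFT_{\widehat{\Xi}}^0 = \CCR \circ \PS^0$ as straightforward consequences of the corresponding classical properties of $\PS^0$ established in Theorem \ref{thm:HKworks}, together with the functorial properties of $\CCR$ collected in the appendix. Since $\QFT_{\widehat{\Xi}}^0$ is by construction a composition of two functors, functoriality is automatic; it remains to verify each of the three axioms separately.

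For the locality property, I would invoke Theorem \ref{app:CCRfunc2}, which asserts that $\CCR$ restricts to a functor $\PAG^\mathrm{inj} \to \CastAlg^\mathrm{inj}$. By Theorem \ref{thm:HKworks} the functor $\PS^0$ factors through $\PAG^\mathrm{inj}$, so composing with $\CCR$ yields a functor with values in $\CastAlg^\mathrm{inj}$. Concretely, for any morphism $M_1 \to M_2$ in $\SubM$, the morphism $\PS^0(M_1) \to \PS^0(M_2)$ is injective, and hence $\CCR$ applied to this morphism is injective as a $C^\ast$-algebra morphism (this is the content of Corollary \ref{cor:injectivealgmorphism} cited earlier in the quantization section).

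For the quantum causality property, the argument is the standard one based on the Weyl relations: if $M_1, M_2 \subseteq \widehat{M}$ are causally disjoint objects in $\SubM$ with a common enclosing object $M_3$, then the classical causality property of $\PS^0$ ensures that the images of $\PS^0(M_1)$ and $\PS^0(M_2)$ inside $\PS^0(M_3)$ pair trivially under the presymplectic structure. By the Weyl relation $W(\varphi)\,W(\psi) = e^{-i\tau(\varphi,\psi)/2}\,W(\varphi+\psi)$, vanishing of $\tau$ on these subgroups translates directly into commutativity of the corresponding Weyl symbols, hence of the $C^\ast$-subalgebras they generate. For the quantum time-slice axiom, one uses that $\CCR$ is a functor and therefore sends isomorphisms to isomorphisms: whenever $\PS^0$ maps a time-slice inclusion to an isomorphism, the same is true for $\QFT_{\widehat{\Xi}}^0$.

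Since every step reduces either to functoriality of $\CCR$ or to Theorem \ref{thm:HKworks} combined with the injectivity statement \ref{app:CCRfunc2} of the appendix, there is no real technical obstacle; the proof is essentially a bookkeeping exercise. The only point that requires a brief justification is the passage from classical to quantum causality via the Weyl relations, which I would spell out in one or two lines to emphasise that it is the triviality of the presymplectic pairing (rather than merely a commutativity of Abelian subgroups) that is needed.
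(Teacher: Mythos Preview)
Your proposal is correct and matches the paper's approach: the corollary is stated without proof, immediately after noting that Theorem~\ref{app:CCRfunc2} lets one compose $\PS^0$ with $\CCR$ to land in $\CastAlg^\mathrm{inj}$, and the paper already remarks (at the start of Section~\ref{sec:quantization}) that the quantum causality property and quantum time-slice axiom follow from their classical counterparts ``due to the construction of the functor $\CCR$''. Your elaboration via the Weyl relations and functoriality is exactly the argument the paper leaves implicit.
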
 
\begin{rem}\label{rem:HKworks}
Following the ideas presented in \cite[Proposition 2.3]{Brunetti:2001dx} we can construct a theory in the sense of
 Haag and Kastler from the covariant functor $\QFT_{\widehat{\Xi}}^0:\SubM\to \CastAlg^\mathrm{inj}$:
Let us consider the set $\mathrm{Obj}(\SubM)$ of objects in $\SubM$, i.e.~the set of all causally compatible and globally hyperbolic
open subsets of the reference spacetime $\widehat{M}$. 
The functor $\QFT_{\widehat{\Xi}}^0$ associates to the terminal object $\widehat{M}$ a $C^\ast$-algebra 
$\QFT_{\widehat{\Xi}}^0(\widehat{M})$, which we shall interpret as the global algebra of observables.
To any element $M\in \mathrm{Obj}(\SubM)$ the functor associates a $C^\ast$-algebra $\QFT_{\widehat{\Xi}}^0(M)$,
which can be mapped with an injective unital $C^\ast$-algebra homomorphism into $\QFT_{\widehat{\Xi}}^0(\widehat{M})$.
With a slight abuse of notation we denote the image of $\QFT_{\widehat{\Xi}}^0(M)$ under this map by the same symbol.
Hence, we have an association
\begin{flalign}\label{eqn:association}
 \mathrm{Obj}(\SubM) \ni M \mapsto \QFT_{\widehat{\Xi}}^0(M)\subseteq \QFT_{\widehat{\Xi}}^0(\widehat{M})~.
\end{flalign}
Following the proof of \cite[Proposition 2.3]{Brunetti:2001dx} one can show that this association
satisfies isotony, causality and the time-slice axiom. Furthermore,
if there is a group of orientation and time-orientation preserving isometries
acting on $\widehat{M}$, the association (\ref{eqn:association}) is covariant.
Hence, it is a quantum field theory in the sense of Haag and Kastler
\cite{Haag:1963dh}, generalized to an arbitrary but fixed spacetime $\widehat{M}$.
\end{rem}

The considerations in this section make heavy use of a terminal object in the category
$\SubM$. Indeed, the existence of this object has provided us with commutative diagrams
of the form (\ref{eqn:diagramm1}), which are essential for constructing a suitable quotientable subfunctor.
With this subfunctor we could construct quantum field theories in the sense of Haag and Kastler.
Notice that the category $G{-}\PrBu$ has no terminal object,
hence the techniques developed in this section do not apply to this case.
This is of course already clear from Theorem \ref{theo:noinjectivequotient} and 
Theorem \ref{theo:noinjectivequotientquantum}, where
it is shown that there exists no quotientable subfunctor which leads to a theory obeying the strict
axioms of general local covariance \cite{Brunetti:2001dx}.

\section{\label{sec:physical}Concluding physical remarks}
The theory of electromagnetism contains several features which are connected to the topology of 
a region $M$ of an $m$-dimensional spacetime $\widehat{M}$ in an algebraic description -- the Aharonov-Bohm effect, 
related to $H^1_{\mathrm{dR}}(M)$, as well as electric and magnetic charges, related to $H^{m-2}_\mathrm{dR}(M)$
and $H^2_{\mathrm{dR}}(M)$, respectively. 
Describing electromagnetism as a theory of principal $U(1)$-connections in its entirety, we have been able to 
provide a quantum framework which describes all of these topological features 
in a coherent manner. In the following we briefly comment on the relation of our constructions 
and results to these physical aspects of electromagnetism.

To discuss the Aharonov-Bohm effect, we recall the main aspects of \cite[Example 3.1]{SDH12} 
(see also \cite{Leyland:1978iv} for an early account of the Aharonov-Bohm effect in the algebraic framework): 
Consider as a globally hyperbolic spacetime $M$ the Cauchy development in $4$-dimensional Minkowski spacetime $\widehat{M}$
 of the time-zero hypersurface $\{0\}\times\mathbb{R}^3$ with the $z$-axis removed.
The (necessarily trivial) principal $U(1)$-bundle over $\widehat{M}$ pulls back to a
trivial principal $U(1)$-bundle over $M$. One has $H^1_{\mathrm{dR}}(M,\g)\neq\{0\}$ and, choosing the trivial 
connection as a reference, $H^1_{\mathrm{dR}}(M,\g)$ can be spanned by the on-shell 
vector potential $i\,d \phi \in \Omega^1(M,\g)$, with $\phi$ being the azimuthal angle around the $z$-axis
 in cylindrical coordinates $(t,\rho,\phi,z)$ on $M$. Here the $z$-axis represents an infinitely 
 thin coil whose magnetic flux $\Phi$ through the plane perpendicular to the coil can be encoded 
 in the vector potential $i\,\frac{\Phi}{2\pi} d \phi$. The gauge invariant affine functionals 
 (\ref{eqn:affobs}) introduced in \cite{Benini:2013tra} can not distinguish connections with the different gauge potentials
  $i\,\frac{\Phi}{2\pi} d \phi$, $\Phi\in\bbR$, and are thus not sufficient to measure this flux,
   cf.~\cite[Remark 4.5]{Benini:2013tra}. 
The exponential observables (\ref{eqn:expobs}) solve this problem and further 
 shortcomings in previous treatments of the subject. On the one hand, they contain Aharonov-Bohm observables
  which in contrast to the ones in \cite{Dimock, Pfenning:2009nx, SDH12, Finster:2013fva} are fully gauge invariant 
  and measure the phase $\exp i \Phi$ rather  than the flux $\Phi$ itself.  This is consistent with and, indeed, reproduces the
  Aharonov-Bohm experiment. On the other hand, they are regular enough for quantization, in contrast 
  to Aharonov-Bohm observables of Wilson-loop type. In fact, they can be considered as regularized Wilson loops,
   cf.~Remark \ref{rem:wilson}.

In \cite{DL} it has been found that the sensitivity of electromagnetism to $H^2_{\mathrm{dR}}(M)$ 
(and $H^{m-2}_\mathrm{dR}(M)$) leads to a failure of the locality axiom in locally covariant quantum field theory 
as introduced in \cite{Brunetti:2001dx}. This has been confirmed in \cite{SDH12, Benini:2013tra} and 
in Proposition \ref{propo:nonlocal} of this work and ascribed to the Gauss law in \cite{SDH12}. 
To understand this in view of our results, we introduce a few novel notions. 
\begin{defi}
An {\bf electric charge observable} is a gauge invariant exponential functional
$\mathcal{W}_\varphi$, with $ \varphi= \underline{\mathcal{F}}^\ast(\zeta)$, 
$\zeta \in \Omega^2_{0,\dd}(M,\g^\ast)$ and $0\neq [\zeta] \in H^2_{0\,\mathrm{dR}}(M,\g^\ast)$. 
An {\bf electrically charged configuration} is an on-shell connection, i.e.~$\lambda\in \sect{M}{\mathcal{C}(\Xi)}$
and $\MW(\lambda)=0$, such that there exists an electric charge observable 
with $\mathcal{W}_\varphi (\lambda)\neq 1$. 
In the notation of Section \ref{sec:locality}, for any object $M$ in $\SubM$ a {\bf material electric charge observable} 
is an electric charge observable $\mathcal{W}_\varphi$, such that $0\neq [\varphi]\in\Kerr(M)$.
A {\bf materially electric charged configuration} is an on-shell connection $\lambda$,
such that for some material electric charge observable $\mathcal{W}_\varphi (\lambda)\neq 1$.
\end{defi}

It is easy to prove that no materially electric charged configuration on $M$ can be extended 
from $M$ to $\widehat{M}$.  In an interacting theory including charged matter fields, these 
configurations can be interpreted as the 
 connections sourced by an electric current density located in $\widehat{M}\setminus M$, 
 cf.~\cite[Example 3.7]{SDH12}. Thus we conclude that materially electric charged configurations
  are unphysical in pure electromagnetism and have to be discarded by considering 
  an appropriate subset of the solution space.
 Consequently, by duality, all 
material electric charge observables have to be discarded by taking an appropriate quotient as in Theorem \ref{thm:HKworks}, 
  since, in the absence of materially electric charged configurations, they always 
  give a vanishing measurement result. One may consider singling out even more observables, 
  or dually, taking even smaller subsets of configurations, but there is no apparent reason for 
  doing so, as all remaining observables and configurations make perfect sense in pure electromagnetism, cf.\ also the next paragraph.
  Thus, in view of Remark \ref{rem:HKworks} one may say that {\it the  quotient taken 
  in Theorem \ref{thm:HKworks} gives for each region $M$ of an arbitrary
  but fixed spacetime $\widehat{M}$ the correct, full algebra of observables of this region in pure electromagnetism.} 
  In this respect Theorem \ref{theo:noinjectivequotientquantum} can be interpreted as to imply that
  {\it it is mathematically impossible to construct the correct algebra of observables of a region of 
  spacetime in pure electromagnetism without knowing a priori the whole spacetime.} 
 In the counterexample constructed in the proof of this theorem, considering three objects $\Xi_i$, $i=1,2,3$,
and two morphisms $f_j : \Xi_3\to \Xi_j$, $j=1,2$, in $G{-}\PrBu$, the electric charge observables
 on $M_3$ are material with respect to $M_2$ but not with respect to $M_1$. Thus, they might or might 
 not belong to the correct algebra of observables depending on whether $M_1$ or $M_2$ is the whole spacetime. 
 We expect that this problem disappears in an interacting theory containing also dynamical matter field currents, 
 see also \cite[Remark 4.15]{SDH12}.

By starting from a smaller algebra which does not contain Aharonov-Bohm observables,
it is possible to construct a quantum field theory functor on the full category $G{-}\PrBu$
that satisfies the locality property, cf.~\cite[Theorem 7.3]{Benini:2013tra}.
In the context of the counterexample constructed in the proof of Theorem \ref{theo:noinjectivequotient}
 and Theorem \ref{theo:noinjectivequotientquantum}, this implies to 
discard all electric charge observables on $M_3$, even if they might be indispensable observables in pure
electromagnetism depending on the nature of the whole spacetime. Hence, the result in \cite[Theorem 7.3]{Benini:2013tra}
seems mathematically very pleasing, but it is not satisfactory from the physical point of view.


\section*{Acknowledgements}
We would like to thank Hanno Gottschalk for comments on this work and especially for his help 
in preparing the appendix. We are also grateful to Klaus Fredenhagen, Daniel Paulino and 
Katarzyna Rejzner for useful discussions. 
Furthermore, we acknowledge the helpful comments of the referees,
 which have led to substantial improvements of the manuscript.
The work of C.D.\ has been supported partly 
by the University of Pavia and partly by the Indam-GNFM project {``Influenza della materia 
quantistica sulle fluttuazioni gravitazionali''}. The work of M.B.\ has been supported partly by a 
DAAD scholarship. M.B.\ is grateful to the II.\ Institute for Theoretical Physics of the University 
of Hamburg for the kind hospitality. The work of T.-P. H. is supported by a research fellowship 
of the Deutsche Forschungsgemeinschaft (DFG).


\appendix

\section{\label{sec:weyl}CCR-representations of generic presymplectic Abelian groups}
In this appendix we discuss the generalization of the theory of Weyl systems and $\CCR$-representations
from symplectic vector spaces to generic presymplectic Abelian groups.
For the case of symplectic vector spaces the theory of $\CCR$-representations is well understood
and details can be found in \cite{Bar:2007zz,BarCCR,Bratteli:1996xq}. The generalization to presymplectic
vector spaces has been studied in \cite{degenerateCCR} and $\CCR$-representations of presymplectic Abelian
groups appeared in \cite{Manuceau:1973yn}.  We will first review the results of Manuceau et al.~\cite{Manuceau:1973yn}
and afterwards provide some further constructions which are essential
for locally covariant quantum field theory. 

 \begin{defi}\label{defi:PAG}
 \begin{itemize}
\item[(i)] A {\bf presymplectic Abelian group} is a tuple $(B,\tau)$, where $B$ is an Abelian group\footnote{
 We denote the group operation
 by $+$, the identity element by $0$ and the inverse of $b\in B$ by $-b$}
  and
 $\tau: B\times B \to \bbR$ is an antisymmetric map, such that
 $\tau(b,\,\cdot\,) : B\to \bbR\,,~b^\prime \mapsto \tau(b,b^\prime)$ is a homomorphism of Abelian groups, for all $b\in B$.
 \item[(ii)] The category $\PAG$ consists of the following objects and morphisms: 
 An object is a presymplectic Abelian group $(B,\tau)$. A morphism 
  $\phi: (B_1,\tau_1)\to (B_2,\tau_2)$ is a group homomorphism (not necessarily injective) that preserves the presymplectic structures, 
 i.e.~$\tau_2\circ (\phi \times \phi) = \tau_1$.
 \item[(iii)] The category $\PAG^\mathrm{inj}$ is the subcategory of $\PAG$ where all morphisms are injective.
\end{itemize}
 \end{defi}

To any object $(B,\tau)$ in  $\PAG$ we can associate a unital $\ast$-algebra over $\bbC$ as follows:
Consider the $\bbC$-vector space $\Delta(B,\tau)$ that is spanned by a basis $W(b)$, $b\in B$.
Any element in $\Delta(B,\tau)$ is of the form $a=\sum_{i=1}^N \alpha_i\,W(b_i)$, where
$\alpha_i\in \bbC$ and $b_i\in B$, for all $i=1,\dots,N$. We can assume without loss of generality 
 that all $b_i$'s are different in expressions like this one. We define the structure of an associative unital 
algebra on $\Delta(B,\tau)$ by, for all $b,c\in B$,
\begin{flalign}\label{eqn:Weylrelations}
W(b)\,W(c) := e^{-i\,\tau(b,c)/2}\,W(b+c)~. 
\end{flalign}
Notice that $W(0)=\oone$ is the unit element. We further define a $\ast$-structure on $\Delta(B,\tau)$ by
$W(b)^\ast := W(-b)$, for all $b\in B$, and we notice that this turns $\Delta(B,\tau)$ into a unital 
$\ast$-algebra over $\bbC$. All $W(b)$ are unitary.

Given a morphism $\phi: (B_1,\tau_1)\to (B_2,\tau_2)$ in $\PAG$
we can construct a unital $\ast$-algebra homomorphism between $\Delta(B_1,\tau_1)$ and $\Delta(B_2,\tau_2)$
as follows: For any element $a=\sum_{i=1}^N \alpha_i\,W_1(b_i) \in \Delta(B_1,\tau_1)$ we define
$\Delta(\phi) (a) := \sum_{i=1}^N \alpha_i\,W_2(\phi(b_i))$. Then
$\Delta(\phi) : \Delta(B_1,\tau_1) \to \Delta(B_2,\tau_2)$ is clearly a $\bbC$-linear map and also a
 unital $\ast$-algebra homomorphism, 
since $\phi$ is a group homomorphism preserving the presymplectic structures.
Notice that for the identity morphism
$\id_{(B,\tau)}$ in $\PAG$ we have that $\Delta(\id_{(B,\tau)}) = \id_{\Delta(B,\tau)}$. Furthermore,
given two composable morphisms $\phi_1:(B_1,\tau_1)\to (B_2,\tau_2)$
and $\phi_2: (B_2,\tau_2)\to (B_3,\tau_3)$ in $\PAG$ it is easy to check that
$\Delta(\phi_2\circ \phi_1) = \Delta(\phi_2)\circ \Delta(\phi_1)$.
Hence, $\Delta: \PAG \to \astAlg$ is a covariant functor, where the category $\astAlg$ consists
of unital $\ast$-algebras over $\bbC$ as objects and unital $\ast$-algebra homomorphisms
(not necessarily injective) as morphisms. It is easy to see that $\Delta$ restricts to a covariant
functor $\Delta :\PAG^\mathrm{inj} \to \astAlg^\mathrm{inj}$, where $\astAlg^\mathrm{inj}$
is the subcategory of $\astAlg$ where all morphisms are injective.

For constructing a suitable $C^\ast$-completion of $\Delta(B,\tau)$ we follow
the strategy of \cite{Manuceau:1973yn} and introduce as an intermediate step a $\ast$-Banach algebra.
Let us consider the $\ast$-norm $\Vert \cdot \Vert^\mathrm{Ban} : \Delta(B,\tau) \to \bbR^+$ defined by
\begin{flalign}
\Big\Vert \sum_{i=1}^N\alpha_i\,W(b_i)\Big\Vert^\mathrm{Ban} := \sum_{i=1}^N \vert\alpha_i\vert~.
\end{flalign}
We denote the completion of $\Delta(B,\tau)$ by $\Delta^\mathrm{Ban}(B,\tau)$ and notice that it is a unital $\ast$-Banach algebra.
A generic element in $\Delta^\mathrm{Ban}(B,\tau)$ is of the form $a=\sum_{i=1}^\infty\alpha_i\,W(b_i)$, with
$\alpha_i\in \bbC$ and $b_i\in B$, such that $\sum_{i=1}^\infty \vert\alpha_i\vert <\infty$.

Given a morphism $\phi: (B_1,\tau_1)\to (B_2,\tau_2)$ in $\PAG$
we note that $\Delta(\phi) : \Delta(B_1,\tau_1) \to \Delta(B_2,\tau_2)$ is bounded by $1$, 
i.e.~$\Vert \Delta(\phi)(a) \Vert^\mathrm{Ban}_2 \leq \Vert a \Vert^\mathrm{Ban}_1$, for all $a\in \Delta(B_1,\tau_1)$.
Hence, there exists a unique continuous extension of $\Delta(\phi)$ to the completions, which we denote
 by the symbol $\Delta^\mathrm{Ban}(\phi) : \Delta^\mathrm{Ban}(B_1,\tau_1)
 \to \Delta^\mathrm{Ban}(B_2,\tau_2)$. 
If the morphism $\phi$ is in $\PAG^\mathrm{inj}$, then $\Delta(\phi)$ is an isometry,
i.e.~$\Vert \Delta(\phi)(a) \Vert^\mathrm{Ban}_2 = \Vert a \Vert^\mathrm{Ban}_1$ for all $a\in \Delta(B_1,\tau_1)$. 
 In this case $\Delta^\mathrm{Ban}(\phi) $ is an isometry and hence in particular injective. 
 Furthermore, given  two composable morphisms $\phi_1:(B_1,\tau_1)\to (B_2,\tau_2)$
and $\phi_2: (B_2,\tau_2)\to (B_3,\tau_3)$ in $\PAG$ it is easy to check that
$\Delta^\mathrm{Ban}(\phi_2\circ \phi_1) = \Delta^{\mathrm{Ban}}(\phi_2)\circ \Delta^\mathrm{Ban}(\phi_1)$.
Hence, $\Delta^\mathrm{Ban}: \PAG \to \BastAlg$ is a covariant functor, where the category $\BastAlg$ consists
of unital $\ast$-Banach algebras as objects and unital $\ast$-Banach algebra homomorphisms (not necessarily injective)
as morphisms. 
 Notice that $\Delta^\mathrm{Ban}$ restricts to a covariant
functor $\Delta^\mathrm{Ban} :\PAG^\mathrm{inj} \to \BastAlg^\mathrm{inj}$, where $\BastAlg^\mathrm{inj}$
is the subcategory of $\BastAlg$ where all morphisms are injective.

In the following we shall require states on the $\ast$-Banach algebras $\Delta^{\mathrm{Ban}}(B,\tau)$,
i.e.~continuous positive linear functionals $\omega : \Delta^{\mathrm{Ban}}(B,\tau) \to \bbC$ satisfying
$\omega(\oone)=1$.
The following proposition, which is proven in \cite[Proposition (2.17)]{Manuceau:1973yn},
will be very helpful in constructing such states:
\begin{propo}\label{propo:extensionstate}
Any positive linear functional on $\Delta(B,\tau)$ extends to a continuous positive linear functional
 on $\Delta^{\mathrm{Ban}}(B,\tau)$.
\end{propo}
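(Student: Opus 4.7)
The plan is to show that any positive linear functional $\omega$ on $\Delta(B,\tau)$ is automatically continuous with respect to the norm $\|\cdot\|^{\mathrm{Ban}}$, and then to extend it by continuity to the completion, checking that positivity survives passage to the limit.

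First I would observe that every Weyl symbol is unitary: since $\tau$ is antisymmetric we have $\tau(-b,b)=0$, hence
\begin{flalign*}
W(b)^\ast\,W(b)=W(-b)\,W(b)=e^{-i\tau(-b,b)/2}\,W(0)=\oone~,
\end{flalign*}
and similarly $W(b)\,W(b)^\ast=\oone$. Next I would invoke the Cauchy--Schwarz inequality, which holds for any positive linear functional on a unital $\ast$-algebra, in the form $|\omega(a^\ast c)|^2\le \omega(a^\ast a)\,\omega(c^\ast c)$. Applied with $a=\oone$ and $c=W(b)$ this yields $|\omega(W(b))|^2\le \omega(\oone)^2$, so
\begin{flalign*}
|\omega(W(b))|\le \omega(\oone)\qquad\text{for all}\ b\in B~.
\end{flalign*}
Note that $\omega(\oone)\ge 0$ by positivity (and is in particular real and finite).

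From this pointwise bound I would derive continuity for finite sums: for every $a=\sum_{i=1}^N\alpha_i\,W(b_i)\in\Delta(B,\tau)$ the triangle inequality gives
\begin{flalign*}
|\omega(a)|\le\sum_{i=1}^N|\alpha_i|\,|\omega(W(b_i))|\le \omega(\oone)\,\|a\|^\mathrm{Ban}~.
\end{flalign*}
Hence $\omega$ is a bounded linear functional on the dense subspace $\Delta(B,\tau)\subset\Delta^\mathrm{Ban}(B,\tau)$, so it admits a unique continuous linear extension $\overline{\omega}:\Delta^\mathrm{Ban}(B,\tau)\to\bbC$, satisfying the same bound $|\overline{\omega}(a)|\le \omega(\oone)\,\|a\|^\mathrm{Ban}$ and $\overline{\omega}(\oone)=\omega(\oone)$.

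Finally I would check that $\overline{\omega}$ is positive. Given $a\in\Delta^\mathrm{Ban}(B,\tau)$ choose a sequence $(a_n)\subset\Delta(B,\tau)$ with $a_n\to a$. Since the multiplication and the involution are jointly continuous on the $\ast$-Banach algebra $\Delta^\mathrm{Ban}(B,\tau)$, we have $a_n^\ast a_n\to a^\ast a$ in norm; continuity of $\overline\omega$ then gives $\overline\omega(a^\ast a)=\lim_n\omega(a_n^\ast a_n)\ge 0$, since each $\omega(a_n^\ast a_n)\ge 0$ by hypothesis. Thus $\overline\omega$ is a continuous positive linear functional extending $\omega$. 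The only mildly subtle ingredient is the Cauchy--Schwarz step, which requires the existence of the unit so that $W(b)^\ast W(b)=\oone$ can be compared with $\oone^\ast\oone$; once this is in place the rest is routine extension by continuity.
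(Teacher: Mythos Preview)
Your argument is correct. The paper does not give its own proof of this proposition but simply refers to \cite[Proposition (2.17)]{Manuceau:1973yn}; your approach---bounding $|\omega(W(b))|\le\omega(\oone)$ via Cauchy--Schwarz and the unitarity of the Weyl symbols, then extending by continuity and checking that positivity survives because multiplication and involution are continuous for $\Vert\cdot\Vert^{\mathrm{Ban}}$---is the standard one and is essentially what one finds in that reference.
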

There  exists a faithful state on $\Delta^\mathrm{Ban}(B,\tau)$, which can be seen as follows:
 Let us define a positive linear functional
$\omega: \Delta(B,\tau) \to \bbC$ by $\omega(W(b))=0$, if $b\neq 0$, and $\omega(W(0)) = \omega(\oone) =1$.
By Proposition \ref{propo:extensionstate} we can extend $\omega$ to a continuous positive linear functional
on $\Delta^\mathrm{Ban}(B,\tau)$ (denoted by the same symbol), which satisfies
$\omega(\oone)=1$, hence it is a state. This state is faithful, i.e.~$\omega(a^\ast a) >0$ 
for any $a\in \Delta^\mathrm{Ban}(B,\tau)$, $a\neq 0$. The existence of a faithful state allows us to define
the following $C^\ast$-norm on $\Delta^\mathrm{Ban}(B,\tau)$.
\begin{defi}\label{defi:CCRalg}
Let $\mathcal{F}$ be the set of states on $\Delta^\mathrm{Ban}(B,\tau)$. 
The {\bf minimal regular norm} on $\Delta^\mathrm{Ban}(B,\tau)$ is defined by, for all $a\in \Delta^\mathrm{Ban}(B,\tau)$,
\begin{flalign}
\Vert a\Vert := \sup_{\omega\in \mathcal{F}}\sqrt{\omega(a^\ast\,a)}~.
\end{flalign}
The completion of $\Delta^\mathrm{Ban}(B,\tau)$ (or equivalently $\Delta(B,\tau)$) with respect to the minimal regular norm
is denoted by $\CCR(B,\tau)$. Then $\CCR(B,\tau)$ is a unital $C^\ast$-algebra (cf.~\cite{Manuceau:1973yn}).
\end{defi}

\begin{propo}\label{propo:morext}
Let  $\phi:(B_1,\tau_2)\to (B_2,\tau_2)$ be a morphism in
$\PAG$. Then there exists a unique continuous extension $\CCR(\phi) : \CCR(B_1,\tau_1)\to \CCR(B_2,\tau_2)$
of $\Delta^\mathrm{Ban}(\phi)$ (and hence also of $\Delta(\phi)$).
\end{propo}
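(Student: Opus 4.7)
The plan is to show that $\Delta^\mathrm{Ban}(\phi)$ is contractive (norm-decreasing, in fact of operator norm $\le 1$) with respect to the minimal regular norms on source and target, and then invoke the bounded linear transformation theorem to obtain a unique continuous extension to the $C^\ast$-completions. Uniqueness is automatic once continuity is established, because $\Delta^\mathrm{Ban}(B_1,\tau_1)$ (and even $\Delta(B_1,\tau_1)$) is dense in $\CCR(B_1,\tau_1)$ by construction.

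The heart of the proof is the contractivity estimate. First I would observe that $\Delta^\mathrm{Ban}(\phi)$ is by construction a unital $\ast$-homomorphism between the Banach $\ast$-algebras $\Delta^\mathrm{Ban}(B_1,\tau_1)$ and $\Delta^\mathrm{Ban}(B_2,\tau_2)$, and that it is continuous in the Banach norms (indeed with operator norm $\le 1$, as noted just after the definition of $\Delta^\mathrm{Ban}(\phi)$). Next I would note that for any state $\omega_2$ on $\Delta^\mathrm{Ban}(B_2,\tau_2)$ the pullback $\omega_2\circ \Delta^\mathrm{Ban}(\phi)$ is a state on $\Delta^\mathrm{Ban}(B_1,\tau_1)$: linearity is clear, positivity follows from the fact that $\Delta^\mathrm{Ban}(\phi)$ is a $\ast$-homomorphism (so $a^\ast a$ is mapped to $\Delta^\mathrm{Ban}(\phi)(a)^\ast\Delta^\mathrm{Ban}(\phi)(a)$, on which $\omega_2$ is non-negative), the normalisation $\omega_2(\Delta^\mathrm{Ban}(\phi)(\oone_1))=\omega_2(\oone_2)=1$ holds because $\Delta^\mathrm{Ban}(\phi)$ is unital, and continuity is inherited by composition.

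Using this observation I would then estimate, for all $a\in \Delta^\mathrm{Ban}(B_1,\tau_1)$,
\begin{flalign*}
\big\Vert \Delta^\mathrm{Ban}(\phi)(a)\big\Vert_2
= \sup_{\omega_2\in \mathcal{F}_2} \sqrt{\omega_2\big(\Delta^\mathrm{Ban}(\phi)(a)^\ast\,\Delta^\mathrm{Ban}(\phi)(a)\big)}
= \sup_{\omega_2\in \mathcal{F}_2} \sqrt{(\omega_2\circ \Delta^\mathrm{Ban}(\phi))(a^\ast a)}
\le \sup_{\omega_1\in \mathcal{F}_1}\sqrt{\omega_1(a^\ast a)} = \Vert a\Vert_1~,
\end{flalign*}
where the inequality uses that the set of pullback states $\{\omega_2\circ \Delta^\mathrm{Ban}(\phi) : \omega_2\in \mathcal{F}_2\}$ is a subset of $\mathcal{F}_1$. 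This shows that $\Delta^\mathrm{Ban}(\phi)$ is contractive with respect to the minimal regular norms.

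Once this boundedness estimate is in place, the standard extension argument finishes the proof: since $\Delta^\mathrm{Ban}(B_1,\tau_1)$ is dense in $\CCR(B_1,\tau_1)$ and $\CCR(B_2,\tau_2)$ is complete with respect to the minimal regular norm, the bounded linear transformation theorem provides a unique continuous linear extension $\CCR(\phi):\CCR(B_1,\tau_1)\to \CCR(B_2,\tau_2)$ of $\Delta^\mathrm{Ban}(\phi)$; the $\ast$-algebra structure is preserved by continuity (since multiplication and involution are continuous in any $C^\ast$-algebra). The only genuinely non-routine point is the verification that pullbacks of states along $\Delta^\mathrm{Ban}(\phi)$ are again states, which is what makes the contractivity of $\Delta^\mathrm{Ban}(\phi)$ with respect to the minimal regular norms possible; everything else is a direct application of general functional-analytic machinery.
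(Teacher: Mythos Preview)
Your proof is correct and follows essentially the same approach as the paper: both establish the contractivity estimate $\Vert \Delta^\mathrm{Ban}(\phi)(a)\Vert_2 \le \Vert a\Vert_1$ by using that the pullback $\omega_2\circ \Delta^\mathrm{Ban}(\phi)$ of any state $\omega_2\in\mathcal{F}_2$ lies in $\mathcal{F}_1$, and then invoke the standard extension theorem. Your version is slightly more detailed in spelling out why the pullback is again a state and why the extension is a $\ast$-homomorphism, but the argument is the same.
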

\begin{proof}
We have to prove that there exists $C\in \bbR$, such that
 $\Vert \Delta^\mathrm{Ban}(\phi) (a)\Vert_2 \leq C\,\Vert a\Vert_1$, for all $a\in \Delta^\mathrm{Ban}(B_1,\tau_1)$.
 The existence and uniqueness of a continuous extension then follows by standard extension theorems.
We obtain by a straightforward calculation
 \begin{flalign}
 \Vert \Delta^\mathrm{Ban}(\phi) (a)\Vert_2  
 =\sup_{\omega\in \mathcal{F}_2} \sqrt{\omega\big(\Delta^\mathrm{Ban}(\phi)(a^\ast\,a)\big)}
 \leq \sup_{\omega^\prime \in \mathcal{F}_1} \sqrt{\omega^\prime(a^\ast\,a)} = \Vert a\Vert_1~,
 \end{flalign}
 where in the second step we have used that $\omega \circ \Delta^\mathrm{Ban}(\phi) \in \mathcal{F}_1$. Hence, $C=1$.
\end{proof}
Let us denote by $\CastAlg$ the category whose objects are unital $C^\ast$-algebras
and whose morphisms are unital $C^\ast$-algebra homomorphisms (not necessarily injective). 
The first main result of this appendix is summarized in the following
\begin{theo}\label{app:CCRfunc1}
$\CCR : \PAG \to \CastAlg$ is a covariant functor.
\end{theo}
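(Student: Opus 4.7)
My plan is to verify the three data required for a covariant functor: (i) the object assignment $(B,\tau) \mapsto \CCR(B,\tau)$ lands in $\CastAlg$, (ii) the morphism assignment $\phi \mapsto \CCR(\phi)$ lands in $\CastAlg$, and (iii) the identity and composition axioms hold. Point (i) is already settled: by Definition \ref{defi:CCRalg}, $\CCR(B,\tau)$ is a unital $C^\ast$-algebra. For (ii), the candidate morphism $\CCR(\phi):\CCR(B_1,\tau_1)\to\CCR(B_2,\tau_2)$ has been produced in Proposition \ref{propo:morext} as the unique continuous extension of $\Delta^\mathrm{Ban}(\phi)$ (equivalently, of $\Delta(\phi)$), and its proof also shows the operator norm bound $\Vert\CCR(\phi)\Vert\le 1$.

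The one genuine point to check is therefore that $\CCR(\phi)$ is a unital $\ast$-algebra homomorphism. For this I would invoke a standard density argument: on the dense $\ast$-subalgebra $\Delta(B_1,\tau_1)\subseteq\CCR(B_1,\tau_1)$, the map $\CCR(\phi)$ coincides with $\Delta(\phi)$, which is already known to be a unital $\ast$-algebra homomorphism. Since the algebraic operations $\cdot$, ${}^\ast$ and the unit $\oone$ are continuous in any $C^\ast$-algebra, and since $\CCR(\phi)$ is itself continuous, the identities $\CCR(\phi)(a\,b)=\CCR(\phi)(a)\,\CCR(\phi)(b)$, $\CCR(\phi)(a^\ast)=\CCR(\phi)(a)^\ast$ and $\CCR(\phi)(\oone_1)=\oone_2$ extend from $\Delta(B_1,\tau_1)$ to all of $\CCR(B_1,\tau_1)$. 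Thus $\CCR(\phi)$ is a morphism in $\CastAlg$.

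For (iii), both identity preservation and composition are handled by the uniqueness clause of Proposition \ref{propo:morext}. For the identity morphism, $\Delta(\id_{(B,\tau)})=\id_{\Delta(B,\tau)}$ extends continuously to $\id_{\CCR(B,\tau)}$, and by uniqueness of the continuous extension this must equal $\CCR(\id_{(B,\tau)})$. For composition, given $\phi_1:(B_1,\tau_1)\to(B_2,\tau_2)$ and $\phi_2:(B_2,\tau_2)\to(B_3,\tau_3)$, both $\CCR(\phi_2\circ\phi_1)$ and $\CCR(\phi_2)\circ\CCR(\phi_1)$ are continuous maps $\CCR(B_1,\tau_1)\to\CCR(B_3,\tau_3)$ which agree on the dense subalgebra $\Delta(B_1,\tau_1)$ (there they reduce to the already-established equality $\Delta(\phi_2\circ\phi_1)=\Delta(\phi_2)\circ\Delta(\phi_1)$), hence they coincide everywhere.

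I do not anticipate a serious obstacle, since Manuceau et al.\ and Proposition \ref{propo:morext} have already carried out the analytic work (the existence of a faithful state, the well-posedness of the minimal regular norm, and the contractivity of $\CCR(\phi)$). The only point that requires mild care is ensuring that the algebraic identities really do pass to the completion, for which it is essential that $\CCR(B_1,\tau_1)$ is a $C^\ast$-algebra (so that $\cdot$ and ${}^\ast$ are jointly continuous, and multiplication is separately continuous in each variable), together with the density of $\Delta(B_1,\tau_1)$ — both of which are in place. A brief final remark in the proof should note that $\CCR$ restricts to a covariant functor $\CCR:\PAG^\mathrm{inj}\to\CastAlg^\mathrm{inj}$, since for injective $\phi$ the map $\Delta^\mathrm{Ban}(\phi)$ is isometric, and its continuous extension to $\CCR$-completions is then also isometric and hence injective; this will be used later when combining with $\PS^0$ to obtain $\QFT^0_{\widehat\Xi}$.
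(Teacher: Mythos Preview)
Your proof of the theorem itself is correct and follows the paper's approach: the paper gives no explicit proof here, relying on exactly the ingredients you spell out (Definition~\ref{defi:CCRalg} for objects, Proposition~\ref{propo:morext} for morphisms, and the already-established functoriality of $\Delta$ together with uniqueness of continuous extensions for the functor axioms).

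However, your closing remark about the restriction to $\PAG^{\mathrm{inj}}$ contains a gap. You assert that because $\Delta^{\mathrm{Ban}}(\phi)$ is isometric for the $\Vert\cdot\Vert^{\mathrm{Ban}}$-norm, its continuous extension to the $\CCR$-completions is automatically isometric for the minimal regular norm. This does not follow: the two norms are different, and an isometry in the Banach norm need not be an isometry in the $C^\ast$-norm $\Vert a\Vert=\sup_{\omega\in\mathcal{F}}\sqrt{\omega(a^\ast a)}$. The paper treats this separately as Theorem~\ref{app:CCRfunc2}, and the key step (Corollary~\ref{cor:injectivealgmorphism}) requires showing that every state on $\Delta^{\mathrm{Ban}}(B_1,\tau_1)$ arises as the pullback $\omega\circ\Delta^{\mathrm{Ban}}(\phi)$ of some state on $\Delta^{\mathrm{Ban}}(B_2,\tau_2)$. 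This state extension is established via the positive-cone Hahn--Banach theorem, not by the Banach-norm isometry alone. You should either drop the remark or defer it to the subsequent theorem with the correct argument.
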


It remains to show that $\CCR$ restricts to a covariant functor
$\CCR: \PAG^\mathrm{inj}\to \CastAlg^\mathrm{inj}$, where
$\CastAlg^\mathrm{inj}$ is the subcategory of $\CastAlg$ where all morphisms are injective.
Notice that for a morphism $\phi:(B_1,\tau_1)\to (B_2,\tau_2)$ in $\PAG^\mathrm{inj}$ the morphism
 $\CCR(\phi) : \CCR(B_1,\tau_1)\to \CCR(B_2,\tau_2)$ would be an isometry (in particular injective)
if we could prove that for any $\omega^\prime \in \mathcal{F}_1$ there exists
a $\omega\in \mathcal{F}_2$, such that $\omega\circ \Delta^\mathrm{Ban}(\phi) = \omega^\prime$.
Due to Proposition \ref{propo:extensionstate} it is sufficient to prove that for any 
normalized positive linear functional $\omega^\prime$ on $\Delta(B_1,\tau_1)$ there
 exists a normalized positive linear functional $\omega$
on $\Delta(B_2,\tau_2)$, such that $\omega\circ \Delta(\phi) = \omega^\prime$. On the image
$\Delta(\phi)[\Delta(B_1,\tau_1)]\subseteq \Delta(B_2,\tau_2)$ we can invert $\Delta(\phi)$ since it is injective
and hence arrive at the following extension problem: Does there exist a positive linear functional
$\omega:\Delta(B_2,\tau_2)\to \bbC$ extending
 $\omega^\prime\circ \Delta(\phi)^{-1}: \Delta(\phi)[\Delta(B_1,\tau_1)] \to \bbC$?
 Indeed, such an extension can be found by applying the positive-cone version of the Hahn-Banach Theorem,
 see e.g.~\cite[Theorem 2.6.2]{Edwards}.
 \begin{propo}
 Let $\phi:(B_1,\tau_1)\to (B_2,\tau_2)$ be a morphism in $\PAG^\mathrm{inj}$.
 Then there exists for any positive linear functional $\tilde{\omega} : \Delta(\phi)[\Delta(B_1,\tau_1)] \to \bbC$ 
 an extension $\omega: \Delta(B_2,\tau_2)\to \bbC$ that is a positive linear functional on $\Delta(B_2,\tau_2)$.
 \end{propo}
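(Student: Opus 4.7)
The plan is to apply the positive-cone version of the Hahn-Banach theorem (Edwards, Theorem~2.6.2) to the restriction of $\tilde\omega$ to the self-adjoint part of $\Delta(\phi)[\Delta(B_1,\tau_1)]$. First I would observe that any positive linear functional on a unital $\ast$-algebra is automatically hermitian: testing the inequality $\tilde\omega\bigl((a+\lambda\oone)^*(a+\lambda\oone)\bigr)\geq 0$ for all $\lambda\in\bbC$ forces $\tilde\omega(a^*)=\overline{\tilde\omega(a)}$ (the degenerate case $\tilde\omega(\oone)=0$ makes $\tilde\omega$ vanish identically by Cauchy-Schwarz). Consequently $\tilde\omega$ restricts to an $\bbR$-valued linear functional on the self-adjoint part $W$ of $\Delta(\phi)[\Delta(B_1,\tau_1)]$, which is non-negative on $W\cap C$, where $C\subseteq V$ denotes the positive cone of finite sums $\sum_i c_i^*c_i$ inside the self-adjoint part $V$ of $\Delta(B_2,\tau_2)$.

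The crucial step is to verify the majorization hypothesis of Edwards' theorem: for every $v\in V$ there must exist $w\in W$ with $w-v\in C$. Since $W$ contains all real multiples of $\oone$, it is enough to produce $M>0$ with $M\oone-v\in C$. For the building blocks $v=\alpha W_2(b)+\overline{\alpha}W_2(-b)$, writing $\alpha=|\alpha|e^{i\theta}$, a direct calculation using the Weyl relations~(\ref{eqn:Weylrelations}) together with $W_2(b)^*=W_2(-b)$ yields
\begin{flalign*}
2|\alpha|\,\oone-\bigl(\alpha W_2(b)+\overline{\alpha}W_2(-b)\bigr)\;=\;|\alpha|\,\bigl(W_2(b)-e^{-i\theta}\oone\bigr)^*\bigl(W_2(b)-e^{-i\theta}\oone\bigr)\;\in\;C~.
\end{flalign*}
Every self-adjoint element of $\Delta(B_2,\tau_2)$ decomposes as a finite sum of such blocks, obtained by pairing up the indices $b$ and $-b$ in the Weyl expansion, plus a real multiple of $\oone$ and analogous contributions from any $2$-torsion indices satisfying $2b=0$ (which are handled by the same type of computation). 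Summing the individual estimates produces the required majorant $M\oone-v\in C$.

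Edwards' theorem then furnishes an $\bbR$-linear extension $\omega:V\to\bbR$ of $\tilde\omega|_W$ with $\omega(C)\subseteq[0,\infty)$. I would complete the construction by $\bbC$-linearly extending via $\omega(v_1+iv_2):=\omega(v_1)+i\omega(v_2)$ for $v_1,v_2\in V$, yielding a hermitian $\bbC$-linear functional on $\Delta(B_2,\tau_2)=V\oplus iV$. Since $a^*a\in C\subseteq V$ for every $a\in\Delta(B_2,\tau_2)$, this $\omega$ is positive, and since $\tilde\omega$ is itself hermitian, its complex-linearization from $W$ reproduces $\tilde\omega$ on $W\oplus iW=\Delta(\phi)[\Delta(B_1,\tau_1)]$, so $\omega$ is the sought extension. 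I expect the main obstacle to be the majorization estimate, specifically the careful reduction of an arbitrary self-adjoint element to a finite sum of symmetric pairs $\{b,-b\}$; but this is purely linear-algebra bookkeeping once the basic Weyl computation displayed above is in place.
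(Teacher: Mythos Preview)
Your proposal is correct and follows essentially the same route as the paper: restrict to the real subspace of hermitian elements, verify the majorization hypothesis of Edwards' positive-cone Hahn--Banach theorem by bounding each symmetric pair $\alpha W_2(b)+\overline{\alpha}W_2(-b)$ by a multiple of $\oone$, then complexify. The only cosmetic difference is the explicit square you compute, which yields the majorant $2|\alpha|\,\oone$, whereas the paper uses $a=\oone-\alpha W_2(b)$ and obtains $(1+|\alpha|^2)\,\oone$; both work equally well.
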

 \begin{proof}
 Let us denote by $H:=\{a\in \Delta(B_2,\tau_2) : a^\ast =a\} $ and $\tilde{H} := 
 \{a\in \Delta(\phi)[\Delta(B_1,\tau_1)]  : a^\ast =a\}$ the $\bbR$-vector spaces of hermitian elements.
 Notice that $\oone_2\in \tilde{H}\subseteq H$. The given positive linear functional $\tilde\omega$
 restricts to a positive $\bbR$-linear functional (denoted by the same symbol) $\tilde \omega:\tilde H\to \bbR$.
 By \cite[Theorem 2.6.2]{Edwards} we can extend $\tilde\omega$ to a positive linear functional
 $\omega: H\to \bbR$, provided that for each element $h\in H$ there exists at least one $\tilde h\in \tilde H$, such
 that $\tilde h -h$ is in the positive cone $K$ \footnote{
The positive cone here is the subset $K\subset H$ consisting of finite sums of elements
 $\beta\,a^\ast a$, with $\beta >0$ and $a\in \Delta(B_2,\tau_2)$.
}. This condition is satisfied for the following reason: Any $h\in H$ can be expressed
 as a finite sum of the basic hermitian elements $h_{\alpha,b}:= 
 \alpha \,W_2(b) + \overline{\alpha}\,W_2(-b)$, with $\alpha\in \bbC$, $b\in B_2$ and $\overline{\cdot}$ 
 denotes complex conjugation.
Hence, it is sufficient to prove that for any $\alpha\in \bbC$ and $b\in B$ there exists $\tilde{h}\in \tilde H$, 
such that $\tilde h - h_{\alpha,b} \in K$. Defining $a:= \oone_2 - \alpha\, W_2(b)$ we find
$a^\ast a = (1+\overline{\alpha}\alpha)\,\oone_2 - h_{\alpha,b}$ and thus $\tilde h - h_{\alpha,b} \in K$
for $\tilde h = (1 + \overline{\alpha}\alpha)\,\oone_2 \in\tilde{H}$.

The positive linear functional $\omega: H\to \bbR $ which is obtained by this extension procedure is further extended 
to $\Delta(B_2,\tau_2)$ as follows: For any $a\in \Delta(B_2,\tau_2)$ we define the real and imaginary part
by $a_R := (a + a^\ast)/2$ and $a_I := (a - a^\ast)/2i$. Notice that $a_R,a_I\in H $. We then extend $\omega$
to a $\bbC$-linear map on all of $\Delta(B_2,\tau_2)$ by defining $\omega(a) := \omega(a_R) + i\, \omega(a_I)$.
It is easy to see that this is an extension of $\tilde\omega$, which completes the proof.
 \end{proof}
 
\begin{cor}\label{cor:injectivealgmorphism}
For any morphism $\phi: (B_1,\tau_1)\to (B_2,\tau_2)$ in $\PAG^\mathrm{inj}$ the map
$\CCR(\phi): \CCR(B_1,\tau_1)\to \CCR(B_2,\tau_2)$ of Proposition \ref{propo:morext} is an isometry.
In particular, $\CCR(\phi)$ is an injective unital $C^\ast$-algebra homomorphism.
\end{cor}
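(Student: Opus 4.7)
The plan is to prove isometry by combining Proposition \ref{propo:morext} (which already gives $\Vert \CCR(\phi)(a)\Vert_2 \leq \Vert a\Vert_1$) with the state-extension proposition proved immediately above. Once isometry is established, injectivity is automatic since an isometry has trivial kernel.

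First, I would reduce the claim to a statement about $\Delta(B_1,\tau_1)$. Since $\Delta(B_1,\tau_1)$ is dense in $\CCR(B_1,\tau_1)$ and $\CCR(\phi)$ is continuous by Proposition \ref{propo:morext}, it suffices to show $\Vert \Delta(\phi)(a)\Vert_2 = \Vert a\Vert_1$ for every $a\in \Delta(B_1,\tau_1)$. The inequality ``$\leq$'' is exactly the content of the proof of Proposition \ref{propo:morext}, where one uses that $\omega\circ \Delta^{\mathrm{Ban}}(\phi)\in\mathcal{F}_1$ for every $\omega\in\mathcal{F}_2$. So the work is in the reverse inequality ``$\geq$''.

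For ``$\geq$'', given any state $\omega'\in\mathcal{F}_1$, I would construct a state $\omega\in\mathcal{F}_2$ with $\omega\circ\Delta^{\mathrm{Ban}}(\phi)=\omega'$. Restricting $\omega'$ to $\Delta(B_1,\tau_1)$ gives a positive linear functional; since $\phi$ is injective in $\PAG^{\mathrm{inj}}$, so is $\Delta(\phi)$ on $\Delta(B_1,\tau_1)$, and we can pull $\omega'$ over to a positive linear functional $\tilde{\omega}:=\omega'\circ \Delta(\phi)^{-1}$ defined on the image $\Delta(\phi)[\Delta(B_1,\tau_1)]\subseteq\Delta(B_2,\tau_2)$. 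By the Hahn-Banach extension result proved just before the corollary, $\tilde{\omega}$ extends to a positive linear functional on all of $\Delta(B_2,\tau_2)$, which by Proposition \ref{propo:extensionstate} then extends continuously to $\Delta^{\mathrm{Ban}}(B_2,\tau_2)$. The normalization $\omega(\oone_2)=\tilde{\omega}(\oone_2)=\omega'(\oone_1)=1$ holds by construction since $\Delta(\phi)$ is unital, so $\omega\in\mathcal{F}_2$.

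With this $\omega$ in hand I get $\omega'(a^\ast a)=\omega\bigl(\Delta^{\mathrm{Ban}}(\phi)(a^\ast a)\bigr)=\omega\bigl(\Delta^{\mathrm{Ban}}(\phi)(a)^\ast\,\Delta^{\mathrm{Ban}}(\phi)(a)\bigr)$, so taking the supremum over $\omega'\in\mathcal{F}_1$ yields $\Vert a\Vert_1\leq\Vert\Delta^{\mathrm{Ban}}(\phi)(a)\Vert_2$. Together with the opposite inequality this gives isometry on the dense subalgebra $\Delta(B_1,\tau_1)$, hence on $\CCR(B_1,\tau_1)$ by continuity. An isometric $\ast$-homomorphism between $C^\ast$-algebras is automatically injective, completing the proof. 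The main obstacle is really just the bookkeeping of the state extension across three algebras ($\Delta$, $\Delta^{\mathrm{Ban}}$, $\CCR$); the injectivity of $\Delta(\phi)$ on its image and the Hahn-Banach extension step do all the heavy lifting.
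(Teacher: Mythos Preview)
Your proof is correct and follows essentially the same approach as the paper: the paragraph preceding the Hahn--Banach extension proposition already outlines exactly this argument (pull back a state via the injective $\Delta(\phi)$, extend by Hahn--Banach, then by Proposition~\ref{propo:extensionstate}), and the corollary is simply stated as its consequence. Your write-up just makes the supremum step and the density reduction explicit.
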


The second main result of this appendix is summarized in the following
\begin{theo}\label{app:CCRfunc2}
$\CCR : \PAG^\mathrm{inj} \to \CastAlg^\mathrm{inj}$ is a covariant functor.
\end{theo}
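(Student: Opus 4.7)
The proof will essentially be a short verification that assembles pieces already in place. The strategy is to observe that $\PAG^\mathrm{inj}$ and $\CastAlg^\mathrm{inj}$ are subcategories of $\PAG$ and $\CastAlg$ respectively, sharing the same objects as their parents and differing only in the requirement that morphisms be injective. Hence one can try to take the already-constructed functor $\CCR:\PAG\to\CastAlg$ of Theorem \ref{app:CCRfunc1} and simply restrict its domain to $\PAG^\mathrm{inj}$.

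With this plan, the verification breaks into three almost immediate points. First, any object $(B,\tau)$ of $\PAG^\mathrm{inj}$ is also an object of $\PAG$, so $\CCR(B,\tau)$ is already a well-defined object of $\CastAlg$; since the objects of $\CastAlg^\mathrm{inj}$ are the same as those of $\CastAlg$, the image lands in $\CastAlg^\mathrm{inj}$. Second, for a morphism $\phi:(B_1,\tau_1)\to(B_2,\tau_2)$ in $\PAG^\mathrm{inj}$, one must check that $\CCR(\phi)$ is a morphism in $\CastAlg^\mathrm{inj}$, i.e.\ an injective unital $C^\ast$-algebra homomorphism; this is exactly the content of Corollary \ref{cor:injectivealgmorphism}. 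Third, functoriality (preservation of identities and of composition) is inherited directly from Theorem \ref{app:CCRfunc1}, because the composition law and identity morphisms in the subcategories $\PAG^\mathrm{inj}$ and $\CastAlg^\mathrm{inj}$ are the same as in $\PAG$ and $\CastAlg$, and the composition of two injective morphisms is again injective, so $\CCR(\phi_2\circ\phi_1)=\CCR(\phi_2)\circ\CCR(\phi_1)$ and $\CCR(\id_{(B,\tau)})=\id_{\CCR(B,\tau)}$ transfer to the subcategory setting without modification.

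There is essentially no obstacle here: all of the substantive work has already been done upstream, in particular the Hahn--Banach-type extension argument that underlies Corollary \ref{cor:injectivealgmorphism}. The only thing to be careful about is the tacit check that the restricted functor still lands in $\CastAlg^\mathrm{inj}$ on morphisms, which is precisely the isometry statement of that corollary. Accordingly, the proof can be written in just a few lines, citing Theorem \ref{app:CCRfunc1} for the functorial properties and Corollary \ref{cor:injectivealgmorphism} for the fact that injective $\PAG$-morphisms are sent to injective $\CastAlg$-morphisms.
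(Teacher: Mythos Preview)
Your proposal is correct and matches the paper's approach exactly: the paper states this theorem as a summary of what precedes it, without giving a separate proof, since it follows immediately from Theorem~\ref{app:CCRfunc1} together with Corollary~\ref{cor:injectivealgmorphism}.
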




\begin{thebibliography}{10}
 
 \bibitem[AS80]{Ashtekar:1980ki} 
  A.~Ashtekar and A.~Sen,
  ``On The Role Of Space-time Topology In Quantum Phenomena: Superselection Of Charge And Emergence Of Nontrivial Vacua,''
  J.\ Math.\ Phys.\  {\bf 21}, 526 (1980).
 

\bibitem[Ati57]{Atiyah}
M.~F.~Atiyah,
``Complex analytic connections in fibre bundles,''
 Trans.\ Amer.\ Math.\ Soc.\ {\bf 85}, 181–207 (1957).
 
   \bibitem[Bau09]{Baum} 
  H.~Baum,
  ``Eichfeldtheorie: Eine Einf\"uhrung in die Differentialgeometrie auf Faserb\"undeln,''
  Berlin, Germany: Springer (2009).
  
\bibitem[BDS14]{Benini:2012vi} 
  M.~Benini, C.~Dappiaggi and A.~Schenkel,
  ``Quantum field theory on affine bundles,''
  Annales Henri Poincare {\bf 15}, 171 (2014)
  [arXiv:1210.3457 [math-ph]].
  
  
\bibitem[BDS13]{Benini:2013tra} 
  M.~Benini, C.~Dappiaggi and A.~Schenkel,
  ``Quantized Abelian principal connections on Lorentzian manifolds,''
    {\it to appear in Communications in Mathematical Physics} 
  [arXiv:1303.2515 [math-ph]].
  
 
\bibitem[BFV03]{Brunetti:2001dx}
  R.~Brunetti, K.~Fredenhagen and R.~Verch,
  ``The Generally covariant locality principle: A New paradigm for local quantum field theory,''
  Commun.\ Math.\ Phys.\  {\bf 237}, 31 (2003) 
  [math-ph/0112041].
  
   \bibitem[Bar13]{Baernew}
    C.~B\"ar,
    ``Green-hyperbolic operators on globally hyperbolic spacetimes,''
    arXiv:1310.0738 [math-ph].
  
\bibitem[BG12]{BarCCR}
C.~B\"ar and N.~Ginoux,
``CCR- versus CAR-quantization on curved spacetimes,'' 
in: Finster, F.; M\"uller, O.; Nardmann, M.; Tolksdorf, J.; Zeidler, E. (Eds.): 
Quantum Field Theory and Gravity: 183 -- 206 (2012).

 
\bibitem[BGP07]{Bar:2007zz} 
  C.~B\"ar, N.~Ginoux and F.~Pf\"affle,
  ``Wave equations on Lorenzian manifolds and quantization,''
  Z\"urich, Switzerland: Eur. Math. Soc. (2007)
  [arXiv:0806.1036 [math.DG]].
  

\bibitem[BHR04]{degenerateCCR}
E.~Binz, R.~Honegger and A.~Rieckers,
``Construction and uniqueness of the $C^\ast$-Weyl algebra over a general pre-symplectic space,''
J.~Math.~Phys.~{\bf 45}, 2885 (2004).
  
\bibitem[BR96]{Bratteli:1996xq} 
  O.~Bratteli and D.~W.~Robinson,
  ``Operator algebras and quantum statistical mechanics. Vol. 2: Equilibrium states. Models in quantum statistical mechanics,''
  Berlin, Germany: Springer (1996).
 


\bibitem[Dim92]{Dimock} 
  J.~Dimock,
  ``Quantized electromagnetic field on a manifold,''
  Rev.\ Math.\ Phys.\  {\bf 4}, 223 (1992).


\bibitem[DL12]{DL}
  C.~Dappiaggi and B.~Lang,
  ``Quantization of Maxwell's equations on curved backgrounds and general local covariance,''
  Lett.\ Math.\ Phys.\  {\bf 101}, 265 (2012)
  [arXiv:1104.1374 [gr-qc]].
  
  \bibitem[DS13]{Dappiaggi:2011cj} 
  C.~Dappiaggi and D.~Siemssen,
  ``Hadamard States for the Vector Potential on Asymptotically Flat Spacetimes,''
  Rev.\  Math.\  Phys.\  {\bf 25}, 1350002 (2013)
  [arXiv:1106.5575 [gr-qc]].
  
  \bibitem[Edw65]{Edwards}
  R.~E.~Edwards,
  ``Functional analysis. Theory and applications,''
  Holt, Rinehart and Winston, New York-Toronto-London (1965). 
  
  \bibitem[FH13]{Fewster:2012bj} 
  C.~J.~Fewster and D.~S.~Hunt,
  ``Quantization of linearized gravity in cosmological vacuum spacetimes,''
  Rev.\ Math.\ Phys.\  {\bf 25}, 1330003 (2013)
  [arXiv:1203.0261 [math-ph]].
  
\bibitem[FS13]{Finster:2013fva}
  F.~Finster and A.~Strohmaier,
  ``Gupta-Bleuler Quantization of the Maxwell Field in Globally Hyperbolic Space-Times,''
  arXiv:1307.1632 [math-ph].  
  
\bibitem[HK63]{Haag:1963dh}
  R.~Haag and D.~Kastler,
``An Algebraic approach to quantum field theory,''
J.\ Math.\ Phys.\  {\bf 5}, 848 (1964).
  
\bibitem[HS13]{Hack:2012dm} 
  T.~-P.~Hack and A.~Schenkel,
  ``Linear bosonic and fermionic quantum gauge theories on curved spacetimes,''
Gen.\ Rel.\ Grav.\  {\bf 45}, 877 (2013)
  [arXiv:1205.3484 [math-ph]].
  
  \bibitem[KN96]{Kobayashi}
S.~Kobayashi and K.~Nomizu,
``Foundations of differential geometry~Vol.~I,''
Wiley Classics Library, A Wiley-Interscience Publication, 
John Wiley \& Sons, Inc., New York (1996).
  
\bibitem[LRT78]{Leyland:1978iv}
  P.~Leyland, J.~Roberts and D.~Testard,
 ``Duality For Quantum Free Fields,''
  CPT-78/P-1016.
  
  \bibitem[M$^+$73]{Manuceau:1973yn} 
  J.~Manuceau, M.~Sirugue, D.~Testard and A.~Verbeure,
  ``The smallest $C^\ast$-algebra for canonical commutations relations,''
  Commun.\ Math.\ Phys.\  {\bf 32}, 231 (1973).

  \bibitem[Pfe09]{Pfenning:2009nx} 
  M.~J.~Pfenning,
  ``Quantization of the Maxwell field in curved spacetimes of arbitrary dimension,''
  Class.\ Quant.\ Grav.\  {\bf 26}, 135017 (2009)
  [arXiv:0902.4887 [math-ph]].
  
  
    \bibitem[San13]{Sanders:2012ac} 
      K.~Sanders,
      ``A note on spacelike and timelike compactness,''
      Class.\ Quant.\ Grav.\  {\bf 30}, 115014 (2013)
      [arXiv:1211.2469 [math-ph]].
      
      
  \bibitem[SDH12]{SDH12} 
  K.~Sanders, C.~Dappiaggi and T.~-P.~Hack,
  ``Electromagnetism, local covariance, the Aharonov-Bohm effect and Gauss' law,''
  {\it to appear in Communications in Mathematical Physics} 
  [arXiv:1211.6420 [math-ph]].
  

 

  
\end{thebibliography}
\end{document}